\documentclass[11pt]{article}
\pagestyle{plain}
\oddsidemargin 0.25in
\evensidemargin 0in
\textheight 9in
\textwidth 6in
\topmargin 0in
\headheight 0in
\headsep 0in
\usepackage{amsmath,amssymb,amsthm,bm}
\usepackage{verbatim}
\usepackage{graphicx}
\usepackage{color}
\usepackage[round]{natbib}
\usepackage{pstricks, pst-node, pst-plot}

\newcommand{\eps}{\varepsilon}
\newcommand{\R}{\mathbb{R}}

\newcommand{\mH}{\mathcal{H}}

\renewcommand{\P}{{\rm P}}
\newcommand{\E}{{\rm E}}
\newcommand{\var}{{\rm var}}

\newcommand{\FWER}{{\rm FWER}}
\newcommand{\FDR}{{\rm FDR}}
\newcommand{\Power}{{\rm Power}}
\newcommand{\Bon}{{\rm Bon}}
\newcommand{\BH}{{\rm BH}}
\newcommand{\argmax}{\operatornamewithlimits{argmax}}
\newcommand{\mad}{{\rm MAD}}
\newcommand{\med}{{\rm Med}}

\theoremstyle{plain}
\newtheorem{thm}{Theorem}[section]
\newtheorem{lemma}[thm]{Lemma}
\newtheorem{prop}[thm]{Proposition}

\theoremstyle{definition}
\newtheorem{defn}[thm]{Definition}
\newtheorem{proc}[thm]{Procedure}
\newtheorem{example}[thm]{Example}


\begin{document}

\title{Peak Detection as Multiple Testing}
\author{
Armin Schwartzman$^1$, Yulia Gavrilov$^1$, Robert J. Adler$^2$ \\[.5cm]
$^1$ Department of Biostatistics,
Harvard School of Public Health \\
$^2$ Department of Electrical Eng.,
Technion - Israel Institute of Technology
}
\date{\today}
\maketitle

\begin{abstract}
This paper considers the problem of detecting equal-shaped non-overlapping unimodal peaks in the presence of Gaussian ergodic stationary noise, where the number, location and heights of the peaks are unknown. A multiple testing approach is proposed in which, after kernel smoothing, the presence of a peak is tested at each observed local maximum. The procedure provides strong control of the family wise error rate and the false discovery rate asymptotically as both the signal-to-noise ratio (SNR) and the search space get large, where the search space may grow exponentially as a function of SNR. Simulations assuming a Gaussian peak shape and a Gaussian autocorrelation function show that desired error levels are achieved for relatively low SNR and are robust to partial peak overlap. Simulations also show that detection power is maximized when the smoothing bandwidth is close to the bandwidth of the signal peaks, akin to the well-known matched filter theorem in signal processing. The procedure is illustrated in an analysis of electrical recordings of neuronal cell activity.
\end{abstract}

\section{Introduction}

Peak detection is a common statistical problem in the analysis of high-throughput data. Examples include identification of binding sites on the genome \citep{MACS:2008}, DNA sequencing \citep{Li:2000}, identification of proteins in mass spectrometry \citep{Morris:2005,Harezlak:2008}, detection of action potentials in neuronal recordings \citep{Baccus:2002}, detection of heart beats in electrocardiograms \citep{Arzeno:2008}, and identification of signatures in galaxy spectra \citep{Brutti:2007}. A crucial step in the analysis of these data, both for dimension reduction and consequent inference, is the detection of an unknown number of signal peaks with a temporal or spatial structure in the presence of background noise. The main challenge in these problems is that both the number of peaks and their location are unknown. In addition, the data often consist of a single long sequence with no replicates.

While there are many peak detection algorithms in the scientific literature, they tend to be geared toward specific applications and their performance is often evaluated empirically. In particular, peak detection algorithms often require a threshold, but the choice of the threshold is ad-hoc and does not take into account the error inflation produced by multiple testing. Errors in peak detection can lead to erroneous conclusions in later steps of the data analysis. There is a need to approach peak detection from a formal statistical viewpoint. Our objective is to develop a general statistical procedure for identifying signal peaks in the presence of background noise with proven error control, while at the same time being easy to implement and efficient to run on large data sets.

In this paper, we consider the specific problem of detecting equal-shaped non-overlapping unimodal peaks in the presence of Gaussian stationary noise, where the number, locations and heights of the peaks are unknown. We have chosen this particular setting for its analytical tractibility, but we believe that it captures the essence of the general peak detection problem and can serve as a theoretical basis for future applications in the particular scientific disciplines.

The assumption of not knowing the number of peaks is key. If the number of peaks were known, then the unknown locations could be estimated solving a nonlinear least-squares problem \citep{OBrien:1994,Li:2000,Li:2004}. The main difficulty of this approach is that not knowing the number of peaks implies not knowing the number of location parameters to be estimated. As a consequence, the problem becomes akin to the model selection problem in regression \citep{Li:2000,Li:2004}. Alternative solutions using $L_1$ regularization include direct penalization of the estimated signal \citep{OBrien:1994} and a modification of the LASSO algorithm where the penalty is applied to the difference between consecutive coefficients to account for the ordered structure of the data \citep{Tibshirani:2005}.

Rather than an estimation problem, we view peak detection as a multiple testing problem, where, at each of a set of locations, a test is performed for whether the signal is nonzero at that location. Our approach can be viewed essentially as a search for peaks over the length of the data. The idea has been used elsewhere \citep{Yasui:2003,Morris:2005,Chumbley:2009} but not formally for multiple testing, and is motivated as follows. A full search for peaks would require testing every single observed point for significance. However, if the peaks are assumed unimodal and non-overlapping, dimensionality can be reduced dramatically by testing only at locations that resemble peaks, that is, local maxima of the observed sequence. In addition, it is known that signal-to-noise ratio (SNR) can be improved by local smoothing such as averaging over a local neighbourhood. Based on these ideas, our proposed algorithm consists of the following steps:
\begin{enumerate}
\item {\em Kernel smoothing}
\item {\em Candidate peaks}: find local maxima of the smoothed sequence.
\item {\em P-values}: compute a p-value at each local maximum, defined as the probability of peering the observed intensity of the local maximum or higher according to the distribution that would be expected if we only observed noise.
\item {\em Multiple testing}: use a multiple testing procedure to find a global threshold and declare significant all peaks exceeding that threshold.
\end{enumerate}
For Step 4, we consider two standard multiple testing procedures: Bonferroni and Benjamini-Hochberg (BH) \citep{Benjamini:1995}. Our peak detection algorithm has the advantage of being simple, easy to remember and efficient to implement for large data sets. At the same time, we believe it to be powerful and we show that it provides guaranteed global error control. As measures of global error we consider both family-wise error rate (FWER), to be controlled by the Bonferroni procedure, and false discovery rate (FDR) \citep{Benjamini:1995}, to be controlled by the BH procedure. For simplicity, we concentrate on positive signals and one-sided tests, but this is not crucial.

Our proofs of error control assume that the noise is a continuous ergodic stationary Gaussian process. This assumption permits a closed form formula for computing the p-values corresponding to local maxima of the observed process. The distribution of the height of a local maximum of a Gaussian process is not Gaussian but has a heavier tail, and its computation requires careful conditioning based on the calculus of Palm probabilities \citep{Cramer:1967,Adler:2010}. This is crucial to ensure that p-values are valid.

Another interesting and challenging aspect of the proof of error control is the fact that the number of tests, which is equal to the number of local maxima in a given interval, is a random quantity. Proofs of error control in the multiple testing literature usually assume that the number of tests is fixed. In our proofs, we overcome this difficulty using an asymptotic argument for large search space, so that the error behaves approximately as it would if the number of tests were equal to its expected value.

In the proofs, the asymptotics for large search space are combined with asymptotics for large SNR. The large SNR assumption helps solve the issue of identifiability of peaks, as it implies that each signal peak is represented by only one observed local maximum with probability tending to one. The asymptotic rates, however, are such that the search space is allowed to grow much faster than the SNR; exponentially faster. In this sense, we do not consider the large SNR assumption restrictive.

For concreteness, we conduct simulations assuming a Gaussian peak shape and a Gaussian autocorrelation function. Our simulations confirm that moderate values of SNR are enough to provide desired error levels. Our simulations also show that the performance is maintained under a substantial amount of overlap between neighboring peaks.

Defining detection power as the expected fraction of true peaks detected, we prove that the peak detection algorithm is consistent in the sense that its power tends to one under the above asymptotic conditions. We then use simulations to address the question of optimal bandwidth. In the above Gaussian autocorrelation model, we find that the detection power is maximized when the smoothing bandwidth is close to the bandwidth of the signal peaks, slightly larger when the noise is uncorrelated and getting smaller as the amount of autocorrelation in the noise increases. This result is similar to the well-known matched filter theorem in signal processing, which states that the SNR after smoothing is maximized when the smoothing kernel matches the shape and width of the signal \citep{Simon:1995,Pratt:1991}. Most noticeably, the optimal bandwidth for peak detection as multiple testing is different, in fact much larger, than the usual optimal bandwidth for nonparametric estimation. A distinction between our setting and the matched filter theorem is that in many applications of the latter in digital communications and radar screening, the time of testing is prespecified, while in our case it is random.

We illustrate our procedure with a data set of neural electrical recordings, where the objective is to detect action potentials representing cell activity \citep{Baccus:2002}. The data analysis takes advantage of the sparsity of the signal in order to estimate the parameters of the noise process.

The rest of the paper is organized as follows. Section \ref{s:theory} presents the theoretical results. Section \ref{s:sim} presents the simulation results. Section \ref{s:data} presents the data example. Section \ref{s:discussion} summarizes. Proofs are given in Section \ref{s:proofs}.
All the simulations and the data analysis were implemented in \texttt{R}.

\section{Theory}
\label{s:theory}
\subsection{The model}
\label{sec:model}
Consider the signal-plus-noise model
\begin{equation}
\label{eq:signal+noise}
y(t) = \mu(t) + z(t), \qquad t \in \R
\end{equation}
where $\mu(t)$ is the signal we wish to detect and $z(t) \in
\mathcal{C}^2$ is stationary ergodic zero-mean Gaussian noise. The
signal $\mu(t)$ is a (sparse) train of positive peaks of the form
\begin{equation}
\label{eq:mu}
\mu(t) = \sum_{j=-\infty}^\infty a_j h_b(t - \tau_j), \qquad
h_b(t) = \frac{1}{b} h\left(\frac{t}{b}\right)
\end{equation}
where $a_j, b > 0$. The peak shape $h(t) \ge 0$ is assumed
unimodal with mode at $t=0$ and no other critical points within its support 
$S = \{t: h(t) > 0\}$,  where $S$ is assumed to be compact and connected.
Assume that $h(t)$ has unit action $\int_{-\infty}^\infty h(t)\,dt =
1$. Let 
$$
w_\gamma(t) = \frac{1}{\gamma}w \left(\frac{t}{\gamma}\right), \qquad \gamma>0
$$
 be a unimodal kernel  with compact and connected support and
$\int_{-\infty}^\infty w_\gamma(t)\,dt = 1$. Define 
\begin{equation}
\label{eq:hgamma}
h_\gamma(t) = w_\gamma(t)*h_b(t)
\end{equation}
with support $S_\gamma =  \{t: h_\gamma(t) > 0\}$.
We assume that $h_\gamma(t)$ is unimodal with mode at some interior
point of $S$ not necessarily equal to 0, with no other critical
points. In addition we assume that $h_\gamma(t)$ is twice
differentiable in the interior of $S_\gamma$. 
Note that if, for example,  both $h(t)$ and $w(t)$ are truncated
Gaussian functions, then all the assumptions made on
$h_\gamma(t)$ above are valid.  

Finally, the smoothed signal and smoothed noise are defined as 
\begin{equation}
\label{eq:mu-gamma}
\mu_\gamma(t) = w_\gamma(t) * \mu(t) = \sum_{j=-\infty}^\infty a_j h_\gamma(t - \tau_j), \qquad z_\gamma(t) = w_\gamma(t) * z(t).
\end{equation}
We require that the supports $S_{j, \gamma} =
\{t:h_\gamma(t-\tau_j)>0\}$ do not overlap for all $j$.
In this sense the signal can be considered sparse.

\subsection{The procedure}
\label{sec:proc}
Suppose we observe $y(t)$ in the segment $T = [-L/2,L/2]$,
which contains $J_L$ peaks. The objective is to identify a set of 
locations $\tilde{\tau}_1,\ldots,\tilde{\tau}_{\tilde{m}}$ that are
close in location and in number to the true set of peak locations
$\tau_1,\ldots,\tau_{J_L}$, while controlling the probability of
obtaining false peaks. Consider the following procedure.

\begin{proc}
\label{alg:proc}
\hfill\par\noindent
\begin{enumerate}
\item {\em Kernel smoothing}:
Construct the process
\begin{equation}
\label{eq:conv}
x_\gamma(t) = w_\gamma(t) * y(t) =
\int_{-\infty}^\infty w_\gamma(t-s) y(s)\,ds, 
\end{equation}
where we ignore boundary effects at $\pm L/2$.
\item {\em Candidate peaks}:
Find all the local maxima of $x_\gamma(t)$ in $[-L/2,L/2]$, i.e. find
the set
\begin{equation}
\label{eq:T}
\tilde{T} = \left\{ t \in \left[-\frac{L}{2},\frac{L}{2}\right]: \quad
\dot{x}_{\gamma}(t) = \frac{dx_{\gamma}(t)}{dt} = 0, \quad
\ddot{x}_{\gamma}(t) = \frac{d^2 x_{\gamma}(t)}{dt^2} < 0 \right\}.
\end{equation}
\item {\em P-values}:
For each $t \in \tilde{T}$ calculate the p-value for testing the hypothesis
$$
\mH_{0}(t): \ \mu(t) = 0
\quad \text{vs.} \quad
\mH_{A}(t): \  \mu(t) > 0
$$
\item {\em Multiple testing}:
Apply a multiple testing procedure on the set of p-values and declare significant all peaks whose p-values are smaller than the corresponding threshold, which may depend on $\gamma$ and $L$.
\end{enumerate}
\end{proc}

The idea behind Procedure \ref{alg:proc} is very simple: smooth, find
local maxima, feed the list of local maxima into any standard multiple 
testing procedure. These steps are easy to remember and implement.
The first two steps are straightforward. The last two need careful
study, which we do next.

Step 3 is detailed in Section \ref{sec:pvalue} below. For Step 4, we use the Bonferroni procedure to control FWER and  the BH procedure  to control FDR. Let $\tilde{m}$ be the number of local maxima in $\tilde{T}$, i.e. the number of tested hypotheses. To apply the Bonferroni procedure at level $\alpha$ we compare each p-value to the  threshold $\alpha/\tilde{m}$ and reject all the hypotheses whose p-values are below the threshold.
To apply the BH procedure at level $\alpha$ we first order the p-values and then compare the ordered $i$-th p-value with $i\alpha/\tilde{m}$. Defining $k$ as the maximal index for which the ordered p-value is smaller than the corresponding threshold, we reject $k$ hypotheses with the $k$ smallest p-values. More details are given in the Sections below.

\subsection{P-values}
\label{sec:pvalue}
A crucial step in implementing any marginal multiple testing procedure is calculating p-values. P-values are always computed under the complete null hypothesis of no signal anywhere. For Step 3 of Procedure \ref{alg:proc}, we proceed as follows.

\begin{defn}
\label{defn:p-values}
Assume the model of Section \ref{sec:model} with $\mu(t) = 0, \forall t$, so that $x_\gamma(t) = z_\gamma(t)$, given by \eqref{eq:mu-gamma}. Let $ F_\gamma(u)$ denote the \emph{right} cumulative distribution function (cdf) of $z_\gamma(t)$ at the local maxima $t \in \tilde{T}$,
\begin{equation}
\label{eq:palm}
 F_\gamma(u) = \P\Big\{z_\gamma(t) > u ~\Big|~ t \in \tilde{T}\Big\}.
\end{equation}
Then the p-value of the observed $x_\gamma(t)$ at $t\in \tilde{T}$ is
\begin{equation}
\label{eq:p-value}
p_\gamma(t) =  F_\gamma[x_\gamma(t)], ~~~t\in\tilde{T}.
\end{equation}
\end{defn}

The probability in \eqref{eq:palm} follows the Palm distribution for maxima and we use its properties to compute an explicit formula for p-values. One needs to be careful while computing this probability to avoid bias because of two reasons. First, although we use the usual symbol for conditioning, this is not the usual conditioning event. Computing this probability by usual conditioning gives the Gaussian distribution. However, computing the probabilities just at local maxima based on the Gaussian distribution will lead to the biased down p-values. Second, we are conditioning on an event of probability zero. See \citet[Ch. 6]{Adler:2010} for a more detailed discussion. The formula for evaluating the required distribution \eqref{eq:palm} is given by Proposition \ref{thm:p-values}
below.

\begin{prop}
\label{thm:p-values}
Assume the model of Section \ref{sec:model} and the procedure of Section \ref{sec:proc}. Under the complete null hypothesis $\mu(t) = 0, \forall t$, define the moments
\begin{equation}
\label{eq:moments}
\sigma^2_\gamma = \var[z_\gamma(t)], \qquad
\lambda_{2,\gamma} = \var[\dot{z}_\gamma(t)], \qquad
\lambda_{4,\gamma} = \var[\ddot{z}_\gamma(t)].
\end{equation}
Then, 
\begin{equation}
\label{eq:distr}
 F_\gamma(u) = 1 - \Phi\left(u \sqrt{\frac{\lambda_{4, \gamma}}{\Delta}}\right) + \sqrt{\frac{2\pi\lambda^2_{2, \gamma}}{\lambda_{4, \gamma}\sigma^2_\gamma}}\phi\left(\frac{u}{\sigma_\gamma}\right)\Phi\left( u \sqrt{\frac{\lambda^2_{2, \gamma}}{\Delta \sigma_\gamma^2}}\right),
\end{equation}
where
\begin{equation}
\label{eq:delta}
\Delta =  \sigma^2_\gamma\lambda_{4,\gamma} - \lambda_{2,\gamma}^2
\end{equation}
and $\phi(x)$, $\Phi(x)$, are the standard normal density and cdf,
respectively.
\end{prop}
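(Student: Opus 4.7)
My plan is to invoke the Kac--Rice/Palm formalism: by stationarity, the Palm probability in \eqref{eq:palm} is the ratio of the expected number of local maxima of $z_\gamma$ in a fixed interval with height above $u$ to the total expected number of local maxima, and both expectations have explicit Kac--Rice representations as integrals of the joint density $p_{z_\gamma,\dot z_\gamma,\ddot z_\gamma}$. This is the ``correct'' way to handle the Palm conditioning on the zero-probability event $\{\dot z_\gamma=0,\,\ddot z_\gamma<0\}$ flagged after Definition \ref{defn:p-values}; naive Gaussian conditioning on $\dot z_\gamma=0$ alone omits the essential $|\ddot z_\gamma|$ factor and would bias the p-values downward.

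The next step is to exploit the Gaussian structure. By stationarity, $(z_\gamma(t),\dot z_\gamma(t),\ddot z_\gamma(t))$ is mean-zero trivariate normal with $\dot z_\gamma(t)$ uncorrelated with $(z_\gamma(t),\ddot z_\gamma(t))$ and $\cov(z_\gamma,\ddot z_\gamma)=-\lambda_{2,\gamma}$, so the density factorises as $p_{\dot z_\gamma}(0)\,p_{z_\gamma,\ddot z_\gamma}(w,v)$; the $p_{\dot z_\gamma}(0)$ factor cancels between numerator and denominator, and the denominator collapses to the Rice intensity $(2\pi)^{-1}\sqrt{\lambda_{4,\gamma}/\lambda_{2,\gamma}}$. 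For the numerator I would complete the square in $v$ for fixed $w$, turning the inner integral over $v\in(-\infty,0)$ into $\E[|V|\,\mathbf{1}\{V<0\}]$ with $V\sim N(-\lambda_{2,\gamma}w/\sigma_\gamma^2,\,\Delta/\sigma_\gamma^2)$, a standard Gaussian moment equal to $-\mu\Phi(-\mu/\tau)+\tau\phi(\mu/\tau)$. This splits the remaining outer $w$-integral into a piece proportional to $\int_u^\infty w\,e^{-w^2/(2\sigma_\gamma^2)}\Phi(\lambda_{2,\gamma}w/(\sigma_\gamma\sqrt{\Delta}))\,dw$ and a piece proportional to $\int_u^\infty e^{-w^2/(2\sigma_\gamma^2)}\phi(\lambda_{2,\gamma}w/(\sigma_\gamma\sqrt{\Delta}))\,dw$.

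The last step is to evaluate and recombine. The second piece simplifies immediately because the two Gaussian exponents add to $-\lambda_{4,\gamma}w^2/(2\Delta)$, by the identity $\Delta+\lambda_{2,\gamma}^2=\sigma_\gamma^2\lambda_{4,\gamma}$, reducing it to a constant multiple of $1-\Phi(u\sqrt{\lambda_{4,\gamma}/\Delta})$. The first piece yields to integration by parts with antiderivative $-\sigma_\gamma^2 e^{-w^2/(2\sigma_\gamma^2)}$: the boundary term at $w=u$ produces the $\phi(u/\sigma_\gamma)\,\Phi(\lambda_{2,\gamma}u/(\sigma_\gamma\sqrt{\Delta}))$ contribution (after rewriting $e^{-u^2/(2\sigma_\gamma^2)}=\sqrt{2\pi}\,\phi(u/\sigma_\gamma)$), and the remainder is of exactly the same form as the second piece and likewise collapses to a multiple of $1-\Phi(u\sqrt{\lambda_{4,\gamma}/\Delta})$. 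The main obstacle I anticipate is the final bookkeeping: the two $[1-\Phi]$ coefficients must recombine, once more via $\lambda_{2,\gamma}^2+\Delta=\sigma_\gamma^2\lambda_{4,\gamma}$, to exactly one, which is the algebraic reason that \eqref{eq:distr} has its particular clean form and that the prefactor of $\phi(u/\sigma_\gamma)\,\Phi(\cdot)$ takes the stated value $\sqrt{2\pi\lambda_{2,\gamma}^2/(\lambda_{4,\gamma}\sigma_\gamma^2)}$.
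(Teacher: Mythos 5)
Your proposal is correct and follows essentially the same route as the paper, which reduces Proposition \ref{thm:p-values} to the ratio of expected counts in Lemma \ref{lemma:m_u} and defers the explicit Gaussian computation to Cram\'er and Leadbetter; your plan simply carries out that Kac--Rice/Palm calculation in detail (the factorization via the independence of $\dot z_\gamma$ from $(z_\gamma,\ddot z_\gamma)$, the truncated-Gaussian first moment for the inner integral, and the integration by parts with the identity $\Delta+\lambda_{2,\gamma}^2=\sigma_\gamma^2\lambda_{4,\gamma}$ all check out and recombine to \eqref{eq:distr}).
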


This result was proven by \citet[Ch. 10]{Cramer:1967}, using the well known Kac-Rice formula \citep{Rice:1945}, \citep[Ch. 11]{Adler:2007}. The proof, which we omit here, is based on other intermediate results that we will also need later, and so we state them in the next lemma. 

\begin{lemma}
\label{lemma:m_u}
Let $z(t) \in \mathcal{C}^2$ be an ergodic stationary zero-mean Gaussian process with spectral moments $\sigma^2 = \var[z(t)]$, $\lambda_2 = \var[\dot{z}(t)]$ and $\lambda_4 = \var[\ddot{z}(t)]$, defined on a compact set $ T \subset \mathbb{R} $ with non-empty interior and finite measure $|T|$. 
\begin{enumerate}
\item Define respectively the number of local maxima in $T$ and the number of local maxima in $T$ that cross the
threshold $u$ as
$$
\begin{aligned}
\tilde{m}(-\infty; T) &= \#\left\{ t \in T: \
\dot{z}(t) = 0,~ \ddot{z}(t) < 0 \right\} \\
\tilde{m}(u; T) &= \#\left\{ t \in T: \
z(t) > u,~ \dot{z}(t) = 0,~ \ddot{z}(t) < 0 \right\}.
\end{aligned}
$$
Then
$$
\begin{aligned}
\label{eq:m_u}
\E[\tilde{m}(-\infty; T)] &= |T|\E[\tilde{m}(-\infty; [0, 1])], \\
\E[\tilde{m}(u; T)] &= |T|\E[\tilde{m}(u; [0,1])],
\end{aligned}
$$
where $\E[\tilde{m}(-\infty; [0, 1])] = \sqrt{\lambda_4/\lambda_2}/(2\pi)$ is the expected number of local maxima and $\E[\tilde{m}(u; [0,1])]$ is the expected number of local maxima above the threshold $u$ in $[0, 1]$, with expression given in \citet[Ch. 10]{Cramer:1967}.  

\item
The right cdf of the heights of the local maxima of $z(t)$ is given by
\begin{equation}
\label{eq:12}
\P\Big\{z(t) > u ~\Big|~ \dot{z}(t) = 0,~ \ddot{z}(t) < 0 \Big\}  = \frac{\E[\tilde{m}(u; [0, 1])]}{\E[\tilde{m}(-\infty; [0, 1])]}.
\end{equation}
\end{enumerate}
\end{lemma}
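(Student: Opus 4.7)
The plan is to reduce both parts to the Kac--Rice formula, with stationarity of $z$ doing the real work in part 1 and the definition of the Palm distribution doing the real work in part 2.

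For part 1, I would start from the Kac--Rice representation for $\mathcal{C}^2$ Gaussian processes (as in Adler \& Taylor, Ch.~11), which gives
\[
\E[\tilde{m}(u; T)] = \int_T \E\!\left[|\ddot{z}(t)|\, \mathbf{1}_{\{\ddot{z}(t)<0\}}\mathbf{1}_{\{z(t)>u\}}\, \big|\, \dot{z}(t) = 0\right] p_{\dot{z}(t)}(0)\,dt,
\]
and analogously for $u=-\infty$ (dropping the $\mathbf{1}_{\{z(t)>u\}}$ factor). The non-degeneracy required by Kac--Rice is available because $z \in \mathcal{C}^2$ and the spectral moments $\lambda_2,\lambda_4$ are finite and positive. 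By stationarity, the joint law of $(z(t),\dot{z}(t),\ddot{z}(t))$ is independent of $t$, so the integrand is a constant, and the integral equals $|T|$ times that constant. Applying the identity with $T=[0,1]$ identifies the constant with $\E[\tilde{m}(u;[0,1])]$, yielding both equalities in one stroke.

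To recover the explicit value $\E[\tilde{m}(-\infty;[0,1])] = \sqrt{\lambda_4/\lambda_2}/(2\pi)$, I would use that at any fixed $t$, stationarity forces $\cov(\dot{z}(t),\ddot{z}(t)) = 0$ (derivative of $\E[\dot z^2]/2$), and stationarity also gives $\cov(z(t),\dot z(t))=0$. Hence $\ddot{z}(t)$ is independent of $\dot{z}(t)$ and marginally $N(0,\lambda_4)$, from which $\E[|\ddot{z}(t)|\mathbf{1}_{\ddot{z}(t)<0}] = \sqrt{\lambda_4/(2\pi)}$, while $p_{\dot{z}(t)}(0) = 1/\sqrt{2\pi\lambda_2}$, and the product gives the claimed value.

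For part 2, the conditioning event $\{\dot z(t)=0,\ \ddot z(t)<0\}$ has probability zero, so the left-hand side of \eqref{eq:12} is not an ordinary conditional probability. The proper interpretation, via the Palm distribution for the point process of local maxima \citep[Ch.~6]{Adler:2010}, \emph{defines} it as the ratio of the expected number of local maxima exceeding $u$ to the total expected number of local maxima, computed on any set of positive measure. By part 1 this ratio does not depend on the set $T$, so we may use $T=[0,1]$, which is exactly \eqref{eq:12}. The main delicate point is this interpretive step: one must verify that the Palm ratio is well-defined (finiteness and positivity of both expectations, guaranteed by the regularity of $z$) and that it equals the intuitive ``conditional'' distribution, a fact that is standard once the Kac--Rice representation of part 1 is in place.
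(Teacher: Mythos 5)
Your proposal is correct and is essentially the argument the paper itself points to but omits: the paper defers the proof of this lemma to Cram\'er and Leadbetter (Ch.~10), whose derivation is exactly the Kac--Rice computation you describe, with stationarity reducing the integral over $T$ to $|T|$ times a constant and the Palm interpretation defining the conditional law in part~2 as the ratio of intensities. The only detail worth adding is the nondegeneracy condition $\Delta = \sigma^2\lambda_4 - \lambda_2^2 > 0$ for the joint law of $(z(t),\dot z(t),\ddot z(t))$, which is needed for the $u > -\infty$ case and which you only gesture at.
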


Proposition \ref{thm:p-values} follows directly from Lemma \ref{lemma:m_u} applied to the process $z_\gamma(t)$. In particular, \eqref{eq:distr} follows directly from evaluating \eqref{eq:12}. Referring to \eqref{eq:distr}, note that, for large enough $u$, 
\begin{equation}
\label{eq:pvapprox}
F_\gamma(u) \approx  \sqrt{\frac{2\pi\lambda^2_{2, \gamma}}{\lambda_{4, \gamma}\sigma^2_\gamma}}\phi\left(\frac{u}{\sigma_\gamma}\right).
\end{equation}
Therefore, the tail of the distribution is proportional to a Gaussian density, so it behaves like a Rayleigh distribution.

A peculiar characteristic of Procedure \ref{alg:proc} is that the number of tests $\tilde{m}$, which is the same as the number of p-values, is random. In particular, under the complete null hypothesis $\mu(t) = 0, \forall t$, the expected number of tests can be computed explicitly applying Lemma \ref{lemma:m_u} to the process $z_\gamma(t)$ as
\begin{equation}
\label{eq:Em}
\E[\tilde{m}(-\infty; [-L/2, L/2])] = \frac{L}{2\pi}
\sqrt{\frac{\lambda_{4,\gamma}}{\lambda_{2,\gamma}}}.
\end{equation}

The quantities $\sigma^2_\gamma$, $\lambda_{2,\gamma}$, and
$\lambda_{4,\gamma}$ in Proposition \ref{thm:p-values} depend on the kernel $w_\gamma(t)$
and the autocorrelation of the original noise process $z(t)$.
A specific form of the p-values and the expected number of tests may
be obtained for the following Gaussian autocorrelation model.

\begin{example}[\bf{Gaussian autocorrelation model}]
\label{ex:Gaussian-ACVF}
Assume the complete null hypothesis. Suppose $w(t) = \phi(t)\mathbf{1}[-c, c ]$ is a truncated normal density and assume that the autocovariance function of the noise process $z(t)$ is proportional to a normal density, i.e.
$$
z(t) = \sigma \int_{-\infty}^\infty w_\nu(s-t)\,dB(s), \qquad
w_\nu(t) = \frac{1}{\nu} \phi\left(\frac{t}{\nu}\right)
$$
where $B(s)$ is standard Brownian motion and $\nu > 0$. Ignoring the truncation at $\pm c$, the required moments in \eqref{eq:distr} or \eqref{eq:Em} are
\begin{equation}
\label{eq:Gaussian-moments}
\sigma^2_\gamma = \frac{\sigma^2}{2\sqrt{\pi} \xi}, \qquad
\lambda_{2,\gamma} = \frac{\sigma^2}{4\sqrt{\pi} \xi^3}, \qquad
\lambda_{4,\gamma} = \frac{3\sigma^2}{8\sqrt{\pi} \xi^5}, \qquad
\xi = \sqrt{\gamma^2 + \nu^2}.
\end{equation}
Derivations are given in  Section \ref{App:2}.
Therefore, for the Gaussian autocorrelation model,
\begin{equation}
\label{eq:Gaussian-ACVF}
\begin{gathered}
 F_\gamma(u)  = 1-\Phi\left(\frac{u}{\sigma} \sqrt{3\sqrt{\pi}\xi} \right)  +  \sqrt{\frac{2\pi}{3}}\phi\left(\frac{u}{\sigma} \sqrt{2\sqrt{\pi}\xi}\right)\Phi\left( \frac{u}{\sigma} \sqrt{\sqrt{\pi}\xi}\right), \\
\E[\tilde{m}(-\infty; [-L/2, L/2])]= \frac{L}{2\pi\xi}
\sqrt{\frac{3}{2}}.
\end{gathered}
\end{equation}
If  $w(t)$ is truncated at $\pm c$ where $c$ is large enough, say
$c=4$, the moments in \eqref{eq:Gaussian-moments} are a good
approximation for the same moments computed in the truncated version.
\end{example}


\subsection{Error rates definitions}

In this subsection we define the two error rates to control, FWER and FDR. Let us define first what is considered a false discovery and what is considered a true discovery under the true model.

\begin{defn}
\label{defn:support}
Let $S_j = \{t: h_b(t-\tau_j)>0  \}$ be the support of $h_b(t-\tau_j)$. 
Define the signal region $\mathbb{S}_1$ and null region $\mathbb{S}_0$ respectively by
$$
\mathbb{S}_1 =  \bigcup_{j=1}^J S_j
\quad\text{and}\quad
\mathbb{S}_0 = [-L/2,L/2] \setminus \left( \bigcup_{j=1}^J S_j \right).
$$
\end{defn}

It is known that kernel smoothing expands the signal support. This effect requires care as it increases the probability of obtaining false positives in the regions neighboring the signal \citep{Pacifico:2007}. The larger the bandwidth $\gamma$, the greater
the distortion. In particular, the support defined with respect to
$w_\gamma$ is larger than $S_j$, and we define it formally next.

\begin{figure}
\psset{unit=20mm}
\begin{center}
\begin{pspicture}(0,-1)(6,2.4)
\psline(0,0)(6,0)
\pscurve(2,0)(3,1.5)(4,0) 
\pscurve[linestyle=dashed](1,0)(3,1)(5,0)
\rput[lb](2.7,1.8){True signal}
\rput[lb](4.7,.6){Smoothed signal}
\psset{tbarsize=10pt}
\psline{|<->|}(2,-0.4)(4,-0.4)
\rput[lb](3,-0.35){$\mathbb{S}_1$}
\psline{|<->|}(1,-0.8)(5,-0.8)
\rput[lb](2.9,-0.75){$\mathbb{S}_{1, \gamma}$}
\psline{->|}(0,-0.4)(1.989,-0.4)
\rput[lb](1,-0.35){$\mathbb{S}_0$}
\psline{|<-}(4.01,-0.4)(6,-0.4)
\rput[lb](5,-0.35){$\mathbb{S}_0$}
\psline{->|}(0,-0.8)(0.989,-0.8)
\rput[lb](.5,-0.75){$\mathbb{S}_{0, \gamma}$}
\psline{|<-}(5.01,-0.8)(6,-0.8)
\rput[lb](5.5,-0.75){$\mathbb{S}_{0, \gamma}$}
\end{pspicture}
\caption{\label{fig:s0} True and smoothed signal.}
\end{center}
\end{figure}
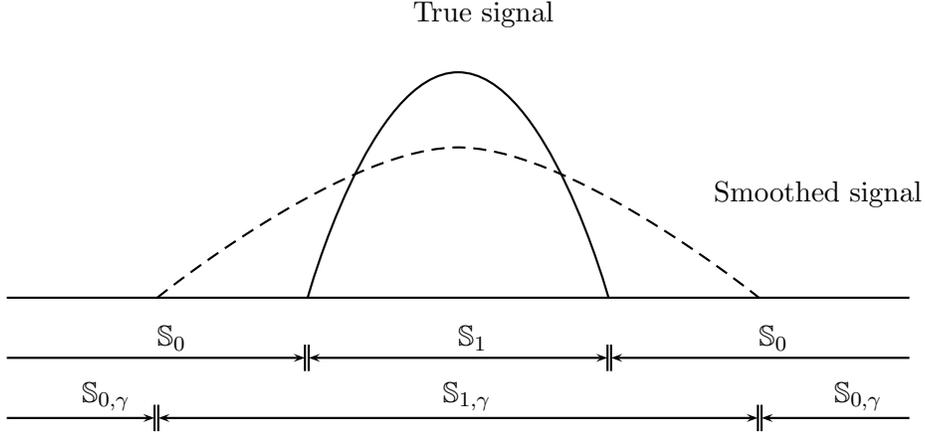

\begin{defn}
\label{defn:support_gamma}
Fix $\gamma > 0$ and define $ S_{j, \gamma} = \{t: h_\gamma(t-\tau_j)>0 \}$ where $h_\gamma(t)$ is given by \eqref{eq:hgamma}.
Define the expanded signal region $\mathbb{S}_{1, \gamma}$ and reduced null region $\mathbb{S}_{0, \gamma}$ respectively by
$$
\mathbb{S}_{1, \gamma} =  \bigcup_{j=1}^J S_{j, \gamma}
\quad\text{and}\quad
\mathbb{S}_{0, \gamma} = [-L/2,L/2] \setminus \left( \bigcup_{j=1}^J S_{j, \gamma}\right).
$$
\end{defn}

The above definitions are illustrated schematically in Figure \ref{fig:s0}. 
Some useful equalities are: (1) $\mathbb{S}_0 \cup \mathbb{S}_1 =
\mathbb{S}_{0, \gamma} \cup \mathbb{S}_{1, \gamma}$ and (2)
$\mathbb{S}_{1, \gamma}\setminus \mathbb{S}_1 =\mathbb{S}_{0}\setminus 
\mathbb{S}_{0, \gamma} $. We refer to the latter set, the difference between the expanded signal support and the true signal support, as the transition region.

We follow the classical view of error definitions. A significant local maximum that belongs to $\mathbb{S}_{0}$ is considered an error. A significant local maximum that belongs to $\mathbb{S}_{1}$ is considered correct. Moreover, if more than one local maximum occur within the same peak, all significant local maxima are considered as correct. However, all these local maxima are counted as one peak for the purposes of power, defined formally below in Section \ref{sec:power}. Thus power is not inflated. Nevertheless, this situation does not affect the theoretical results because under our asymptotic assumptions (Theorem \ref{thm:FDRcontrol}) each peak is represented by one local maximum with probability tending to 1. 

To simplify the  notation, we define the number of falsely
rejected local maxima using the threshold $u$ as:
$$
V(u) = \#\{t\in \tilde{T}\cap \mathbb{S}_0 : x_\gamma(t) >u\} ~\text{and}~ V_\gamma(u)
= \#\{t\in \tilde{T}\cap \mathbb{S}_{0, \gamma} : x_\gamma(t) >u\},
$$
where $V(u)$ counts false discoveries in the full null region and $V_\gamma(u)$ counts them only in the null region minus the transition region. Similarly, we define the number of correctly rejected hypotheses using the threshold $u$ as
$$
W(u) = \#\{t\in \tilde{T}\cap \mathbb{S}_1 : x_\gamma(t) >u\} ~\text{and}~ W_\gamma(u)
= \#\{t\in \tilde{T}\cap \mathbb{S}_{1, \gamma} : x_\gamma(t) >u\},
$$
where $W(u)$ counts discoveries in the true signal region and $W_\gamma(u)$ counts them in the signal region plus the transition region.  The total number of rejections using threshold $u$ is
$$
R(u) = V(u) + W(u) = \#\{t\in \tilde{T} : x_\gamma(t) >u\} .
$$
The number of tests where the null hypothesis is true is denoted by
$$
\tilde{m}_0 = V(-\infty) = \#\{ t\in  \tilde{T}\cap \mathbb{S}_0\},
$$
and the total number of tests is
$$
\tilde{m} = R(-\infty) = \#\{ t\in  \tilde{T}\}.
$$
The number of tests where the null hypothesis is false is denoted by
$\tilde{m}_1 =\tilde{m} - \tilde{m}_0$.

\begin{defn}
\label{defn:FWER}
Define the  family-wise error rate ($\FWER$) for any threshold $u$ as the probability that
there exists at least one local maximum of $x_\gamma(t)$ in the null region above the threshold $u$:
\begin{equation}
\label{eq:FWER}
\FWER(u) = \P\left\{ V(u)\ge 1 \right\} = \P\left\{ \tilde{T}\cap \mathbb{S}_0 \ne \emptyset ~\text{and}~
\max_{t \in \tilde{T}\cap \mathbb{S}_0} x_\gamma(t) > u
\right\}.
\end{equation}
\end{defn}

\begin{defn}
\label{defn:FDR}
Define the false discovery rate ($\FDR$) as the expected proportion of
falsely rejected hypotheses using a fixed threshold $u$:
\begin{equation}
\label{eq:FDR}
\FDR(u) = \E\left\{ \frac{V(u)}{R(u)\vee1} \right\}.
\end{equation}
\end{defn}

Note that if $\tilde{T}$ is empty then $\tilde{m}=0$ and $V(u)=0$ and so $\FWER(u) = \FDR(u)=0$. However, the probability of this event goes to zero as $L$ increases.
\subsection{Weak control of FWER}

In this subsection we assume the complete null hypothesis $\mu(t) = 0, \forall t\in [-L/2,L/2]$. Note that under this assumption $\tilde{m}=\tilde{m}_0$.

In Procedure \ref{alg:proc}, after Step 2 one has a list of $\tilde{m}$ local maxima. The Bonferroni procedure at level $\alpha$ defines a threshold of the form $\alpha/\tilde{m}$ and rejects all hypotheses whose p-values are below this threshold.
Note that in the usual Bonferroni procedure the number of tested hypotheses is  constant. In our case the number of tested hypotheses is random, therefore we consider two versions. We refer to the version in part (2) of Theorem \ref{thm:Bonferroni} as the Bonferroni procedure, while the version is part (1) can be viewed as the 'limit' of the Bonferroni procedure.

\begin{thm}
\label{thm:Bonferroni}
Assume the model of Section \ref{sec:model} and the procedure of Section \ref{sec:proc}. Let $\mu(t)=0, ~\forall t$. Fix $\alpha > 0$. 
\begin{enumerate}
\item Suppose the null hypothesis $\mH_0(t)$ is rejected at $t \in \tilde{T}$ if
\begin{equation}
\label{eq:threshold}
p_\gamma(t) < \frac{\alpha}{\E[\tilde{m}]}
\qquad \iff \qquad x_\gamma(t) >
u^*_{\Bon} = F_\gamma^{-1} \left(\frac{\alpha}{\E[\tilde{m}]} \right).
\end{equation}
Then $\FWER(u^*_{\Bon}) \le \alpha$.

\item Suppose the null hypothesis $\mH_0(t)$ is rejected at $t \in \tilde{T}$ by a Bonferroni procedure with $\tilde{m}$ tests, i.e. if
\begin{equation}
\label{eq:threshold-random}
p_\gamma(t) < \frac{\alpha}{\tilde{m}}
\qquad \iff \qquad x_\gamma(t) >
\tilde{u}_{\Bon} = F_\gamma^{-1} \left(\frac{\alpha}{\tilde{m}}\right).
\end{equation}
If $\tilde{m}=0$ we define $\alpha/\tilde{m}$ as infinity. 
Then $\limsup_{L \to \infty} \FWER(\tilde{u}_{\Bon}) \le \alpha$.
\end{enumerate}
\end{thm}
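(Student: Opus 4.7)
The plan is to use Markov's inequality in both parts, exploiting the fact that under the complete null Lemma \ref{lemma:m_u} gives an exact formula for the expected number of local maxima exceeding any threshold. The key identity, obtained by combining the two parts of Lemma \ref{lemma:m_u} applied to $z_\gamma(t)$ on $[-L/2, L/2]$, is
$$
\E[V(u)] = L \cdot \E[\tilde{m}(u;[0,1])] = F_\gamma(u)\cdot L\cdot \E[\tilde{m}(-\infty;[0,1])] = F_\gamma(u)\cdot \E[\tilde{m}],
$$
where the middle equality uses \eqref{eq:12} together with the definition of $F_\gamma$ in \eqref{eq:palm}.

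For Part (1) the proof is then one line: since under the complete null $V(u) = \tilde{m}(u; [-L/2, L/2])$, Markov's inequality gives
$$
\FWER(u^*_{\Bon}) = \P\{V(u^*_{\Bon}) \ge 1\} \le \E[V(u^*_{\Bon})] = F_\gamma(u^*_{\Bon}) \cdot \E[\tilde{m}] = \alpha,
$$
by the definition of $u^*_{\Bon}$.

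For Part (2) the complication is that the threshold $\tilde{u}_{\Bon} = F_\gamma^{-1}(\alpha/\tilde{m})$ is random. The plan is to use ergodicity of the stationary Gaussian process $z_\gamma(t)$ (which inherits ergodicity from $z(t)$ through the deterministic linear filter $w_\gamma$) to show that $\tilde{m}/\E[\tilde{m}] \to 1$ in probability as $L \to \infty$; this is simply the ergodic theorem applied to the additive functional $t \mapsto \mathbf{1}\{\dot{z}_\gamma(t)=0,~\ddot{z}_\gamma(t)<0\}$ over a growing interval. Fix $\epsilon > 0$ and set $A_\epsilon = \{\tilde{m} \ge (1-\epsilon)\E[\tilde{m}]\}$, so $\P(A_\epsilon^c) \to 0$. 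On $A_\epsilon$, since $F_\gamma$ is strictly decreasing, $\tilde{u}_{\Bon} \ge u_\epsilon := F_\gamma^{-1}(\alpha/[(1-\epsilon)\E[\tilde{m}]])$, hence $V(\tilde{u}_{\Bon}) \le V(u_\epsilon)$. Splitting on $A_\epsilon$ and applying the Markov bound as in Part (1) yields
$$
\FWER(\tilde{u}_{\Bon}) \le \P\{V(u_\epsilon) \ge 1\} + \P(A_\epsilon^c) \le \E[V(u_\epsilon)] + \P(A_\epsilon^c) = \frac{\alpha}{1-\epsilon} + \P(A_\epsilon^c).
$$
Letting $L \to \infty$ and then $\epsilon \downarrow 0$ gives $\limsup_L \FWER(\tilde{u}_{\Bon}) \le \alpha$.

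The main obstacle is the ergodicity argument in Part (2): everything else reduces to Markov's inequality together with the Kac--Rice content already packaged in Lemma \ref{lemma:m_u}. Concentration of the random test count around its mean is what justifies replacing $\tilde{m}$ by $\E[\tilde{m}]$ in the threshold and is the only place where the asymptotic $L \to \infty$ is used; in finite samples the Bonferroni threshold in (2) may be anti-conservative, which is why only a $\limsup$ bound is claimed.
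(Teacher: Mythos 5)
Your proof is correct. Part (1) is essentially identical to the paper's: both reduce to Markov's inequality plus the Kac--Rice identity $\E[V(u)]=\E[\tilde m]\,F_\gamma(u)$ from Lemma \ref{lemma:m_u}, so nothing to add there.

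Part (2) is where you genuinely diverge. The paper first proves a separate lemma (Lemma \ref{lemma:threshold}) showing $|\tilde u_{\Bon}-u^*_{\Bon}|\to 0$ in probability --- via ergodicity of $\tilde m/L$, the continuous mapping theorem for $\log$, and a Lipschitz bound on $x\mapsto F_\gamma^{-1}(e^{-x})$ --- and then perturbs the threshold \emph{additively} in the $u$-scale, ending with the bound $\alpha\,F_\gamma(u^*_{\Bon}-\eps)/F_\gamma(u^*_{\Bon})$ and the claim that this ratio tends to $1$. You instead perturb \emph{multiplicatively in the p-value scale}: on the high-probability event $\{\tilde m\ge(1-\eps)\E[\tilde m]\}$ you dominate $V(\tilde u_{\Bon})$ by $V(u_\eps)$ with $F_\gamma(u_\eps)=\alpha/[(1-\eps)\E[\tilde m]]$, pay a factor $1/(1-\eps)$ in the Markov bound, and send $\eps\downarrow 0$ after $L\to\infty$. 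Your route buys two things. First, it needs only the ergodic concentration $\tilde m/\E[\tilde m]\to 1$ and the monotonicity of $F_\gamma^{-1}$, dispensing with the Lipschitz/continuous-mapping machinery of Lemma \ref{lemma:threshold}. Second, it sidesteps a soft spot in the paper's argument: for \emph{fixed} $\eps>0$ the ratio $F_\gamma(u^*_{\Bon}-\eps)/F_\gamma(u^*_{\Bon})$ does not tend to $1$ as $L\to\infty$ --- since $F_\gamma$ has a Gaussian-type tail and $u^*_{\Bon}\to\infty$, that ratio behaves like $\exp(\eps\,u^*_{\Bon}/\sigma_\gamma^2)$ and diverges, so the paper's final step implicitly requires $\eps=\eps_L\to 0$ at a suitable rate. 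Your working in the $F_\gamma$-scale makes the $\eps$-loss exactly $\alpha/(1-\eps)$ independently of $L$, so the two limits decouple cleanly. The only point to tighten is the appeal to ``the ergodic theorem for the additive functional $\mathbf 1\{\dot z_\gamma=0,\ddot z_\gamma<0\}$'': the count of local maxima is a point-process functional rather than an integral of an indicator, so you should cite the ergodic theorem for stationary point processes (or the version of the law of large numbers for crossing counts), which is the same justification the paper uses implicitly in Lemma \ref{lemma:threshold}.
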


The proof of Theorem \ref{thm:Bonferroni} is given in Section \ref{App:3}. The first part of the theorem is not asymptotic and its proof is a direct consequence of the definition of p-values. The threshold  $u^*_{\Bon}$ in \eqref{eq:threshold} is deterministic. For example, in the Gaussian autocorrelation model this threshold can be computed by substituting \eqref{eq:Gaussian-ACVF}. The threshold $\tilde{u}_{\Bon}$ in \eqref{eq:threshold-random} depends on the random quantity $\tilde{m}$ and is  equivalent to applying the Bonferroni procedure on the random set of local maxima, $\tilde{T}$. The proof of the second part is based on the fact that by the weak law of large numbers  $\tilde{m}$ is close  to its expectation $\E[\tilde{m}]$ for large enough $L$.

Using \eqref{eq:pvapprox}, the thresholds in  Theorem \ref{thm:Bonferroni} may be approximated by 
\begin{equation*}
\begin{aligned}
u^*_{\Bon} &\approx \sigma_\gamma \phi^{-1}\left( \frac{\alpha}{L}
\sqrt{\frac{2\pi \sigma_\gamma^2}{\lambda_{2,\gamma}}} \right)
\approx \sigma_\gamma \sqrt{2 \log\left( \frac{L}{\alpha}
\sqrt{\frac{\lambda_{2,\gamma}}{2\pi \sigma_\gamma^2}}\right)} \\
\tilde{u}_{\Bon} &\approx \sigma_\gamma \phi^{-1}\left(
\frac{\alpha}{\tilde{m}_L} \sqrt{\frac{\lambda_{4,\gamma}
\sigma_\gamma^2}{2\pi \lambda^2_{2,\gamma}}} \right)
\approx \sigma_\gamma \sqrt{2 \log\left( 
\frac{\tilde{m}_L}{\alpha} \sqrt{\frac{2\pi
\lambda^2_{2,\gamma}}{\lambda_{4,\gamma} \sigma_\gamma^2}}\right)}
\end{aligned}
\end{equation*}
Notice that asymptotically both thresholds  have a similar form to the universal threshold of \citet{Donoho:1994,Donoho:1995}.

\subsection{Strong control of FWER}

Let us start with brief reminder about the assumptions of the true
model. We observe $y(t)$ as defined in \eqref{eq:signal+noise} in the
segment $T = [-L/2,L/2]$, where the signal contains $J_L$ peaks at locations
$\tau_j \in S_j$ for $1 \le j \le J_L$, where $S_j$ is the finite
support of the $j$th peak. Recall that $h_\gamma (t) =w_\gamma(t)*h(t)$ is the peak shape after smoothing. It achieves its supremum at a single point $\tau_{j, \gamma} \in S_j$ which is not necessarily the 
same as $\tau_j$ and has no other critical points. We define the signal after smoothing as
\begin{equation}
\label{eq:mu-smooth}
\mu_\gamma(t) = \sum_{j=1}^{J_L} a_j h_\gamma(t-\tau_{j, \gamma}).
\end{equation}
The next theorem describes the strong control of FWER for this 
model. 

\begin{thm}
\label{thm:strongFWER}
Assume the model of Section \ref{sec:model} and the procedure of Section \ref{sec:proc}. Fix $\alpha > 0$. 
\begin{enumerate}
\item Suppose the null hypothesis $\mH_0(t)$ is rejected at $t \in \tilde{T}$ if
\begin{equation}
\label{eq:threshold1}
p_\gamma(t) < \frac{\alpha}{\E[\tilde{m}]}
\qquad \iff \qquad x_\gamma(t) >
u^*_{\Bon} = F_\gamma^{-1} \left(\frac{\alpha}{\E[\tilde{m}]} \right).
\end{equation}
Then, for all $L$, $\limsup\FWER(u^*_{\Bon}) \le \alpha$, as $a_j \to\infty, \forall j$. 
\item Suppose the null hypothesis $\mH_0(t)$ is rejected at $t \in \tilde{T}$ by the Bonferroni procedure with $\tilde{m}$ tests, i.e. if
\begin{equation}
\label{eq:threshold1-random}
p_\gamma(t) < \frac{\alpha}{\tilde{m}}
\qquad \iff \qquad x_\gamma(t) >
\tilde{u}_{\Bon} = F_\gamma^{-1} \left(\frac{\alpha}{\tilde{m}}\right).
\end{equation}
If $\tilde{m}=0$ we define $\alpha/\tilde{m}$ as infinity.
Then $\limsup \FWER(\tilde{u}_{\Bon}) \le \alpha$, as $L \to \infty$
and $a_j \to\infty, \forall  j$, such that for all $j, ~ La_j\phi(Ka_j) \to
0$  for any constant $K$ .
\end{enumerate}
\end{thm}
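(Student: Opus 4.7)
The plan is to split the null region as $\mathbb{S}_0 = \mathbb{S}_{0,\gamma} \cup (\mathbb{S}_0 \setminus \mathbb{S}_{0,\gamma})$, the second piece being the transition region $\mathbb{S}_{1,\gamma}\setminus\mathbb{S}_1$ introduced in the text just before the theorem. On $\mathbb{S}_{0,\gamma}$ the no-overlap assumption forces $\mu_\gamma \equiv 0$, so $x_\gamma(t) = z_\gamma(t)$ is exactly the stationary Gaussian noise treated in the weak-control theorem. Writing $V(u) = V_\gamma(u) + V_{\mathrm{trans}}(u)$, where $V_{\mathrm{trans}}$ counts significant local maxima of $x_\gamma$ in the transition region, a union bound reduces the task to controlling each piece separately.

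For part (1), Lemma \ref{lemma:m_u} applied to $z_\gamma$ on $\mathbb{S}_{0,\gamma}$ together with \eqref{eq:Em} and the definition of $u^*_{\Bon}$ gives
$$
\E[V_\gamma(u^*_{\Bon})] = |\mathbb{S}_{0,\gamma}|\,\E[\tilde m(-\infty;[0,1])]\,F_\gamma(u^*_{\Bon}) = \frac{|\mathbb{S}_{0,\gamma}|}{L}\,\alpha \le \alpha,
$$
and Markov's inequality yields $\P\{V_\gamma(u^*_{\Bon})\ge 1\}\le \alpha$ without any asymptotics; this is why part (1) holds for all $L$. For the transition piece, since $h_\gamma$ is unimodal with unique critical point interior to $S_j$, $|\dot h_\gamma(t-\tau_{j,\gamma})|$ is bounded below by some $c_j>0$ on $S_{j,\gamma}\setminus S_j$, so $|\dot\mu_\gamma(t)|\ge a_j c_j$ on each transition interval. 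A local maximum of $x_\gamma = \mu_\gamma + z_\gamma$ there forces $|\dot z_\gamma(t)|\ge a_j c_j$, and a Kac--Rice count of the zero crossings of $\dot x_\gamma = \dot\mu_\gamma + \dot z_\gamma$ bounds the expected number of such maxima by a constant times $\phi(K a_j)$, with $K = \min_j c_j/\sqrt{\lambda_{2,\gamma}}$. Summing over the $J_L$ peaks and applying Markov shows $\P\{V_{\mathrm{trans}}(u^*_{\Bon})\ge 1\}\to 0$ as $a_j\to\infty$ for fixed $L$, finishing part (1).

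Part (2) uses the same decomposition but must cope with the randomness of $\tilde u_{\Bon} = F_\gamma^{-1}(\alpha/\tilde m)$. The plan is a two-step concentration argument: for any $\eps>0$, show $\P\{\tilde m\le (1+\eps)\E[\tilde m]\}\to 1$, which gives $\tilde u_{\Bon}\ge F_\gamma^{-1}\bigl(\alpha/((1+\eps)\E[\tilde m])\bigr)$ with probability $1-o(1)$ and reduces the problem to part (1) with $\alpha$ inflated by $1+\eps$. Splitting $\tilde m$ by region, the pure-noise count on $\mathbb{S}_{0,\gamma}$ concentrates around $|\mathbb{S}_{0,\gamma}|\,\E[\tilde m(-\infty;[0,1])]$ by ergodicity of $z_\gamma$ (mean from Lemma \ref{lemma:m_u}, fluctuations from the ergodic theorem), the transition-region count is $o_P(1)$ by part (1), and the signal-region count is at most a constant multiple of $J_L = O(L)$ because under the unimodality and large-SNR regime each peak contributes essentially one local maximum with probability tending to one. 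The hypothesis $L a_j\phi(K a_j)\to 0$ is exactly what is needed for the transition-region sum, of order $J_L \phi(Ka_j) = O(L\phi(Ka_j))$ with polynomial Kac--Rice prefactors absorbed by the extra $a_j$, to vanish in the joint limit.

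The main obstacle is the transition-region term in this joint limit: $L\to\infty$ creates $O(L)$ transition intervals, each contributing a probability of order $\phi(Ka_j)$, and without the stated rate the sum could dominate. The bulk of the technical work is therefore a careful Kac--Rice bound for $\dot x_\gamma$ in each transition interval that extracts the sharp Gaussian tail $\phi(Ka_j)$ with only polynomial prefactors, plus verifying that $La_j\phi(Ka_j)\to 0$ really does kill those prefactors and simultaneously allows the LLN for $\tilde m_0$ to kick in. The non-asymptotic simplicity of part (1) evaporates as soon as this coupling between $L$ and $a_j$ is imposed.
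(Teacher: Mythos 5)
Your overall architecture is the paper's: split $\mathbb{S}_0$ into the reduced null region $\mathbb{S}_{0,\gamma}$, where $x_\gamma=z_\gamma$ is pure stationary noise and the weak-control argument applies verbatim, plus the transition region, whose contribution is killed by a Kac--Rice bound of order $L a_j\phi(Ka_j)$ (this is exactly Lemma \ref{lemma:transitionE}, including the observation that the prefactor polynomial in $a_j$ is what forces the hypothesis $La_j\phi(Ka_j)\to 0$ rather than just $L\phi(Ka_j)\to 0$). Your part (1) is essentially identical to the paper's, modulo the harmless replacement of the exact identity $\E[V_\gamma(u^*_{\Bon})]=|\mathbb{S}_{0,\gamma}|\,\E[\tilde m(-\infty;[0,1])]\,\alpha/\E[\tilde m]$ by the inequality $\le\alpha$, which only needs $\E[\tilde m_{0,\gamma}]\le\E[\tilde m]$.

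Part (2) is where you diverge, and your route as written has two problems. First, a direction slip: since $F_\gamma^{-1}$ is decreasing and $\tilde u_{\Bon}=F_\gamma^{-1}(\alpha/\tilde m)$ is \emph{increasing} in $\tilde m$, the event $\tilde m\le(1+\eps)\E[\tilde m]$ gives an \emph{upper} bound on $\tilde u_{\Bon}$, which is useless for FWER; you need the lower-tail concentration $\tilde m\ge\E[\tilde m]/(1+\eps)$ to get $\tilde u_{\Bon}\ge F_\gamma^{-1}\bigl((1+\eps)\alpha/\E[\tilde m]\bigr)$. Second, and more substantively, concentrating $\tilde m$ around $\E[\tilde m]$ under the true model requires controlling $\E[\tilde m_{1,\gamma}]$: your bound ``the signal-region count is at most a constant multiple of $J_L$'' leaves open that $\E[\tilde m_{1,\gamma}]=CJ_L$ with $C>1$ while $\tilde m_{1,\gamma}=J_L$ w.h.p., in which case $\tilde m/\E[\tilde m]$ does \emph{not} tend to $1$ and your reduction to part (1) breaks (note also that Theorem \ref{thm:strongFWER} does not assume $J_L/L\to A_1$, so the signal-region contribution need not be negligible relative to $L$). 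The paper sidesteps all of this with a monotonicity trick: since $\tilde m\ge\tilde m_{0,\gamma}$, one has $\tilde u_{\Bon}\ge\tilde v_\gamma:=F_\gamma^{-1}(\alpha/\tilde m_{0,\gamma})$, hence $V(\tilde u_{\Bon})\le V(\tilde v_\gamma)$, and $\tilde v_\gamma$ involves only the pure-noise count on $\mathbb{S}_{0,\gamma}$, for which the ergodic LLN and the Lipschitz argument of Lemma \ref{lemma:threshold} apply exactly as in the weak-control theorem. You should adopt that comparison (or, equivalently, compare $\tilde m$ to $\E[\tilde m_{0,\gamma}]$ rather than to $\E[\tilde m]$); with that change your part (2) closes without any expectation bound on the signal region.
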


Notice that the result in the first part of Theorem \ref{thm:strongFWER} is
asymptotic, in contrast to the result in the first part of Theorem \ref{thm:Bonferroni}. In addition, $u^*_{\Bon}$
defined in \eqref{eq:threshold1} cannot be computed, even assuming the 
Gaussian autocorrelation model, since the true signal region is unknown. In practice we recommend to apply the Bonferroni procedure defined in 
the second part of Theorem \ref{thm:strongFWER}, which can always be implemented and provides asymptotic strong control of FWER. The rates in Theorem \ref{thm:strongFWER} part (2) may be achieved, for instance, if $L$ increases exponentially with $a_j$.

The proof of Theorem \ref{thm:strongFWER} is given in Section \ref{App:4}.
There we show first that the expected number of local maxima in the transition region $\mathbb{S}_{1, \gamma} \setminus \mathbb{S}_1$ converges to 0 (Lemma \ref{lemma:transitionE}). Then we use arguments similar to those in the proof of Theorem \ref{thm:Bonferroni}.

\subsection{Control of FDR}
\label{sec:FDR-control}

In many applied problems the expected ratio of false discoveries is a
more appropriate error rate to control. The next theorem states that applying the BH procedure on the random set of local maxima controls the FDR asymptotically. 

\begin{thm}
\label{thm:FDRcontrol}
Assume the model of Section \ref{sec:model} and the procedure of Section \ref{sec:proc}.
Let $x_{\gamma}^{(i)}(t)$ be the $i$th ordered value of
$\tilde{T}$ in ascending order. Let
\begin{equation}
\label{eq:FDRproc}
\tilde{u}_{\BH} = u_k, \qquad k=\min_{1\le i\le \tilde{m}}\{i:x_{\gamma}^{(i)}(t) >u_i  \}
\end{equation}
be the threshold obtained by applying the BH procedure at the level $\alpha$ on the random set $\tilde{T}$ with $\tilde{m} \ne 0$  tests, where 
$$
 u_i = F_\gamma^{-1}\left(\frac{\tilde{m}-i+1}{\tilde{m}} \alpha\right), \qquad 1 \le i \le \tilde{m}
$$
are the constants corresponding to the BH procedure. Reject
$\tilde{m}- k+1$ hypotheses corresponding to $x_\gamma(t) > \tilde{u}_{\BH}$. If
such $k$ does not exist or $\tilde{m}=0$ reject nothing.
Assume that for $L \to \infty$ and $a_j \to \infty$ for all $j$, 
\begin{enumerate}
\item the number of peaks $J_L$ increases at the same rate as $L$ or
slower, i,e. $J_L/L \to A_1$, where $0 < A_1 < 1$.
\item $L \to \infty$ and $a_j \to\infty, \forall  j$ such that $ \forall j, ~ L\phi(Ka_j^\delta) \to
0$ for all $\delta>0$ and any constant $K$.
\end{enumerate}
Then, 
$$ 
\limsup \FDR(\tilde{u}_{\BH}) \le \alpha.
$$
\end{thm}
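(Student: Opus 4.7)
The plan is to establish asymptotic FDR control by deriving a deterministic in-probability limit for the BH threshold $\tilde{u}_{\BH}$ and for the ratio $V/R$ evaluated at that threshold. The proof splits naturally into a signal-side analysis, a null-side ergodic analysis, and a combination argument that mirrors the classical BH proof with the sample size replaced by the random count $\tilde{m}$.

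On the signal side, I would reuse the machinery of the proof of Theorem \ref{thm:strongFWER}. Under the growth condition $L\phi(Ka_j^\delta)\to 0$ for all $\delta>0$, one shows that with probability tending to one: (a) the transition region $\mathbb{S}_{1,\gamma}\setminus\mathbb{S}_1$ contains no local maxima of $x_\gamma$, via the expected-count bound in Lemma \ref{lemma:transitionE}; (b) each signal peak $\tau_j$ produces exactly one local maximum of $x_\gamma$ inside $S_j$, whose height exceeds $a_j h_\gamma(\tau_{j,\gamma})-C\sigma_\gamma\sqrt{\log L}\to\infty$; and (c) every such p-value falls below any fixed positive threshold. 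Consequently $\tilde{m}_1 = J_L$ and $W(u) = J_L$ with probability tending to one, for every fixed $u$.

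On the null side, ergodicity of $z_\gamma$ together with part (1) of Lemma \ref{lemma:m_u} yields
$$
\frac{V(u)}{|\mathbb{S}_0|}\xrightarrow{p} \rho_0 F_\gamma(u), \qquad \rho_0 = \frac{1}{2\pi}\sqrt{\frac{\lambda_{4,\gamma}}{\lambda_{2,\gamma}}},
$$
with convergence uniform in $u$ on compact intervals by monotonicity of $V(\cdot)$ (a Glivenko--Cantelli argument). The hypothesis $J_L/L\to A_1\in(0,1)$ keeps $|\mathbb{S}_0|/L$ bounded away from zero, so $\tilde{m}_0/L$ and $V(u)/L$ have positive deterministic limits, and combined with $\tilde{m}_1=J_L$ the null proportion $\pi_0:=\tilde{m}_0/\tilde{m}$ tends in probability to a constant $\pi_0^\ast\in(0,1)$. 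Next, I would recast BH as a threshold rule: $\tilde{u}_{\BH}$ is the smallest $u$ with $F_\gamma(u)\le\alpha\,G_L(u)$, where $G_L(u):=R(u)/\tilde{m}$. By the previous two paragraphs,
$$
G_L(u)\xrightarrow{p} G^\ast(u):=\pi_0^\ast F_\gamma(u)+\pi_1^\ast, \qquad \pi_1^\ast:=1-\pi_0^\ast,
$$
uniformly on compact intervals, and the limit equation $F_\gamma(u^\ast)=\alpha G^\ast(u^\ast)$ has a unique solution $u^\ast$ by strict monotonicity. A continuous-mapping argument gives $\tilde{u}_{\BH}\xrightarrow{p}u^\ast$, whence
$$
\frac{V(\tilde{u}_{\BH})}{R(\tilde{u}_{\BH})\vee 1}\xrightarrow{p}\frac{\pi_0^\ast F_\gamma(u^\ast)}{G^\ast(u^\ast)}=\pi_0^\ast\,\alpha\le\alpha,
$$
using the defining relation of $u^\ast$. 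Since the ratio is bounded by $1$, dominated convergence upgrades convergence in probability to convergence in expectation, yielding $\limsup\FDR(\tilde{u}_{\BH})\le\alpha$.

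The main obstacle is the uniform null-side convergence of $V(\cdot)/|\mathbb{S}_0|$. The region $\mathbb{S}_0$ is a union of $J_L+1\to\infty$ disjoint intervals of variable lengths, and the point process of local maxima of $z_\gamma$ has nontrivial short-range dependence, so Birkhoff's theorem must be reinforced. I would argue via a blocking/mixing decomposition exploiting the decay of correlations of a $\mathcal{C}^2$ stationary Gaussian process over distances exceeding a few correlation lengths. One also needs the rate $L\phi(Ka_j^\delta)\to 0$ to be sharp enough that the probability of any anomalous large noise excursion anywhere in $[-L/2,L/2]$ vanishes uniformly in $L$, so that the null and signal contributions separate cleanly in $G_L$ and the continuous-mapping step goes through.
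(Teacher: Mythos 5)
Your proposal is correct in outline and shares the paper's main skeleton: both recast the BH rule \`a la Genovese--Wasserman as the crossing point of $\alpha G(u)$ with $F_\gamma(u)$, both split the empirical distribution of peak heights into a null part (handled by ergodicity and Lemma \ref{lemma:m_u}) and a signal part (handled by showing each true peak contributes exactly one local maximum exceeding any fixed threshold, which is the content of Lemmas \ref{lemma:unique-max} and \ref{lemma:results2}), and both arrive at the same deterministic limiting threshold $u^*_{\BH}$ satisfying \eqref{eq:FDRfixed-threshold} with limiting FDP equal to $\alpha$ times the asymptotic null proportion. Where you genuinely diverge is in the final extraction of the FDR bound. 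You establish convergence in probability of the false discovery \emph{proportion} $V(\tilde{u}_{\BH})/(R(\tilde{u}_{\BH})\vee 1)$ evaluated at the \emph{random} threshold, and then invoke bounded convergence; this is clean but forces you to prove uniform (locally, near $u^*$) convergence of the empirical process $G_L$, which you correctly identify as the main technical burden and propose to discharge via monotonicity plus a blocking argument. The paper instead works in expectation at the \emph{deterministic} threshold $u^*_{\BH}$, using the Jensen-type inequality of Lemma \ref{lemma:mFDR}, namely $\FDR(u)\le \P(W(u)\le J_L-1)+\E[V(u)]/(\E[V(u)]+J_L)$, which requires only pointwise (in $u$) first-moment computations and the vanishing of the expected count in the transition region (Lemma \ref{lemma:transitionE}); it then transfers to $\tilde{u}_{\BH}$ via continuity of $F_\gamma$. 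Your route buys a direct statement about the FDP at the threshold actually used, at the price of a heavier uniform-convergence step; the paper's route is lighter on the empirical-process side but relies on the extra combinatorial Lemma \ref{lemma:mFDR} and a somewhat terse final transfer from $u^*_{\BH}$ to $\tilde{u}_{\BH}$. One small caution: state the threshold characterization consistently with the right-tail convention --- with $F_\gamma$ and $G_L$ both decreasing in $u$, the BH threshold is the \emph{largest} solution of $\alpha G_L(u)=F_\gamma(u)$ (equivalently the smallest $u$ beyond which $F_\gamma(u)\le\alpha G_L(u)$ holds persistently), and the uniqueness of the limiting root follows from strict monotonicity of $(1-\alpha\pi_0^\ast)F_\gamma(u)-\alpha\pi_1^\ast$ rather than from any property of $G^\ast$ alone.
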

 
The rates of this theorem may be achieved if $L$ grows exponentially  with $a_j$. The proof is given in Section \ref{App:FDR}. First we show that: (1) a local maximum exists in a small neighbourhood of a true peak mode with probability tending to 1; (2) this local maximum is rejected for any fixed threshold with probability tending to 1 (Lemma \ref{lemma:unique-max}). The proof of the theorem is then based on the following arguments.
It is known that the  threshold of the BH procedure can be viewed as the largest solution of the equation $\alpha G(u) = F_\gamma(u)$, where $G(u)$ is the empirical right cumulative distribution function of $x_\gamma(t), ~t\in \tilde{T}$ \citep{Genovese:2002}. The ergodic assumption guarantees that $G(u)$ has a limit. Replacing $G(u)$ by its limit we find that the asymptotic  solution $u^*_{\BH}$ satisfies
\begin{equation}
\label{eq:FDRfixed-threshold}
F_\gamma(u^*_{\BH})=\frac{\alpha A_1}{A_1 + \E[\tilde{m}_{0, \gamma}; [0, 1]](1-\alpha)}.
\end{equation}
We show that under the conditions of the theorem, the threshold $u^*_{\BH}$ asymptotically controls the FDR below the desired level $\alpha$. Since $F_\gamma(\tilde{u}_{\BH}) \to F_\gamma(u^*_{\BH})$, the FDR level will be asymptotically controlled as well when using $\tilde{u}_{\BH}$ instead of $u^*_{\BH}$.

Notice that, in contrast to the Bonferroni procedure, where the deterministic threshold $u^*_{\Bon}$ in \eqref{eq:threshold1} grows unbounded with increasing $L$, the asymptotic threshold for the BH procedure $u^*_{\BH}$ in \eqref{eq:FDRfixed-threshold} is finite, and depends on the asymptotically fixed proportion of false null hypotheses, $A_1$.

\subsection{Power}
\label{sec:power}

We define the statistical power of Procedure \ref{alg:proc} as
the expected fraction of true discovered peaks:

\begin{equation}
\begin{aligned}
\label{eq:power}
\Power(u) &= \E \left[ \frac{1}{J_L} \sum_{j=1}^{J_L} 1\left(
\tilde{T} \cap S_j \ne \emptyset ~\text{and}~ \max_{\tilde{t}
\in \tilde{T} \cap S_j} x_\gamma(\tilde{t}) > u \right)\right] \\
&= \P\left\{
\tilde{T} \cap S_j \ne \emptyset ~\text{and}~ \max_{t 
\in \tilde{T} \cap S_j} x_\gamma(t) > u \right\}
\end{aligned}
\end{equation}

We discuss power in two ways. First, we show that the Bonferroni and BH procedures in \eqref{eq:threshold1-random} and \eqref{eq:FDRproc} respectively are consistent in the sense that the power tends to 1 for any fixed value of the smoothing parameter $\gamma$. Second, we discuss what value of the smoothing parameter $\gamma$  maximizes the power for finite samples.

\subsubsection{Power of the Bonferroni and BH procedures}

The consistency of both procedures is given in the next theorem.

\begin{thm}
\label{thm:power}
\hfill\par\noindent
\begin{enumerate}
\item Under the conditions of Theorem \ref{thm:strongFWER} the power of the Bonferroni procedure \eqref{eq:threshold1-random} converges in probability to 1.
\item Under the conditions of Theorem \ref{thm:FDRcontrol} the power
of the BH procedure \eqref{eq:FDRproc} converges in probability to 1.
\end{enumerate}
\end{thm}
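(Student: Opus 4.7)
The plan is to reduce both parts to showing that, for each true peak $j$, with probability tending to one, (i) the set $\tilde{T} \cap S_j$ is non-empty, and (ii) the value of $x_\gamma$ at the local maximum in $S_j$ exceeds the applicable rejection threshold. Since $\Power(u) = J_L^{-1} \sum_{j=1}^{J_L} \P(\text{event}_j)$, once each $\P(\text{event}_j)$ tends to $1$ uniformly in $j$ under the standing assumption that $a_j\to\infty$ for all $j$, the Cesaro-type average also tends to $1$, and power consistency follows.

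For step (i) I would invoke Lemma \ref{lemma:unique-max}, which is established in the proof of Theorem \ref{thm:FDRcontrol}. Its effect is that, because $\mu_\gamma$ has a strict isolated local maximum at $\tau_{j,\gamma}$ of height of order $a_j$ and of second derivative of order $-a_j$, whereas $\dot z_\gamma$ and $\ddot z_\gamma$ are stationary Gaussian processes that are $O_p(1)$ on compact sets, the perturbed process $x_\gamma = \mu_\gamma + z_\gamma$ retains a local maximum $\tilde t_j \in S_j$ with $\tilde t_j \to \tau_{j,\gamma}$ in probability as $a_j\to\infty$.

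For step (ii), write $x_\gamma(\tilde t_j) = \mu_\gamma(\tilde t_j) + z_\gamma(\tilde t_j) \ge c\,a_j - \sup_{t\in S_j}|z_\gamma(t)|$ for a positive constant $c$ depending only on $h$ and $w_\gamma$ (via the value of $h_\gamma$ at its mode); the supremum is $O_p(1)$ since $S_j$ is compact and $z_\gamma$ is a continuous stationary Gaussian process. For the Bonferroni case, the approximation $\tilde u_{\Bon} \asymp \sigma_\gamma\sqrt{2\log L}$ displayed after Theorem \ref{thm:Bonferroni} holds, and the hypothesis $L a_j\phi(Ka_j)\to 0$ for every constant $K$ forces $a_j/\sqrt{\log L}\to\infty$; hence $c a_j$ eventually dominates $\tilde u_{\Bon}$. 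For the BH case, the discussion following Theorem \ref{thm:FDRcontrol} shows $F_\gamma(\tilde u_{\BH}) \to F_\gamma(u^*_{\BH})$ with $u^*_{\BH}$ finite and determined by \eqref{eq:FDRfixed-threshold}; continuity of $F_\gamma^{-1}$ then gives $\tilde u_{\BH} = O_p(1)$, which is overwhelmed by $c a_j$ since $a_j\to\infty$. Either way, $\P\{x_\gamma(\tilde t_j) > u\}\to 1$.

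I expect the main technical obstacle to be the joint control in step (i)-(ii): one needs that the random perturbation $\tilde t_j - \tau_{j,\gamma}$ is small enough that $\mu_\gamma(\tilde t_j)$ remains at least a fixed fraction of its peak height, while simultaneously $|z_\gamma(\tilde t_j)|$ is controlled. This is handled cleanly by bounding the noise on the whole (compact, non-random) support $S_j$ rather than at the random point $\tilde t_j$, trading sharpness for measurability. Once the per-peak convergence is in hand, the uniformity in $j$ needed to conclude $\Power(u)\to 1$ follows by applying the above bounds to $a_{\min} = \min_j a_j$, which by assumption also tends to infinity fast enough to satisfy the rates of Theorems \ref{thm:strongFWER} and \ref{thm:FDRcontrol}.
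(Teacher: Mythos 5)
Your proposal is correct and follows essentially the same route as the paper: existence of a local maximum near each peak mode exceeding a fixed threshold comes from Lemma \ref{lemma:unique-max}, the thresholds are shown to be $o(a_j h_\gamma(0))$ (which the paper packages as Lemma \ref{lemma:fwer-fdr-threshold}, proved via two-sided Gaussian-tail bounds on $F_\gamma$ rather than the displayed $\sqrt{2\log L}$ approximation you cite, though the content is the same), and the random thresholds are handled by their convergence to the deterministic ones. The only cosmetic difference is that you bound the noise by $\sup_{S_j}|z_\gamma| = O_p(1)$ directly, whereas the paper's Lemma \ref{lemma:unique-max} does exactly this via a Kac--Rice upcrossing bound on the small interval $I_{j,\eps}$.
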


The proof of Theorem \ref{thm:power} is based on the next lemma.
\begin{lemma}
\label{lemma:fwer-fdr-threshold}
\hfill\par\noindent
\begin{enumerate}
\item Under the conditions of Theorem \ref{thm:strongFWER},  $u^*_{\Bon}/[a_jh_\gamma(0)] \to 0$ in probability.
\item Under the conditions of Theorem \ref{thm:FDRcontrol}, $u^*_{\BH}/[a_jh_\gamma(0)] \to 0$ in probability. 
\end{enumerate}
\end{lemma}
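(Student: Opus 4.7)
The plan is to handle the two parts separately, noting at the outset that both $u^*_{\Bon}$ and $u^*_{\BH}$ are deterministic, so the stated convergence ``in probability'' reduces to a deterministic statement about how these thresholds grow relative to $a_j$.

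\textbf{Part 1.} For $u^*_{\Bon} = F_\gamma^{-1}(\alpha/\E[\tilde{m}])$, I would begin from \eqref{eq:Em}, which gives $\E[\tilde{m}] = C_1 L$ for a positive constant $C_1$ depending only on the smoothing kernel and the noise moments. I would then upgrade the tail approximation \eqref{eq:pvapprox} into two-sided Gaussian tail bounds by applying Mills' ratio to each term of \eqref{eq:distr}: the $1-\Phi$ term and the $\Phi$ factor multiplying $\phi(u/\sigma_\gamma)$ can each be sandwiched between constant multiples of $\phi(u/\sigma_\gamma)$ for large $u$. Inverting these inequalities yields
\[
u^*_{\Bon} = \sigma_\gamma\sqrt{2\log L}\,(1+o(1)).
\]
The second step is to unpack the rate hypothesis of Theorem \ref{thm:strongFWER}(2). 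Since $\phi(x) = (2\pi)^{-1/2}e^{-x^2/2}$, the requirement $La_j\phi(Ka_j)\to 0$ for every constant $K$ is equivalent to $\log L = o(a_j^2)$, i.e.\ $\sqrt{\log L}/a_j \to 0$. Combining the two displays,
\[
\frac{u^*_{\Bon}}{a_j\,h_\gamma(0)} = \frac{\sigma_\gamma\sqrt{2\log L}\,(1+o(1))}{a_j\,h_\gamma(0)} \to 0,
\]
which is the claim.

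\textbf{Part 2.} For the BH threshold, the key observation is that $u^*_{\BH}$ is defined in \eqref{eq:FDRfixed-threshold} as the unique solution of
\[
F_\gamma(u) = \frac{\alpha A_1}{A_1 + \E[\tilde{m}_{0,\gamma};[0,1]]\,(1-\alpha)},
\]
whose right-hand side is a strictly positive constant in $(0,\alpha)$ depending only on $\alpha$, on the limiting signal density $A_1$, and on the noise moments through $\gamma$, but not on $L$ or on the amplitudes $a_j$. Since $F_\gamma$ is continuous and strictly decreasing on its support, $u^*_{\BH}$ is a fixed finite constant. Because $a_j\to\infty$ under the hypotheses of Theorem \ref{thm:FDRcontrol}, the ratio $u^*_{\BH}/[a_j h_\gamma(0)]$ trivially tends to $0$.

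\textbf{Main obstacle.} The only genuinely delicate step is the rigorous inversion of $F_\gamma$ in Part 1, where the heuristic in \eqref{eq:pvapprox} must be replaced with matching upper and lower bounds sharp enough to identify the leading $\sqrt{2\log L}$ term. This is standard but must be carried out with some care, so that the polynomial prefactors from Mills' ratio contribute only a $(1+o(1))$ multiplicative correction rather than changing the exponential rate. Once the inversion is in hand, the exponential separation between $\log L$ and $a_j^2$ encoded in the hypothesis $La_j\phi(Ka_j)\to 0$ does the remaining work, and Part 2 follows essentially by inspection.
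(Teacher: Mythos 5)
Your proposal is correct and follows essentially the same route as the paper: both rest on sandwiching $F_\gamma(u)$ between constant multiples of $\phi(u/\sigma_\gamma)$ for large $u$ (the paper's bounds \eqref{eq:F-lbound}--\eqref{eq:F-ubound}), on translating the hypothesis $La_j\phi(Ka_j)\to 0$ for all $K$ into $\log L = o(a_j^2)$, and, for part 2, on the observation that $u^*_{\BH}$ is a fixed finite constant independent of $L$ and $a_j$. The only cosmetic difference is that you invert the sandwich directly to get $u^*_{\Bon}=\sigma_\gamma\sqrt{2\log L}\,(1+o(1))$, whereas the paper first shows $F_\gamma(a_jh_\gamma(0))/F_\gamma(u^*_{\Bon})\to 0$ and then passes to logarithms; your arrangement is, if anything, slightly cleaner.
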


As $L$ and $a_j$, $j=1,\dots,m$, go to infinity, the deterministic Bonferroni threshold \eqref{eq:threshold1} goes to infinity as well because $\E[\tilde{m}]$ goes to infinity. Lemma \ref{lemma:fwer-fdr-threshold} states that $a_jh_\gamma(0)$ goes to infinity faster than  $u^*_{\Bon}$ does. In contrast, the asymptotic BH threshold \eqref{eq:FDRfixed-threshold} does not depend on $L$ but only on the asymptotically fixed proportion of false null hypotheses, $A_1$. Therefore it does not increase with $L$ but is asymptotically constant.

Note that the statements in Lemma \ref{lemma:fwer-fdr-threshold} are about deterministic thresholds, while the statements of Theorem \ref{thm:power} are about random thresholds. As in the proofs of the previous theorems, we use the fact that the gap between the deterministic and random thresholds for both the Bonferroni and BH procedures goes to 0.

It is known that in general, if there exists a signal anywhere, the power of the BH procedure is larger then the power of Bonferroni procedure \citep{Benjamini:1995}. This is also true in our case with respect to $u^*_{\Bon}$ and $u^*_{\BH}$. To see this, note that for any fixed and large enough $L$ the thresholds can be approximated by

\begin{equation}
\label{eq:appBon}
F_\gamma(u^*_{\Bon})  =  \frac{\alpha}{\E[\tilde{m}]} \approx \frac{\alpha}{\tilde{m}_{1, \gamma}+\E[\tilde{m}_{0, \gamma}]}
\end{equation}
and
\begin{equation}
\label{eq:appBH}
F_\gamma(u^*_{\BH})  = \frac{\alpha A_1}{A_1 + \E[\tilde{m}_{0, \gamma}; [0, 1]](1-\alpha)} \approx \frac{\alpha \tilde{m}_{1, \gamma}}{\tilde{m}_{1, \gamma} + (1-\alpha)\E[\tilde{m}_{0, \gamma}]}.
\end{equation}
It immediately follows that if $\tilde{m}_{1, \gamma} \ge 1$, the threshold $u^*_{\Bon}$ is larger than the threshold $u^*_{\BH}$, promising a larger power for the BH procedure.

\subsubsection{Optimal choice of $\gamma$}
\label{sec:optimal-gamma}
Here we discuss the best choice of
$\gamma$, i.e. the value of $\gamma$ that maximizes the power \eqref{eq:power}. To maximize the probability of local maxima to exceed a given threshold $u$ under the true model is a difficult problem. Therefore, we turn to less formal discussion in this section. 

Lemma \ref{lemma:unique-max} in Section \ref{App:FDR} shows that for any single true peak with mode at $t=0$, a local maximum exists in a small neighbourhood $I_{j, \eps}$ of $t=0$ with probability tending to 1. The power may be approximated as
\begin{equation}
\begin{aligned}
\label{eq:approx-power}
\Power(u) &= \P\left\{
\tilde{T} \cap S_j \ne \emptyset ~\text{and}~ \max_{t 
\in \tilde{T} \cap S_j} x_\gamma(t) > u \right\} \\ \nonumber
&= \P\left\{
\tilde{T} \cap I_{j, \eps} \ne \emptyset ~\text{and}~ \max_{t 
\in \tilde{T} \cap I_{j, \eps}} x_\gamma(t) > u \right\} \\ \nonumber
& \qquad \qquad +
 \P\left\{
\tilde{T} \cap (S_j \setminus I_{j, \eps}) \ne \emptyset ~\text{and}~ \max_{t 
\in \tilde{T} \cap (S_j \setminus I_{j, \eps})} x_\gamma(t) > u \right\} \\ \nonumber
&\approx \P\left\{
\tilde{T} \cap I_{j, \eps} \ne \emptyset ~\text{and}~ \max_{t 
\in \tilde{T} \cap I_{j, \eps}} x_\gamma(t) > u \right\}  \\ \nonumber
&\approx \P\left\{\max_{t 
\in I_{j, \eps}} x_\gamma(t) > u ~\mid~ \text{a local maximum exists in} ~ I_{j, \eps} \right\}. \nonumber
\end{aligned}
\end{equation}
Therefore, heuristically, the optimal value of $\gamma$ should be close to the value of $\gamma$ that maximizes $\P(x_\gamma(0)>u)$. This value of   $\gamma$    is approximately given by:
\begin{equation}
\label{eq:max-gamma}
\begin{aligned}
\argmax_{\gamma} \P(x_\gamma(0)>u) &=
\argmax_{\gamma}\Phi\left(\frac{a_j h_\gamma(0) - u}{\sigma_\gamma} \right)
= \argmax_{\gamma} \left( \frac{h_\gamma(0)}{\sigma_\gamma} - \frac{u}{a_j\sigma_\gamma}\right) \\
&\approx \argmax_{\gamma}\frac{h_\gamma(0)}{\sigma_\gamma} =
\argmax_{\gamma} \frac{\int_{-\infty}^\infty w_\gamma(s) h_b(s)\,ds}
{\sigma \sqrt{\int_{-\infty}^{\infty} w^2_\gamma(s)\,ds}},
\end{aligned}
\end{equation}
where $\sigma$ and $\sigma_\gamma$ are the standard deviations of the observed and smoothed processes $y(t)$ and $x_\gamma(t)$, respectively. The approximation may be justified for both procedures by Lemma \ref{lemma:fwer-fdr-threshold}.
In the last expression in \eqref{eq:max-gamma}, the optimal value of $\gamma$ is that which makes $w_\gamma(t)$ closest to $h_b(t)$ in an $L^2$ sense. In particular, if $w(t) = h(t)$, then the optimal value of $\gamma$ is $b$. This result is similar to the well-known matched filter theorem for detecting a single signal peak of known shape at a fixed time $t$.

\begin{example}[{\bf Gaussian autocorrelation model}]
\label{ex:Gaussian-gamma-choice}
Suppose $h(t) = w(t) = \phi(t)$, the standard normal density, 
and assume that the autocovariance function of the noise
process $z(t)$ is proportional to a normal density as in Example \ref{ex:Gaussian-ACVF}.
In this case
\begin{equation}
\label{eq:optimal-gamma}
\argmax_\gamma \frac{h_\gamma(0)}{\sigma_\gamma} = \begin{cases}
\sqrt{b^2 - 2 \nu^2}, & \nu < b/\sqrt{2} \\
0, & \nu > b/\sqrt{2}
\end{cases}
\end{equation}
We show in the simulations below that the optimal $\gamma$ is indeed close to the one obtained by eq. \eqref{eq:optimal-gamma}.
\end{example}

\section{Simulation Studies}
\label{s:sim}
\subsection{Performance for finite space and SNR}
\label{sec:sim1}
A simulation study was carried out to investigate the FDR and FWER levels as well as the power of the proposed algorithm based on the Bonferroni and the BH procedures for a finite range $L$ and for a finite SNR. 
Our simulation study is based on the model described in Section \ref{sec:model} and Example \ref{ex:Gaussian-ACVF}, with 20 equally spaced and non overlapping peaks of the same height. The peak shape, $h(t)$, is the normal density with standard deviation $b=3$, truncated at $c=\pm 2b$. Peaks were centered at $\tau_j = 100j-50$, for $ j=1,\ldots ,20$ and  $a_j$ in \eqref{eq:mu} for all $j$ is chosen to be 10 and 15 in two different scenarios, giving a moderate and a strong SNR respectively. The noise is a stationary zero-mean Gaussian process as defined in Example \ref{ex:Gaussian-ACVF} with $\sigma=1$ and $\nu = 0,1$ and 2.   We sampled the time axis at $L=2000$ equally spaced points. Figure \ref{fig:Data} presents a fragment of the simulated data. Note that the height of the peaks is $a_j/\sqrt{2\pi b^2}$.

\begin{figure}[t]
\begin{center}
     \includegraphics[width=12cm]{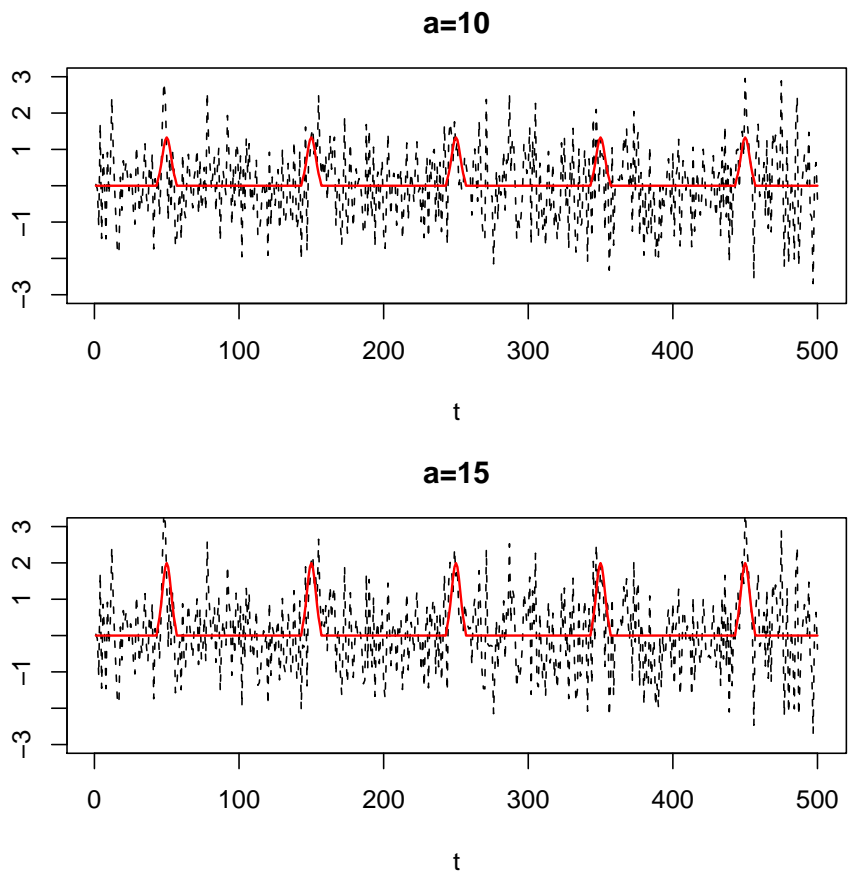}
\end{center}
\caption{ \label{fig:Data} A fragment of the simulated data.}
\end{figure}

Following Procedure \ref{sec:proc}, we constructed the process $x_\gamma(t) = y(t)*w_\gamma(t)$, where $w_\gamma(t)$ is a Gaussian kernel and $\gamma$ was equally spaced between 1 to 6.5 in steps of 0.5. For each configuration we computed the power and the error level based on 10,000 replications. The FWER level was computed as the proportion of replications for which at least one false discovery occurred. The FDR level was computed as the mean proportion of false discoveries out of the total number of discoveries. The power is presented as the mean number of truly discovered peaks out of the 20 peaks.  

\begin{figure}[t]
\begin{center}
   \includegraphics[width=10cm]{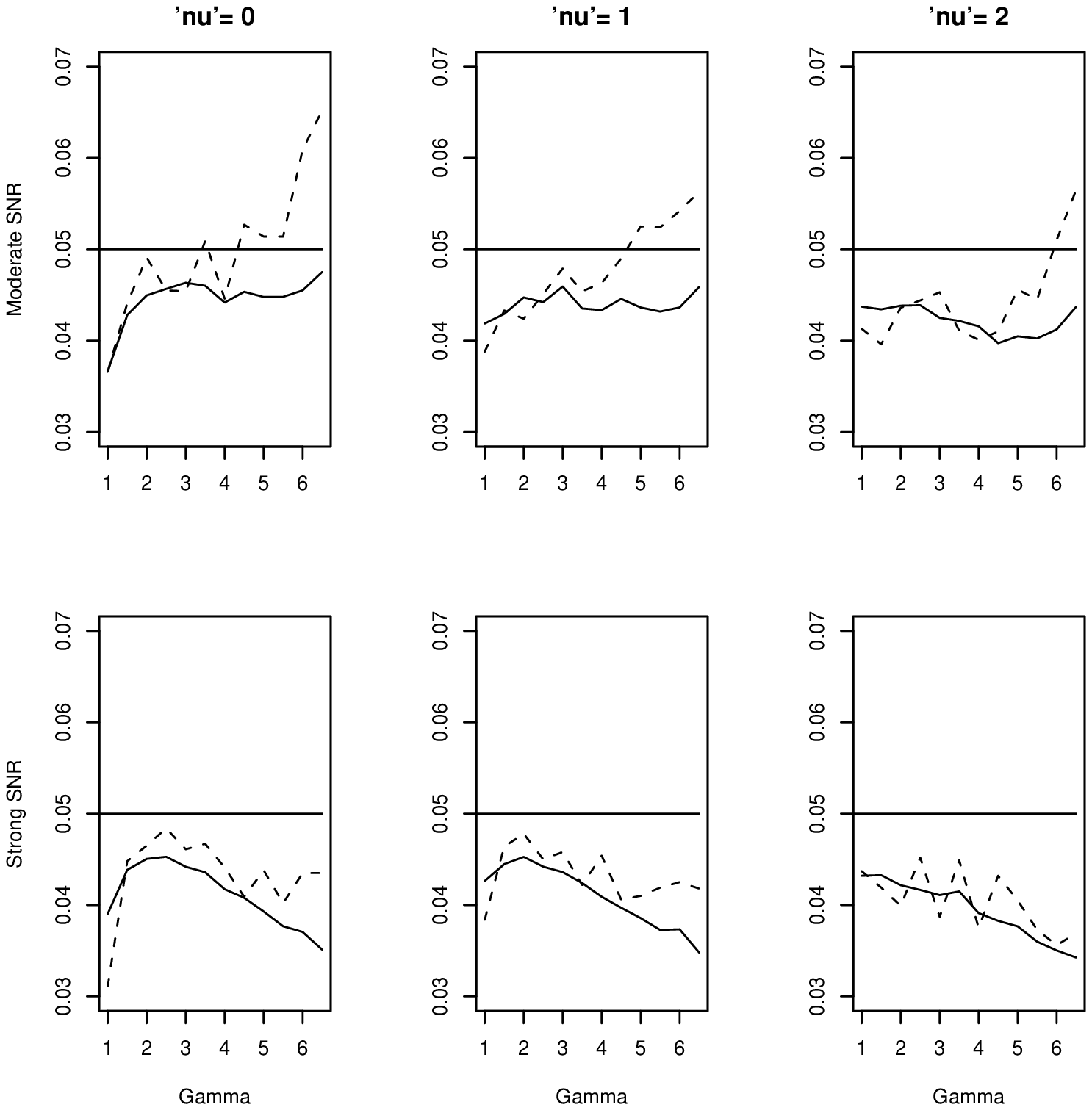}
 \caption{ \label{fig:err} FDR level of the BH procedure (solid), FWER level of the Bonferroni procedure (dashed).}
 \end{center}
 \end{figure}

Figure \ref{fig:err} presents the FWER and FDR levels of the Bonferroni and the BH procedures for all the studied configurations. The nominal level for both procedures was set to 0.05. For relatively large values of $\gamma$ the Bonferroni procedure may exceed the prespecified error level. This happens due to the broadening of the signal: many of the rejected local maxima correspond to real peaks, but due to large value of $\gamma$, the modes are shifted away from the true mode locations to the transition region, and the discoveries are no longer considered as correct. FDR is less affected because FDR is an expected ratio, so larger $\gamma$ also results in fewer discoveries in the transition region. The effect of the broadening of the signal is less for both procedures for larger SNR. Note that for any finite SNR one can choose large enough $\gamma$, so the phenomenon of the broadening of the signal will lead to exceedance of the error rate. Recall, however, that asymptotically there should be no local maxima in the transition region.
There is no strong dependency on SNR and on the noise autocorrelation parameter $\nu$.

\begin{figure}[t]
\begin{center}
   \includegraphics[width=10cm]{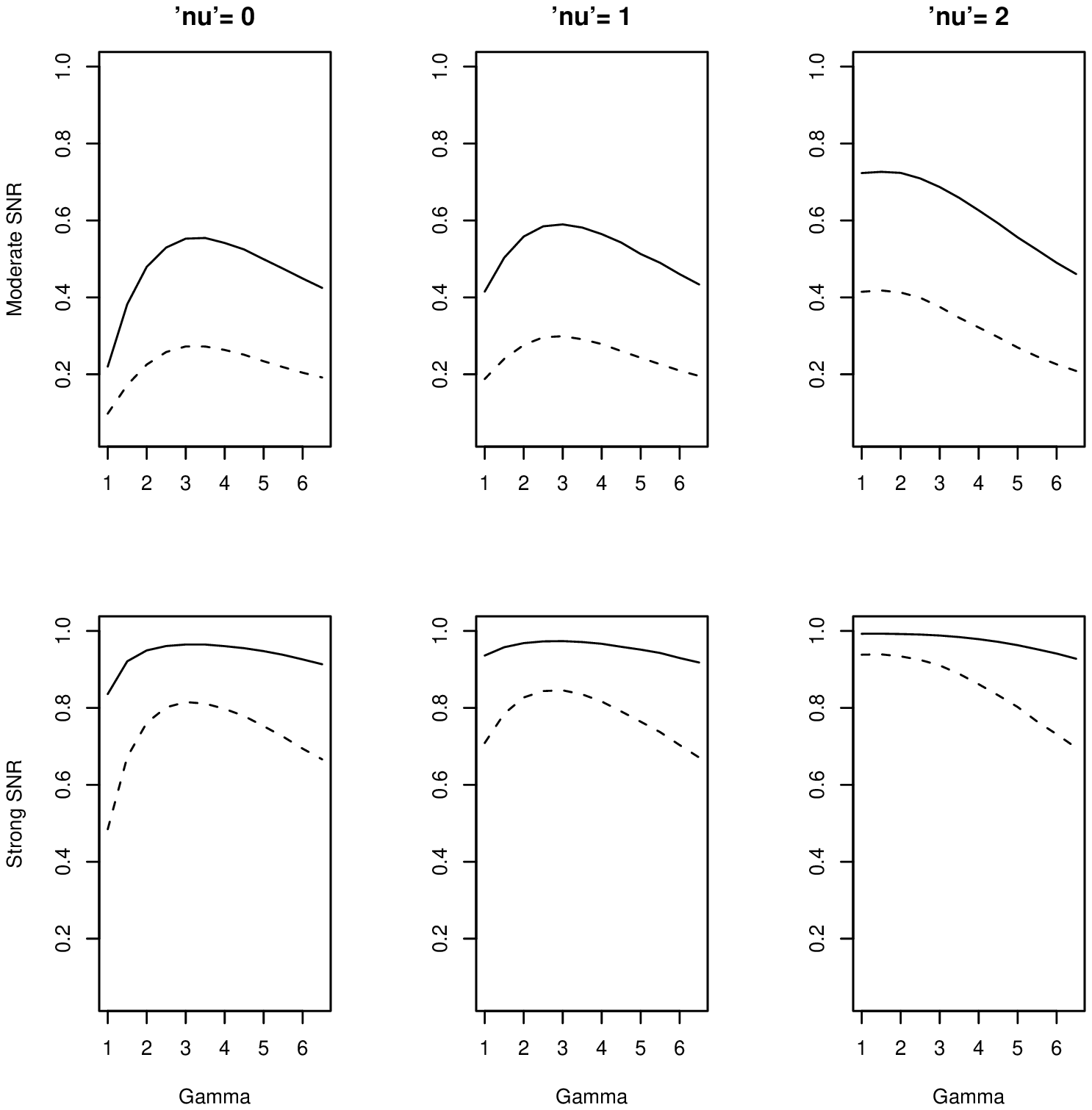}
 \caption{  \label{fig:paperPWR} Power of the BH (solid) and Bonferroni (dashed) procedures. }
\end{center}
\end{figure}

Figure \ref{fig:paperPWR} presents the power of the Bonferroni and the BH procedures, respectively, for all studied configurations. As expected, (1) the power of the BH procedure is greater than that of the Bonferroni procedure for all studied configurations; (2) the power of both procedures is greater for larger SNR. For large SNR the power of the BH procedure is almost a flat function of $\gamma$. This means that for large SNR the selection of $\gamma$ is not very important as long as it is near the signal width $b$. 
 
Figure \ref{fig:paperPWR}  shows that there is one value of $\gamma$ in each configuration for which the power is maximized. The optimal value is usually around $b=3$, but it depends on the parameter $\nu$. The larger $\nu$, the smaller the value of the optimal $\gamma$ as expected from \eqref{eq:optimal-gamma}. Note that the error level is always controlled if one selects the $\gamma$ value close to the optimal one. To get more precise results about the optimal value of $\gamma$ we performed an additional simulation study where $\gamma$ takes values in the range 1 to 3.5 in steps of 0.1. The empirical optimal values of $\gamma$, as well as the optimal $\gamma$  from equation \eqref{eq:optimal-gamma}, are summarized in Table \ref{t:gamma}. 

\begin{figure}[t]
\begin{center}
   \includegraphics[width=10cm]{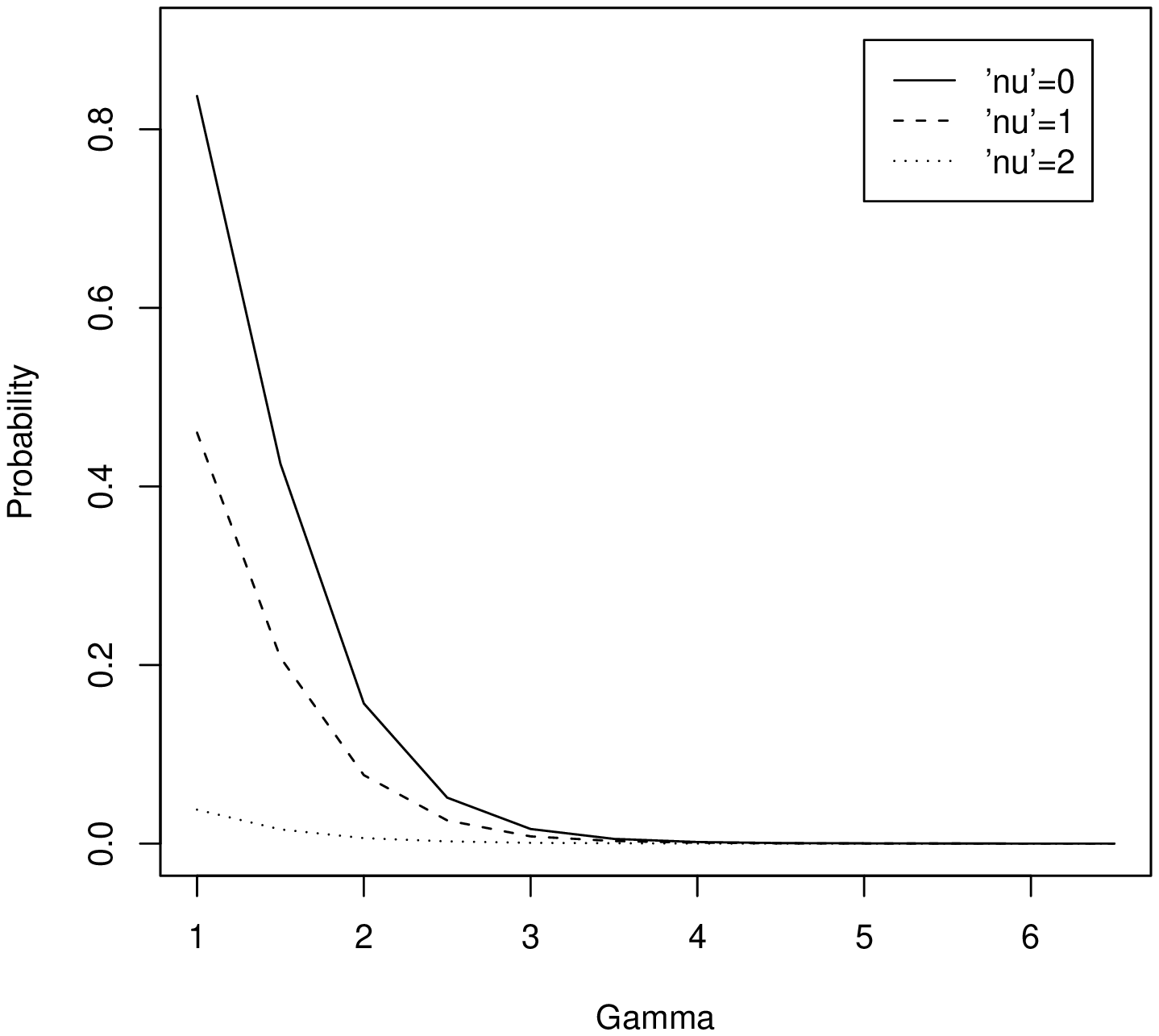}
\caption{   \label{fig:prob1} Probability of obtaining more than one local maximum within the same true peak for moderate SNR ($a=10$).}
\end{center}
\end{figure}

\begin{table}[t]
\begin{center}
\begin{tabular}{l||l|l|l||l|l|l||} \hline
\multicolumn{1}{c}{} & \multicolumn{3}{c}{$a_j=10$} &
\multicolumn{3}{c}{$a_j=15$ } \\
\hline
 $\nu = $ & 0 & 1 & 2 & 0 & 1 & 2  \\ \hline
 equation \eqref{eq:optimal-gamma} & 3.0 & 2.8 & 1.0 &3.0 & 2.8 & 1.0 \\ \hline
 Bon	& 3.3	& 2.8&	1.3 &	2.9	& 2.9	& 1.3 \\ \hline
 BH	& 3.2 &	3.1	& 1.3	& 3.4	& 3.0	& 1.2  \\ \hline
\end{tabular} 
\caption{\label{t:gamma} The `optimal' value of $\gamma$.}
\end{center}
\end{table}

As explained in Section \ref{sec:FDR-control}, each peak is represented by no more than one local maximum with probability tending to 1 as the SNR increases. In our simulation we computed the probability of obtaining more than one local maximum within the same peak in the finite setting. Figure \ref{fig:prob1} shows that this probability is a decreasing function of $\gamma$, and a decreasing function of the SNR. Note that near the 'optimal' value of $\gamma$ this probability is negligible.

\subsection{Overlapping peaks}
In the second part of our simulation we test the performance of our algorithm for a non-sparse signal. In the previous simulation, the distance between two adjacent peak locations $\tau_j$, which we denote by $D$, was set to 100. We now test how the distance between the peaks $D$ affects the power and the error level of both procedures. 

\begin{figure}[t]
\begin{center}
   \includegraphics[width=10cm]{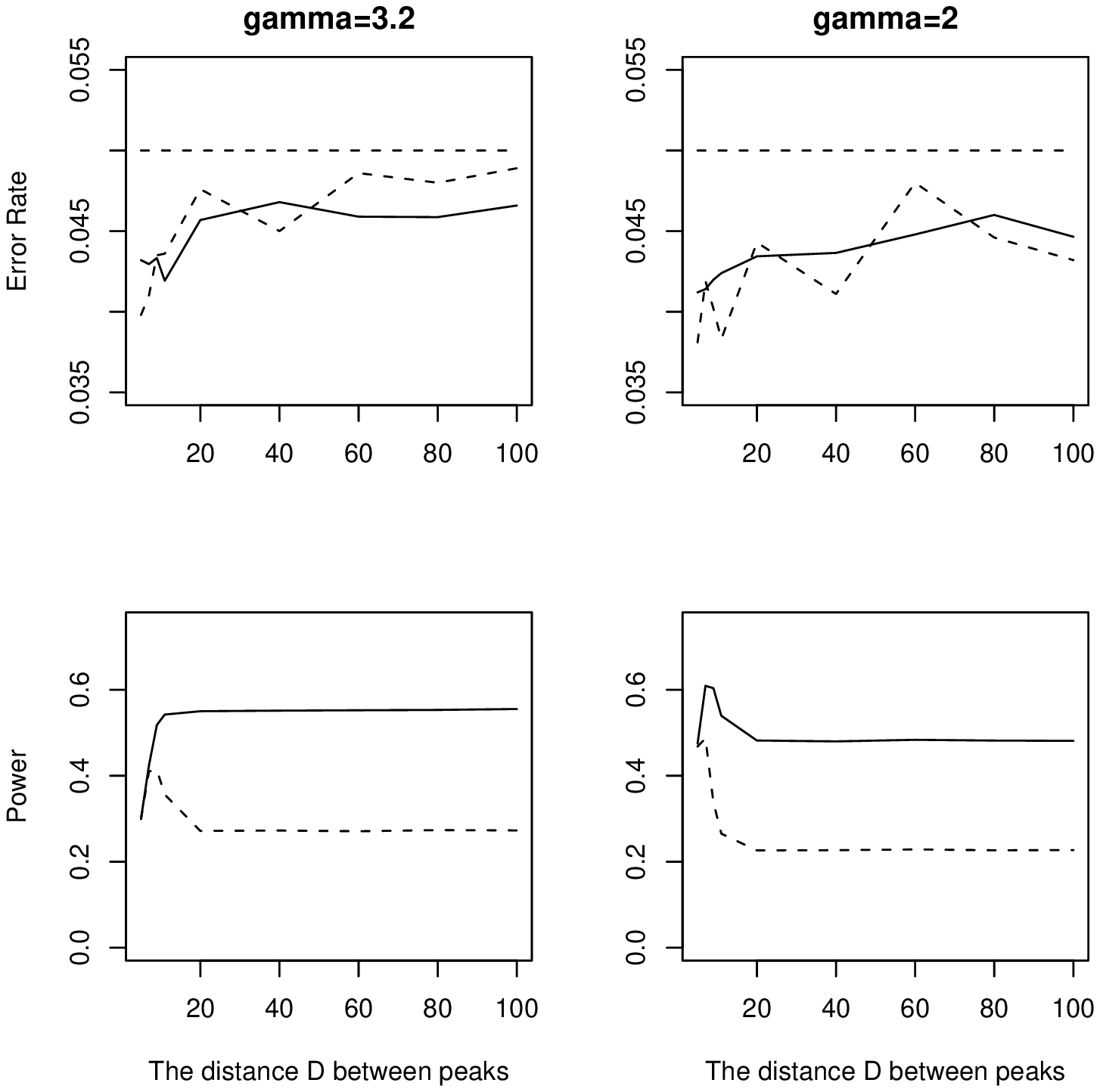}
\caption{ \label{fig:D} Effect of distance between peaks $D$ on the BH procedure (solid) and the Bonferroni procedure (dashed). }
\end{center}
\end{figure}

For the case where $a_j=10,~ \nu=0$ and $\gamma=3.2$ (which is optimal for the Bonferroni procedure) and $\gamma=2.0$, we reduce $D$ and $L$ while keeping fixed the proportion between the signal region, $\mathbb{S}_1$ and the total tested region, $\mathbb{S}$. The first row in Figure \ref{fig:D} presents the error levels of the Bonferroni and BH procedures for different values of $D$. Recall that the support of each peak has a length of 13 ($\tau \pm 2b$), so peaks start to overlap for $D<4b=12$. It can be seen that the error levels of both procedures are controlled regardless of the value of $D$.

Overlapping of peaks requires care of the definition of power. In this section the power was defined as the number of truly rejected peaks, where `truly' means that the rejected local maximum belongs to the peak support. Since any rejected local maximum must represent just one peak, we  divide the overlapping 
part of the support to two equal parts and add  the left half to the rejection region of the peak from the left and the right half  to the rejection region of the peak from the right. Thus, for any single peak the rejection region consists of the non-overlapping part of the support (in the middle of the peak) and half of each one of the overlapping parts. This makes the rejection region smaller than the original support. 
 
The power of both procedures almost does not change when $D$ is reduced until there is overlap in the supports; see the second row in Figure \ref{fig:D}. Because peaks in overlapping positions are combined, the number of local maxima in the $\mathbb{S}_1$ region decreases. However, these local maxima become more significant and are usually rejected by both the Bonferroni and the BH procedures. In the configuration presented here  for $\gamma=3.2$, for moderate overlapping (around 50\%)  20 peaks are represented on average by 8 local maxima, while for non-overlapping supports in the same configuration 20 peaks are presented by 19.5 local maxima.  In the latter case the mean number of rejections by the Bonferroni procedure is around 5 and the mean number of rejections by the BH procedure is around 11. This explains why for moderately overlapping supports the power of the BH procedure decreases and the power of the Bonferroni procedure increases. When peaks overlap by more than $50\%$ the power of both procedures decreases. For $\gamma=2$ the behaviour of power is the same. The power is relatively low before overlapping; therefore it increases for moderate overlapping and falls down for large overlapping. The last panel in Figure \ref{fig:D} shows that when peaks are close or have little overlap, small values of $\gamma$ perform better.

\section{Data Example}
\label{s:data}
\subsection{Data description and analysis}
\label{s:data1}
Our data consists of 60 seconds of recordings from an electrode attempting to capture the activity of a single type of neuron in a salamander brain. The recorded signal was digitized at a sampling frequency of 10 KHz, resulting in a sequence of 600,000 measurements equally spaced over time. Data of these kind and over much longer time periods are routinely collected in neuroscience experiments \citep{Baccus:2002}.

This is a situation where the model of Section \ref{sec:model} can be applied directly. Since all peaks of interest correspond to action potentials from the same type of cell, it is reasonable to assume that they all have the same shape. Their intensities, however, vary according to the distances of the cells to the recording electrode. The depolarization (positive) component of the action potential is unimodal. The noise is a mixture of electrical noise and recording of remote cells, exhibiting roughly a large-scale stationary behaviour. Favourably, the SNR is high. For illustration, the (smoothed) data is depicted in Figure \ref{fig:spike}.

\begin{figure}
\begin{center}
   \includegraphics[width=11cm,height=9cm,totalheight=10cm]{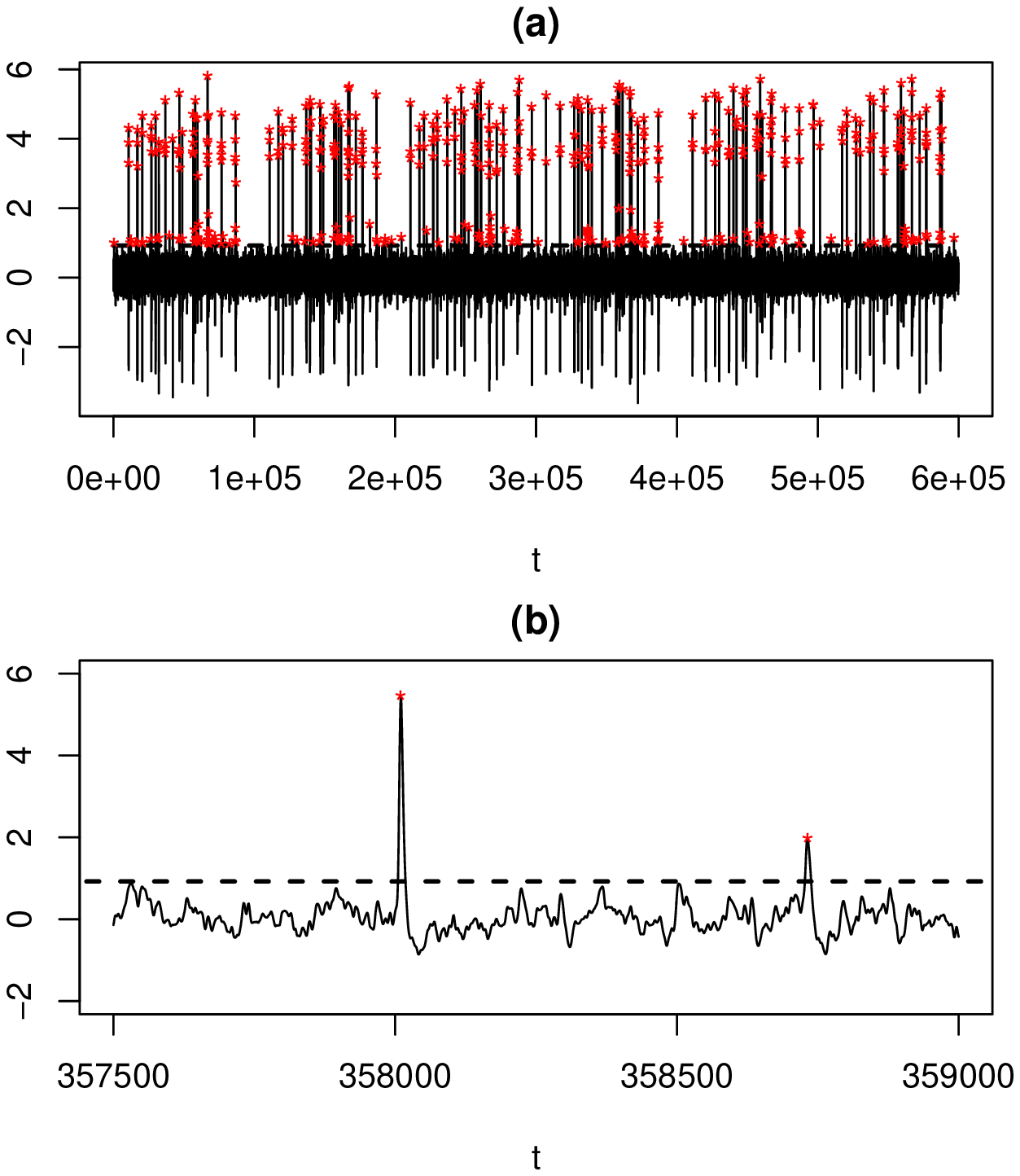}
 \caption{\label{fig:spike} (a) The neural spike data after smoothing with a Gaussian kernel of standard deviation 1.5. (b) Zoom in. In both panels, the stars indicate the detected peaks according to the BH procedure. The dashed line is the BH threshold for rejecting the null hypothesis.}
\end{center}
\end{figure}

Procedure \ref{alg:proc} was implemented as follows. For Step 1, we used a Gaussian kernel as in Example \ref{ex:Gaussian-ACVF}. From the results of Section \ref{s:sim}, the choice of kernel width $\gamma$ is not crucial but it is better if it roughly matches the width of the signal peaks. Since all peaks are assumed to have the same shape, we selected a few `obvious' peaks and estimated their width. Figure \ref{fig:peak-shape} shows that a scaled normal density with standard deviation of 1.5 approximates the peaks' shape reasonably well. 

\begin{figure}
\begin{center}
   \includegraphics[width=8cm]{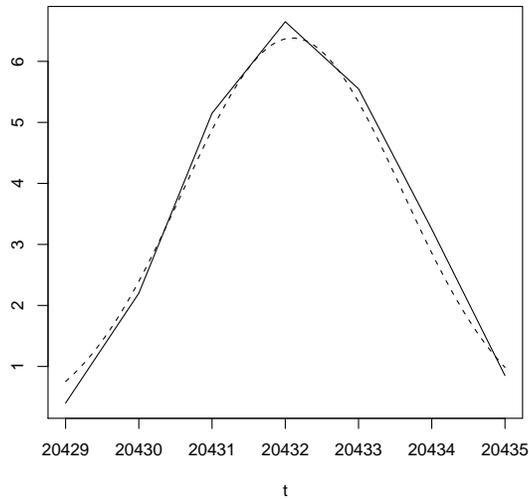}
 \caption{\label{fig:peak-shape} One `obvious' peak (solid) superimposed on a scaled normal density with standard deviation 1.5 (dashed).}
\end{center}
\end{figure}

In Step 2, $\tilde{m}=54,452$ local maxima were found. For the heights of these local maxima, the corresponding p-values in Step 3 were computed according to formula \eqref{eq:Gaussian-ACVF}, plugging in estimates of the spectral moment parameters $\sigma^2$, $\lambda_2$ and $\lambda_4$ as described below.

For Step 4, we applied the BH procedure with $\tilde{m}=54,452$ and level $q=0.01$, leading to 464 rejections of the null hypothesis. Figure \ref{fig:spike} shows the smoothed data and the BH threshold. While many of the peaks could have been found by simple inspection, thanks to the high SNR, the algorithm detected many other weaker peaks. Figure \ref{fig:spike}(b) shows a zoomed segment of the data, in which two quite different peaks were detected, one strong and one weak.

\subsection{Estimation of spectral moments}
Here we describe how to estimate the noise parameters $\sigma^2$, $\lambda_2$ and $\lambda_4$ using a simple method motivated by \eqref{eq:moments}. It is known that the variance of an ergodic process can be consistently  estimated by the sample variance. This suggests estimating $\sigma^2$, $\lambda_2$ and $\lambda_4$ respectively by the sample variance of the process, the difference process (as an approximation to the derivative), and the difference squared process (as an approximation to the second derivative). However, the smoothed observed process $x_\gamma(t)$ is expected to contain some signal and not just noise, biasing the sample variance estimators. To reduce the bias, we replace the standard sample variance by a nonparametric estimator that is less sensitive to extreme values, specifically
\begin{equation}
\label{eq:mad}
\begin{aligned}
\hat{\sigma}^2 &= \mad^2\left[x_\gamma(t)\right] = \med^2\left\{ \left| x_\gamma(t_i) - \med\left[x_\gamma(t)\right]\right|\right\}, \\
\hat{\lambda}_2 &=\mad^2\left[\Delta x_\gamma(t)\right], \\
\hat{\lambda}_4 &=\mad^2\left[\Delta^2 x_\gamma(t)\right],
\end{aligned}
\end{equation}
where $\Delta x(t)$ for any discretely sampled vector $x(t)=\{x(t_i)\}_{i=1}^L$ is the sequence of differences $\{[x(t_{i+1})-x(t_i)]/(t_{i+1}-t_i)\}_{i=1}^{L-1}$ . These estimators are expected to perform well if the signal is sparse in the sense that it occupies only a small portion of the data.

To evaluate the accuracy of the proposed estimators, we performed a brief simulation in which we compared their performance with the performance of three additional estimation methods. The first method estimates the spectral moments by standard sample variances. The second method is based on the fact that for a stationary zero-mean Gaussian process $z(t) \in \mathcal{C}^2$ with autocovariance function (ACF) $c(s) = \E[z(t) z(t+s)]$, the spectral moments satisfy $\sigma^2 = c(0)$, $\lambda_2 = -\ddot{c}(0)$ and $\lambda_4 = c^{(4)}(0)$. If $\hat{c}(s)$ is the empirical ACF of the noise process, we fit a polynomial regression $\hat{c}(s) = \beta_0 + \beta_2 s^2 + \beta_4 s^4 +\eps$ in the neighbourhood of $s=0$, where even powers of $s$ suffice because the ACF is symmetric around zero. Computing the derivatives of the polynomial at $s=0$ leads to the estimators $\hat{\sigma}^2 = \hat{\beta}_0$, $\hat{\lambda}_2 = -2\hat{\beta}_2$ and $\hat{\lambda}_4 = 24\hat{\beta}_4$. The third method estimates $\sigma^2$ as in \eqref{eq:mad} and $\lambda_2$ by the `crossing' estimator suggested by \citet{Lindgren:1974},
$$
\hat{\lambda}_2 = \hat{\sigma}^2 \frac{2\pi}{T}\frac{1}{3}\left( N_0(T) + \exp{\frac{u^2}{2}}N_u(T) + \exp{\frac{u^2}{2}}N_{-u}(T) \right),
$$
where $u=2\hat{\sigma}/3$ and $N_u(T)$ is the number of upcrossings of the process $x_\gamma(t)$ of the level $u$ in $[0, T]$. The estimation of $\lambda_4$ is done in the same way from the process of differences $\Delta x_\gamma(t)$.

In the simulation, two discrete sequences of length 10,000 were generated. The first one contained only white Gaussian noise and the second one contained white Gaussian noise plus 17 equally shaped peaks with heights all equal to 2, corresponding to high SNR. Mimicking the data, both sequences were smoothed with a Gaussian kernel of standard deviation 1.5. The exact moments $\sigma^2$, $\lambda_2$ and $\lambda_4$ were computed via \eqref{eq:Gaussian-moments}. The estimated moments by all the methods were averaged over 2000 replications. Table \ref{tbl:moments-est} summarizes the results.  

\begin{table}
\begin{center}
\text{Noise only} \\
\begin{tabular}{|l|l|l|l|l|l|} \hline
& Formula \eqref{eq:Gaussian-moments} & MAD & Var & ACF & Lindgren\\ \hline
$\sigma^2$ & 0.188 & 0.188 (0.007) & 0.188 (0.005) & 0.160 (0.005) & 0.188 (0.007)\\ \hline
$\lambda_2$ & 0.042 & 0.040 (0.001) & 0.040 (0.001) & 0.014 (0.0004) & 0.032 (0.001)\\ \hline
$\lambda_4$ & 0.009 & 0.023 (0.001) & 0.023 (0.001) & 0.002 (0.0001)& 0.016 (0.001) \\ \hline
\end{tabular} 
\\
\vspace{3 mm}
\text{Noise plus sparse signal} \\
\begin{tabular}{|l|l|l|l|l|l|} \hline
& Formula \eqref{eq:Gaussian-moments} & MAD  & Var  & ACF  & Lindgren\\ \hline
$\sigma^2$ & 0.188 & 0.193 (0.007) & 0.201 (0.005) & 0.171 (0.005) & 0.193 (0.007)\\ \hline
$\lambda_2$ & 0.042 & 0.040 (0.001) & 0.041 (0.001) & 0.015 (0.005) & 0.032 (0.001)\\ \hline
$\lambda_4$ & 0.009 & 0.024 (0.001) & 0.024 (0.001) & 0.002 (0.0001)& 0.016 (0.001) \\ \hline
\end{tabular} 
\caption{\label{tbl:moments-est} Simulation results: exact and estimated spectral moments by four different methods. Standard deviations are in parentheses.}
\end{center}
\end{table}

Returning to the data, after subtracting the overall mean of 0.0537, the obtained estimates via \eqref{eq:mad} were $\hat{\sigma}^2 = 0.0587$, $\hat{\lambda}_2 = 0.0019$ and  $\hat{\lambda}_4 = 0.0006$. P-values were then computed for the mean-subtracted data via \eqref{eq:distr} using these estimated moments, leading to the results described earlier in Section \ref{s:data1}.

\section{Discussion}
\label{s:discussion}

In this paper, we have used the heights of local maxima after smoothing as test statistics for identifying unimodal peaks in the presence of Gaussian stationary noise. It was shown that the procedure provides strong control of FWER and FDR asymptotically as both the SNR and length of the sequence tend to infinity, with the length of the sequence allowed to grow exponentially faster than the SNR. Simulations showed that the algorithm is powerful and that a matched filter principle applies where the optimal smoothing bandwidth is close to the width of the peaks to be detected.

The most critical assumptions for the theoretical results presented are that the noise process is stationary ergodic Gaussian and that the signal peaks have equal shape and are unimodal with compact support. The Gaussianity assumption was chosen because it enabled using a closed formula for computing the p-values associated with the heights of local maxima. For non-Gaussian noise, p-values could be computed via simulation and we expect that the error control properties should be preserved, although this does not follow directly from the proofs presented here.

The assumption of compact support for the signal peaks is necessary for the concept of true and false detection to be well defined. The unimodality assumption makes local maxima good representatives of true peaks. This is formally true asymptotically for high SNR, as the probability that a true peak is represented by one and only observed local maximum tends to one. Besides its practical interpretation, this property is also helpful technically in the proof of FDR control. We do not find the assumption of asymptotically high SNR restrictive in the sense that the search space is allowed to grow exponentially faster. Another way of seeing this is that the SNR need only grow logarithmically with respect to the search space. It was shown in the simulations that moderate SNR suffices for good performance.

The assumption that the peaks have equal shape is technically convenient as it allows reducing the asymptotic analysis of many peaks to the analysis of any single one of them. It also simplifies the concept of an optimal bandwidth, as it is the same for all peaks. This assumption is also a realistic one in many practical situations, such as the neural spikes example presented.

Notice that there is no assumption of sparsity of peaks in the theoretical results. The simulations showed that the error rates and power do not suffer much even if the peaks have some overlap. Sparsity is not needed as long as the spectral moments of the noise process are known. However, sparsity is needed so that these moments can be estimated from data, as suggested in the data analysis section.

A technical issue to consider in practice is that the theory was developed for continuous processes, while in simulations and real data the observations are obtained in a regular discrete grid. The theoretical results carry through approximately if the grid is fine enough, but break down if the smoothing parameter $\gamma$ is less than the grid spacing. As it is well known in signal processing, sampling before smoothing, inevitable in practice in Step 1 of the procedure, introduces aliasing. Convolution of a sampled discrete sequence with a sampled discrete kernel is not equivalent to sampling the convolution of a continuous process with a continuous kernel, and the approximation gets worse as the smoothing parameter $\gamma$ gets close to 1. For this reason, we did not include values of $\gamma$ between 0 to 1 in our simulation study, and we generally do not recommend to smooth using a value of $\gamma$ that is too close to the grid spacing, since for that case the theory is not valid. 

From a broad perspective, we see the methods presented in this paper not only as a solution to the peak detection problem but as an extension of multiple testing paradigms to temporal and spatial domains. While FWER methods for random fields and have been well established, particularly in neuroimaging \citep{Worsley:2004}, similar extensions of FDR methods have proven to be more difficult \citet{Pacifico:2004,Pacifico:2007}. Other related approaches include testing for a spatial signal in the wavelet transform domain \citep{Shen:2002} and FDR for pre-defined spatial clusters \citep{Heller:2007}. Standard FDR methods can be applied in a discretized spatial domain \citep{Genovese:2002} but are inherently designed for discrete units and ignore the spatial structure of the data. Instead, it has been argued by \citet{Chumbley:2009} that in the case of smooth spatial signals, inference should be about topological features, such as cluster volume or peak height. These authors and others \citep{Zhang:2009} have focused on cluster volume. Our worked has focused on peak height.

Potential extensions of this work include exploration of the role of sparsity in the estimation of the noise parameters, as well as adapting the procedure to situations where the observed process is not Gaussian, where the peaks do not have a constant width, or where the domain is two- or three-dimensional, as in medical image analysis.

\section{Proofs}
\label{s:proofs}

\subsection{Gaussian autocorrelation model}
\label{App:2}
\begin{lemma}
\label{lemma:Gaussian}
Let $w_\nu(t) = (1/\nu) \phi(t/\nu)$, where $\phi(t)$ is the standard
normal density.
\begin{enumerate}
\item For $\gamma, \nu > 0$,
$$
w_\gamma(t) * w_\nu(t) = w_\xi(t), \qquad
\text{with} ~~ \xi = \sqrt{\gamma^2 + \nu^2}.
$$
\item Let $\dot{w}_\xi(t)$ and $\ddot{w}_\xi(t)$ denote the first and
second derivatives of $w_\xi(t)$ with respect to $t$. Then
$$
\int_{-\infty}^\infty \big[w_\xi(t)\big]^2\,dt = 
\frac{1}{2\sqrt{\pi}\xi}, \qquad
\int_{-\infty}^\infty \big[\dot{w}_\xi(t)\big]^2\,dt = 
\frac{1}{4\sqrt{\pi}\xi^3}, \qquad
\int_{-\infty}^\infty \big[\ddot{w}_\xi(t)\big]^2\,dt = 
\frac{3}{8\sqrt{\pi}\xi^5}.
$$
\end{enumerate}
\end{lemma}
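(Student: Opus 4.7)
Both parts of Lemma \ref{lemma:Gaussian} are purely computational facts about the standard Gaussian density, and my proof will consist of a standard Fourier-transform argument for part (1) and a change of variables plus classical Gaussian moment integrals for part (2).

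For part (1), the cleanest route is to work on the Fourier side. The Fourier transform of the rescaled standard normal density $w_\gamma(t)=(1/\gamma)\phi(t/\gamma)$ is the Gaussian $\hat w_\gamma(\omega)=\exp(-\gamma^2\omega^2/2)$; a convolution in $t$ becomes a product in $\omega$, so $\widehat{w_\gamma * w_\nu}(\omega)=\exp(-(\gamma^2+\nu^2)\omega^2/2)=\hat w_\xi(\omega)$ with $\xi=\sqrt{\gamma^2+\nu^2}$. Inverting the Fourier transform gives $w_\gamma*w_\nu=w_\xi$. (Alternatively one can complete the square directly in the exponent of $\int \phi((t-s)/\gamma)\phi(s/\nu)\,ds/(\gamma\nu)$, but this is more tedious than it is instructive.)

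For part (2), I would reduce every integral to a standard moment of $e^{-u^2}$ via the substitution $u=t/\xi$. Using $\dot\phi(x)=-x\phi(x)$ and $\ddot\phi(x)=(x^2-1)\phi(x)$ gives
\begin{equation*}
w_\xi(t)=\tfrac{1}{\xi}\phi(t/\xi),\qquad \dot w_\xi(t)=-\tfrac{t}{\xi^3}\phi(t/\xi),\qquad \ddot w_\xi(t)=\tfrac{1}{\xi^3}\bigl((t/\xi)^2-1\bigr)\phi(t/\xi).
\end{equation*}
Squaring and substituting produces factors of $\xi^{-1}$, $\xi^{-3}$, $\xi^{-5}$ respectively, multiplied by integrals of $u^k\phi(u)^2=(2\pi)^{-1}u^k e^{-u^2}$ for $k=0,2$ and $k\in\{0,2,4\}$. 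These collapse via the standard identities $\int e^{-u^2}du=\sqrt{\pi}$, $\int u^2 e^{-u^2}du=\sqrt{\pi}/2$, $\int u^4 e^{-u^2}du=3\sqrt{\pi}/4$ to the three constants $1/(2\sqrt{\pi})$, $1/(4\sqrt{\pi})$ and $3/(8\sqrt{\pi})$, yielding the three claimed values.

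There is essentially no obstacle here: both parts are elementary Gaussian bookkeeping. The only points requiring a little care are (i) justifying the Fourier inversion in part (1), which is immediate since all functions involved are Schwartz, and (ii) keeping track of the powers of $\xi$ in part (2) after the substitution $u=t/\xi$, since each derivative with respect to $t$ produces an extra factor of $1/\xi$ and the Jacobian of the substitution contributes one factor of $\xi$.
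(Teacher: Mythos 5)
Your proposal is correct, and for part (2) it is essentially the paper's own argument: the paper writes $w_\xi^{(k)}(t)=\frac{(-1)^k}{\xi^{k+1}}H_k(t/\xi)\phi(t/\xi)$ with the Hermite polynomials $H_0=1$, $H_1=x$, $H_2=x^2-1$, which is exactly the explicit derivative bookkeeping you carry out, followed by the same reduction $\phi^2(t/\xi)=\frac{1}{\sqrt{2\pi}}\phi(\sqrt{2}\,t/\xi)$ and Gaussian moment identities. The only genuine (and minor) difference is in part (1): the paper observes that $w_\gamma * w_\nu$ is the density of $X+Y$ for independent $X\sim N(0,\gamma^2)$, $Y\sim N(0,\nu^2)$, whereas you multiply Fourier transforms; both are standard one-line justifications of the same semigroup property, and nothing downstream depends on the choice.
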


\begin{proof}
\hfill\par\noindent
\begin{enumerate}
\item
Let $X \sim N(0,\gamma^2), Y \sim N(0,\nu^2)$ with respective
densities $w_\gamma(t)$ and $w_\nu(t)$. Then $X + Y \sim
N(0,\gamma^2+\nu^2)$ with density $w_\gamma(t) * w_\nu(t) =
w_\xi(t)$, $\xi = \sqrt{\gamma^2 + \nu^2}$.

\item
Let $w_\xi^{(k)}(t)$ denote the $k$-th derivative of $w_\xi(t)$
and let $H_k(t)$ denote the $k$-th Hermite polynomial. Then
$$
\int_{-\infty}^\infty \big[w_\xi^{(k)}(t)\big]^2\,dt = 
\int_{-\infty}^\infty \left[\frac{(-1)^k}{\xi^k}
H_k\left(\frac{t}{\xi}\right) \frac{1}{\xi}
\phi\left(\frac{t}{\xi}\right)\right]^2\,dt =
\frac{1}{\xi^{2k+2}} \int_{-\infty}^\infty
H_k^2\left(\frac{t}{\xi}\right)
\phi^2\left(\frac{t}{\xi}\right)\,dt.
$$
But
$$
\phi^2\left(\frac{t}{\xi}\right) = \frac{1}{\sqrt{2\pi}}
\phi\left(\frac{\sqrt{2} t}{\xi}\right).
$$
Thus replacing in the integral and making the change of variable
$x = \sqrt{2} t/\xi$, we obtain
$$
\int_{-\infty}^\infty \big[w_\xi^{(k)}(t)\big]^2\,dt = 
\frac{1}{2\sqrt{\pi} \xi^{2k+1}}
\int_{-\infty}^\infty H_k^2\left(\frac{x}{\sqrt{2}}\right) \phi(x)\,dx.
$$
The results of the lemma are obtained by setting in particular $k = 0$ 
with $H_0(x) = 1$, $k = 1$ with $H_1(x) = x$, and $k = 2$ with $H_2(x)
= x^2 - 1$.
\end{enumerate}
\end{proof}

\noindent
\textbf{Derivations for Example \ref{ex:Gaussian-ACVF} (Gaussian autocorrelation model)}
\hfill\par\noindent
Under the complete null hypothesis, we can write
\begin{equation}
\label{eq:x-null}
x_\gamma(t) = w_\gamma(t) * z(t) =
w_\gamma(t) * \sigma \int_{-\infty}^\infty w_\nu(t-s)\,dB(s) =
\sigma \int_{-\infty}^\infty w_\xi(t-s)\,dB(s)
\end{equation}
with $\xi = \sqrt{\gamma^2 + \nu^2}$, where we have used Lemma
\ref{lemma:Gaussian} part (1). By Lemma \ref{lemma:Gaussian} part (2),
\begin{equation*}
\begin{aligned}
\sigma_\gamma^2 &= \E[x_\gamma^2(t)]
= \sigma^2 \int_{-\infty}^\infty w_\xi^2(t-s) \,ds
= \frac{\sigma^2}{2\sqrt{\pi}\xi} \\
\lambda_2 &= \E[\dot{x}_\gamma^2(t)]
= \sigma^2 \int_{-\infty}^\infty \dot{w}_\xi^2(t-s) \,ds
= \frac{\sigma^2}{4\sqrt{\pi}\xi^3} \\
\lambda_4 &= \E[\ddot{x}_\gamma^2(t)]
= \sigma^2 \int_{-\infty}^\infty \ddot{w}_\xi^2(t-s) \,ds
= \frac{3\sigma^2 }{8\sqrt{\pi}\xi^5}.
\end{aligned}
\end{equation*}
\subsection{Proof of Theorem \ref{thm:Bonferroni} (Weak control of FWER)}
\label{App:3}

\begin{lemma}
\label{lemma:threshold}
Let $u^*_{\Bon}$ and $\tilde{u}_{\Bon}$ be the thresholds defined in
\eqref{eq:threshold} and \eqref{eq:threshold-random}, respectively.
Then $|\tilde{u}_{\Bon} - u^*_{\Bon}| \to 0$ in probability as $L \to \infty$.
\end{lemma}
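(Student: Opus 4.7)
The plan is to first establish a weak law for $\tilde{m}$, namely $\tilde{m}/\E[\tilde{m}] \to 1$ in probability, and then to exploit the explicit tail form of $F_\gamma$ in \eqref{eq:pvapprox} to show that $F_\gamma^{-1}$ is sufficiently regular near zero that this ratio convergence translates into vanishing of $|\tilde{u}_{\Bon} - u^*_{\Bon}|$.

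For the first step I would appeal to ergodicity. Under the complete null, $x_\gamma = z_\gamma$ is a stationary ergodic $\mathcal{C}^2$ Gaussian process, so its point process of local maxima is a stationary ergodic point process on $\R$ with intensity $\E[\tilde{m}(-\infty;[0,1])] = (2\pi)^{-1}\sqrt{\lambda_{4,\gamma}/\lambda_{2,\gamma}}$ by Lemma \ref{lemma:m_u}. The ergodic theorem for stationary point processes then yields $\tilde{m}/L \to \E[\tilde{m}(-\infty;[0,1])]$ almost surely as $L \to \infty$, and combining with \eqref{eq:Em} gives $\tilde{m}/\E[\tilde{m}] \to 1$ in probability.

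For the second step, since $F_\gamma^{-1}$ is decreasing, on the high-probability event $A_\eps = \{(1-\eps)\E[\tilde{m}] \le \tilde{m} \le (1+\eps)\E[\tilde{m}]\}$ we have
\[
F_\gamma^{-1}\!\left(\frac{\alpha}{(1-\eps)\E[\tilde{m}]}\right) \le \tilde{u}_{\Bon} \le F_\gamma^{-1}\!\left(\frac{\alpha}{(1+\eps)\E[\tilde{m}]}\right),
\]
so it suffices to show that, for each fixed $\eps$, both endpoints differ from $u^*_{\Bon} = F_\gamma^{-1}(\alpha/\E[\tilde{m}])$ by $o(1)$ as $L \to \infty$. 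From \eqref{eq:pvapprox}, inversion gives $F_\gamma^{-1}(p) = \sigma_\gamma\sqrt{2\log(C/p)}\,(1+o(1))$ as $p \downarrow 0$, with $C = \sqrt{2\pi\lambda_{2,\gamma}^2/(\lambda_{4,\gamma}\sigma_\gamma^2)}$. Applying the identity $\sqrt{a}-\sqrt{b} = (a-b)/(\sqrt{a}+\sqrt{b})$ then yields, to leading order,
\[
F_\gamma^{-1}\!\left(\frac{\alpha}{(1\pm\eps)\E[\tilde{m}]}\right) - u^*_{\Bon} \;\sim\; \frac{\mp\,\sigma_\gamma\,\log(1\pm\eps)}{\sqrt{2\log(C\E[\tilde{m}]/\alpha)}},
\]
which tends to $0$ because $\E[\tilde{m}]$ grows linearly in $L$. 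Sending $\eps \downarrow 0$ after $L \to \infty$ finishes the argument.

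The main obstacle is making the tail inversion rigorous: \eqref{eq:pvapprox} is only an asymptotic equivalence, so I would write $F_\gamma(u) = C\phi(u/\sigma_\gamma)[1+r(u)]$ with an explicit remainder $r(u) \to 0$ read off from \eqref{eq:distr}, then invert monotonically and track the error term. Modulo this inversion estimate, the sandwich argument above reduces the lemma to the ergodic weak law together with elementary manipulations of logarithms.
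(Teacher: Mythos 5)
Your proposal is correct and follows the same overall strategy as the paper's proof: both arguments rest on the ergodic weak law $\tilde{m}/L \to \E[\tilde{m}(-\infty;[0,1])]$, and then convert the resulting ratio convergence $\tilde{m}/\E[\tilde{m}]\to 1$ into additive convergence of the thresholds via a regularity property of $F_\gamma^{-1}$ on the logarithmic scale. Where you differ is in how that second step is implemented. The paper applies the continuous mapping theorem to $\log(\tilde{m}/\alpha)$ and then invokes Lipschitz continuity of $x\mapsto F_\gamma^{-1}(e^{-x})$ (written there with a left-cdf/right-cdf inconsistency), packaging the tail behaviour of $F_\gamma$ abstractly as boundedness of a derivative; you instead sandwich $\tilde{u}_{\Bon}$ on the event $A_\eps$ and compute the gap explicitly from the Rayleigh-type tail, which yields the more informative quantitative rate $O\bigl(\log(1\pm\eps)/\sqrt{\log L}\bigr)$. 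Both routes ultimately require the same two-sided bound $C_1\phi(u/\sigma_\gamma)\le F_\gamma(u)\le C_2\phi(u/\sigma_\gamma)$ for large $u$, which the paper only establishes later, in \eqref{eq:F-lbound}--\eqref{eq:F-ubound}. One precision point you should address when writing this up: the multiplicative form $F_\gamma^{-1}(p)=\sigma_\gamma\sqrt{2\log(C/p)}\,(1+o(1))$ is not by itself sufficient, because an additive error of size $o(\sqrt{\log L})$ would swamp the $O(1/\sqrt{\log L})$ main term of the difference you compute; you need the additive version $F_\gamma^{-1}(p)=\sigma_\gamma\sqrt{2\log(C/p)}+o(1)$, which is exactly what carrying the remainder $r(u)$ through the inversion (as you propose) delivers. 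Finally, the sign in your displayed asymptotic is reversed --- since $F_\gamma^{-1}$ is decreasing, the $(1+\eps)$ endpoint lies above $u^*_{\Bon}$ --- but this is immaterial to the conclusion.
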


\begin{proof}
Recall that $\tilde{m}_0$ is the number of local maxima belonging to 
the set $\mathbb{S}_0$ in the segment $[-L/2, L/2]$. Denote by $\tilde{m}_0[0,1]$ the number of local maxima
belonging to the set $\mathbb{S}_0$ in the unit.
Using the notation of Lemma \ref{lemma:m_u}, we have that by ergodicity,
$$
\left|\frac{\tilde{m}_0}{L} - E[\tilde{m}_0[0,1]]\right|
\to 0
$$
in probability as $L \to \infty$, where
$E\big[\tilde{m}_0[0,1]\big]$ does not depend on $L$. Since $\log(\cdot)$
is continuous, the continuous mapping theorem gives that
$$
\left|\log \frac{\tilde{m}_0}{L} - \log E[\tilde{m}_0[0,1]]\right|
\to 0 \qquad \Longrightarrow \qquad
\left|\log \frac{\tilde{m}_0}{\alpha} - \log
\frac{E[\tilde{m}_0]}{\alpha} \right| \to 0,
$$
where we have used the additive property of the logarithm.

Define now the monotone increasing function $\psi_\gamma: \R^+ \to \R$ by
$\psi_\gamma(x) = F^{-1}_\gamma(1-e^{-x})$, $x > 0$, where $\kappa
> 0$ is a constant.
The function $\psi_\gamma(x)$ is Lipschitz continuous for all $x > 1$ because 
its derivative $d\psi_\gamma(x)/dx =
e^{-x}/\dot{F}_\gamma[\psi_\gamma(x)]$ is bounded for all $x >
1$. Hence,
$$
\left|\psi_\gamma\left(\log \frac{\tilde{m}_0}{\alpha}\right) -
\psi_\gamma\left(\log \frac{E[\tilde{m}_0]}{\alpha}\right)
\right| \to 0,
$$
implying that
$$
\left|F^{-1}_\gamma\left(1-\frac{\alpha}{\tilde{m}_0}\right) -
F^{-1}_\gamma\left(1-\frac{\alpha}{E[\tilde{m}_0]}\right)\right|
\to 0
$$
as $L \to \infty$. Multiplying by $\sigma_\gamma$ gives the result.
\end{proof}

\begin{proof}[{\bf Proof of Theorem \ref{thm:Bonferroni}}]
\hfill\par\noindent
\begin{enumerate}
\item 
By Proposition \ref{thm:p-values} and Lemma \ref{lemma:m_u},
$$
\begin{aligned}
\FWER(u^*_{\Bon}) = \P\{V(u^*_{\Bon}) \ge 1\}
&\le \E[V(u^*_{\Bon})] = L \E[\tilde{m}(u^*_{\Bon}; [0,1])] \\
&= L \E[\tilde{m}(-\infty; [0,1])]
\frac{\E[\tilde{m}(u^*_{\Bon}; [0, 1])]}{\E[\tilde{m}(-\infty; [0,1])]}
= \E[\tilde{m}_0] F_\gamma(u^*_{\Bon})
\end{aligned}
$$
by the Lemma \ref{lemma:m_u}, part (2). Setting
$F_\gamma(u^*_{\Bon}) = \alpha/\E[\tilde{m}_0]$ gives that 
$\FWER(u^*_{\Bon}) \le \alpha$.

\item
Write
$$
\FWER(\tilde{u}_{\Bon}) = \P\left\{ \tilde{T} \ne \emptyset ~\text{and}~
\max_{t \in \tilde{T}} x_\gamma(t) > u^*_{\Bon} +
(\tilde{u}_{\Bon} - u^*_{\Bon}) \right\}
$$
and apply the fact that for any two random variables $X$, $Y$ and any two constants $c$, $\eps$:
$$
P(X > c + \eps) - P(|Y| > \eps) \le P(X > Y + c) \le
P(X > c - \eps) + P(|Y| > \eps).
$$

Taking $X = \max_{t \in \tilde{T}} x_\gamma(t)$,
$Y = \tilde{u}_{\Bon} - u^*_{\Bon}$ and $c = u^*_{\Bon}$,
$$
\FWER(\tilde{u}_{\Bon}) \le \P\left\{ \tilde{T} \ne \emptyset ~\text{and}~
\max_{t \in \tilde{T}} x_\gamma(t) > u^*_{\Bon} - \eps 
\right\} +
\P\left\{\tilde{T} \ne \emptyset ~\text{and}~ |\tilde{u}_{\Bon} - u^*_{\Bon}|
> \eps\right\}
$$
The second summand goes to 0 in probability as $L \to \infty$ by Lemma 
\ref{lemma:threshold}. For the first summand, we follow a similar
argument to the proof of part (1):
$$
\P\left\{ \tilde{T} \ne \emptyset ~\text{and}~
\max_{t \in \tilde{T}} x_\gamma(t) > u^*_{\Bon} - \eps 
\right\} \le \E[\tilde{m}_0]  F_\gamma(u^*_{\Bon} - \eps) 
= \alpha \frac{ F_\gamma(u^*_{\Bon} - \eps)}{ F_\gamma(u^*_{\Bon})}
$$
but the last fraction goes to 1 as $L \to \infty$.
\end{enumerate}
\end{proof}

\subsection{ Proof of Theorem \ref{thm:strongFWER} (Strong control of FWER)}
\label{App:4}

\begin{lemma}
\label{lemma:transitionE}
Assume the model of Section \ref{sec:model} and the procedure of Section \ref{sec:proc}. 
Then as $L \to \infty$ and $a_j \to\infty, \forall  j$ such that $ \forall j, ~ La_j\phi(Ka_j) \to
0$, for any constant $K$, 
\begin{enumerate}
\item
The expected number of local maxima in the transition
region $\mathbb{S}_{1, \gamma} \setminus \mathbb{S}_1 = \mathbb{S}_0\setminus \mathbb{S}_{0, \gamma}$ tends to 0:
$$
\E\left[ \#\{t \in \tilde{T}\cap\left(\mathbb{S}_{1, \gamma}\setminus \mathbb{S}_1
\right) \} \right] = \E\left[ \#\{t \in
\tilde{T}\cap\left(\mathbb{S}_0\setminus \mathbb{S}_{0, \gamma}
\right) \} \right] \to 0.
$$
\item
The probability of obtaining a local maximum in the transition region tends to 0:
$$
\P\left(\# \{t\in \tilde{T}\cap ( \mathbb{S}_0\setminus  \mathbb{S}_{0, \gamma})\}\ge 1 \right) \to 0
$$
\end{enumerate}
\end{lemma}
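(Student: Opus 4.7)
The plan is to control $\E\bigl[\#\{t\in\tilde{T}\cap(\mathbb{S}_{1,\gamma}\setminus\mathbb{S}_1)\}\bigr]$ via the Kac--Rice formula, peak by peak, and then obtain part (2) from Markov's inequality. The key structural observation is that on the $j$-th transition region $S_{j,\gamma}\setminus S_j$ the smoothed signal equals $\mu_\gamma(t)=a_j h_\gamma(t-\tau_{j,\gamma})$, whose derivative $a_j\dot{h}_\gamma(t-\tau_{j,\gamma})$ cannot vanish there, since the only critical point of $h_\gamma$ lies in the interior of $S$. Combined with the $\mathcal{C}^2$ regularity of $h_\gamma$ and the boundary behaviour at $\partial S_\gamma$ (as, e.g., in the truncated-Gaussian setting assumed in Section \ref{sec:model}), this yields a uniform lower bound $|\dot{h}_\gamma(t)|\ge d>0$ on the transition region, and hence $|\dot{\mu}_\gamma(t)|\ge a_j d$ there.

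Because the number of local maxima in any set is bounded by the number of zeros of $\dot{x}_\gamma$ there, the Kac--Rice formula gives
\begin{equation*}
\E\bigl[\#\{t\in\tilde{T}\cap (S_{j,\gamma}\setminus S_j)\}\bigr]
\le \int_{S_{j,\gamma}\setminus S_j} \E\bigl[|\ddot{x}_\gamma(t)|\,\bigm|\,\dot{x}_\gamma(t)=0\bigr]\,p_{\dot{x}_\gamma(t)}(0)\,dt.
\end{equation*}
For stationary Gaussian $z_\gamma$, $\dot{z}_\gamma(t)$ and $\ddot{z}_\gamma(t)$ are independent at the same $t$, so conditioning on $\dot{x}_\gamma(t)=0$ leaves the law of $\ddot{z}_\gamma(t)\sim N(0,\lambda_{4,\gamma})$ unchanged. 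The conditional expectation is therefore at most $|\ddot{\mu}_\gamma(t)|+\sqrt{2\lambda_{4,\gamma}/\pi}\le C_1 a_j$, while the density of $\dot{x}_\gamma(t)\sim N(\dot{\mu}_\gamma(t),\lambda_{2,\gamma})$ at $0$ satisfies
\begin{equation*}
p_{\dot{x}_\gamma(t)}(0)=\frac{1}{\sqrt{\lambda_{2,\gamma}}}\,\phi\!\left(\frac{\dot{\mu}_\gamma(t)}{\sqrt{\lambda_{2,\gamma}}}\right)\le \frac{1}{\sqrt{\lambda_{2,\gamma}}}\,\phi(Ka_j),\qquad K:=\frac{d}{\sqrt{\lambda_{2,\gamma}}}.
\end{equation*}
Each transition region has a fixed finite length, so the per-peak bound is of order $a_j\phi(Ka_j)$.

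Summing over the $J_L$ peaks and invoking the non-overlap assumption (which forces $J_L\le C'L$ for a constant $C'$), I obtain
\begin{equation*}
\E\bigl[\#\{t\in\tilde{T}\cap(\mathbb{S}_{1,\gamma}\setminus\mathbb{S}_1)\}\bigr]\le C\,L\,a_j\,\phi(Ka_j)\longrightarrow 0
\end{equation*}
by the hypothesis of the lemma, proving part (1). Part (2) then follows at once from Markov's inequality applied to the nonnegative integer-valued count $\#\{t\in\tilde{T}\cap(\mathbb{S}_0\setminus\mathbb{S}_{0,\gamma})\}$.

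The hard part will be establishing the uniform positivity $|\dot{h}_\gamma|\ge d>0$ on the \emph{closed} transition region, in particular at its outer boundary $\partial S_\gamma$, where $h_\gamma$ itself vanishes. Unimodality supplies only pointwise non-vanishing of $\dot{h}_\gamma$ inside $S_\gamma\setminus S$, so the bound must be argued from the specific boundary behaviour of $h_\gamma$ (for the truncated-Gaussian construction, convolution produces a nonzero one-sided derivative at $\partial S_\gamma$); alternatively, one can restrict to a closed subset of the transition region on which the bound is immediate and control the remaining thin boundary sliver with a cruder but still vanishing estimate.
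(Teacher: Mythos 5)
Your proof is correct and rests on the same skeleton as the paper's: a Kac--Rice bound on the expected number of critical points in the transition region, with the exponentially small factor coming from the density of $\dot{x}_\gamma(t)$ at zero via the uniform lower bound $|\dot{\mu}_\gamma(t)|\ge a_j d$, followed by a sum over $J_L=O(L)$ peaks and Markov's inequality for part (2). Where you genuinely diverge is in handling the second-derivative factor. The paper evaluates the inner integral $\int_0^\infty y\,p(\ddot{z}_\gamma(t)=-y-\ddot{\mu}_\gamma(t))\,dy$ exactly, splits each transition region into the subsets where $\ddot{\mu}_\gamma>0$ and where $\ddot{\mu}_\gamma<0$, bounds two of the resulting terms by Gaussian tail inequalities, and disposes of the leftover $-\ddot{\mu}_\gamma$ contribution on the concave part by a change of variables that converts it into a difference of $\Phi$ values. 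You instead bound $\E[|\ddot{x}_\gamma(t)|\mid\dot{x}_\gamma(t)=0]$ crudely by $C_1a_j$, using the independence of $\dot{z}_\gamma(t)$ and $\ddot{z}_\gamma(t)$ at a fixed $t$ (valid for a stationary process since the odd-order derivative of the covariance vanishes at the origin). This costs you a factor of $a_j$ relative to part of the paper's estimate, but since the hypothesis is precisely $La_j\phi(Ka_j)\to 0$ for every constant $K$, and the paper's own final bound also contains an $La_j\phi(Ka_j)$ term from the leftover piece, nothing is lost and your argument is substantially shorter. Both proofs ultimately lean on $|\dot{h}_\gamma|$ being bounded away from zero on the closed transition region; the paper simply asserts this, whereas you correctly flag it as the delicate point (it does hold in the truncated-Gaussian setting, where the convolution has a nonzero one-sided derivative at $\partial S_\gamma$, but it is not a consequence of unimodality alone), so on that score your treatment is the more careful one, even though the promised ``cruder but still vanishing estimate'' for a smooth-vanishing boundary would still need to be supplied to cover the fully general case.
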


\begin{proof}
\hfill\par\noindent
\begin{enumerate}
\item
The expected number of local maxima in any set $T \subset [-L/2, L/2]$ can be computed by the Kac-Rice formula:
\begin{eqnarray}
\label{eq:expectedmax}
&  & \E\left[\#\{t \in T:  \dot{x}_\gamma(t) = 0 , \ddot{x}_\gamma(t) < 0
\}  \right] \nonumber \\
& = & \int_{T} p\left(\dot{x}_\gamma(t) = 0\right) \int_{-\infty}^{0}|y|p\left(
\ddot{x}_\gamma(t)=y\right)\,dy\,dt \nonumber  \\
& = & \int_{T } p\left(\dot{z}_\gamma(t) = - \dot{\mu}_\gamma(t) \right) \int_{0}^{\infty} y\, p\left(
\ddot{z}_\gamma(t)= - y - \ddot{\mu}_\gamma(t)\right)\,dy\,dt,
\end{eqnarray}
where $p(\cdot)$ denotes probability density. Recall that  $\ddot{z}_\gamma(t) \sim N (0, \lambda_{4, \gamma})$. The inner integral in \eqref{eq:expectedmax} is
\begin{eqnarray}
\label{eq:inner-int}
&  & \int_{0}^{\infty} y\,p\left(
\ddot{z}_\gamma(t)= - y - \ddot{\mu}_\gamma(t)\right)\,dy =
\int_{0}^{\infty}y\frac{1}{\sqrt{\lambda_{4, \gamma}}}\phi\left(\frac
{-y-\ddot{\mu}_\gamma(t)}{\sqrt{\lambda_{4, \gamma}}} \right)\,dy \nonumber \\
& = & \int_{\ddot{\mu}_\gamma(t)/\sqrt{\lambda_{4, \gamma}}}^{\infty}\left[\sqrt{\lambda_{4, \gamma}} z - \ddot{\mu}_\gamma(t)\right]\phi(z)\,dz \nonumber \\
& = &  - \ddot{\mu}_\gamma(t) + \sqrt{\lambda_{4, \gamma}}\phi\left(\frac{\ddot{\mu}_\gamma(t)}{\sqrt{\lambda_{4, \gamma}}}
\right) + \ddot{\mu}_\gamma(t)\Phi\left(\frac{\ddot{\mu}_\gamma(t)}{\sqrt{\lambda_{4, \gamma}}}\right).
\end{eqnarray}
The set $T$ can be decomposed as the union  $T = T^+ \cup T^-$, such that $ \ddot{\mu}_\gamma(t)>0, ~ \forall t\in T^+$ and $\ddot{\mu}_\gamma(t)< 0, ~ \forall t\in T^-$. Using the inequality
\begin{equation}
\label{eq:Gaus-ineq}
\frac{x}{1+x^2} \phi(x) < 1 - \Phi(x) < \frac{\phi(x)}{x}, \qquad x>0,
\end{equation}
for $ t\in T^+$, the expression in \eqref{eq:inner-int} is bounded by
\begin{eqnarray}
\label{eq:T+}
0 \le  \ddot{\mu}_\gamma(t)\left[\frac{ \sqrt{\lambda_{4, \gamma}}}{
\ddot{\mu}_\gamma(t)} \phi\left(\frac{\ddot{\mu}_\gamma(t)}{\sqrt{\lambda_{4, \gamma}}}
\right) + \Phi\left(\frac{\ddot{\mu}_\gamma(t)}{\sqrt{\lambda_{4, \gamma}}}
\right) - 1\right] \le
\phi\left(\frac{\ddot{\mu}_\gamma(t)}{\sqrt{\lambda_{4, \gamma}}}\right)\frac{\sqrt{\lambda_{4, \gamma}}}{1+\ddot{\mu}_\gamma^2(t)/\lambda_{4, \gamma}}.
\end{eqnarray}
For $ t\in T^-$ the sum of the last two terms of the expression in \eqref{eq:inner-int}  is bounded by
\begin{eqnarray}
\label{eq:T-}
0 & \le &  \sqrt{\lambda_{4, \gamma}}\phi\left(\frac{\ddot{\mu}_\gamma(t)}{\sqrt{\lambda_{4, \gamma}}}
\right) + \ddot{\mu}_\gamma(t)\Phi\left(\frac{\ddot{\mu}_\gamma(t)}{\sqrt{\lambda_{4, \gamma}}}
\right)\nonumber \\
&  = & -\ddot{\mu}_\gamma(t)\left[
\frac{\sqrt{\lambda_{4, \gamma}}}{-\ddot{\mu}_\gamma(t)}\phi\left(\frac{-\ddot{\mu}_\gamma(t)}{\sqrt{\lambda_{4, \gamma}}} 
\right) - 1 + \Phi\left(\frac{-\ddot{\mu}_\gamma(t)}{\sqrt{\lambda_{4, \gamma}}}
\right)\right] \nonumber \\ 
& \le &
\phi\left(\frac{-\ddot{\mu}_\gamma(t)}{\sqrt{\lambda_{4, \gamma}}}\right)\frac{\sqrt{\lambda_{4, \gamma}}}{1+\ddot{\mu}_\gamma^2(t)/\lambda_{4, \gamma}}.
\end{eqnarray}
Taking $T = \mathbb{S}_{1, \gamma}\setminus \mathbb{S}_1 = \cup_{j=1}^{J_L} S_{j,
\gamma}\setminus S_j$ and decomposing  each transition region as
$S_{j,\gamma}\setminus S_j = (S_{j,\gamma} \setminus S_j)^{+} \cup (S_{j,
\gamma}\setminus S_j)^-$ and replacing in \eqref{eq:expectedmax} we get

\begin{eqnarray}
\label{eq:int1}
0 & \le & \E\left[\#\{t \in \mathbb{S}_{1,\gamma}\setminus
\mathbb{S}_{1}:  \dot{x}_\gamma(t) = 0 , \ddot{x}_\gamma(t) < 0  \}  \right]
\nonumber \\
& = & J_L  \Biggl\{ \int_{(S_{j,\gamma}\setminus S_j)^+ } p\left(
\dot{z}_\gamma(t) = - \dot{\mu}_\gamma(t) \right)\left[ -
\ddot{\mu}_\gamma(t) +
\sqrt{\lambda_{4, \gamma}}\phi\left(\frac{\ddot{\mu}_\gamma(t)}{\sqrt{\lambda_{4, \gamma}}} 
\right) + \ddot{\mu}_\gamma(t)\Phi\left(\frac{\ddot{\mu}_\gamma(t)}{\sqrt{\lambda_{4, \gamma}}}
\right)\right]dt   \nonumber \\
& + &  \int_{(S_{j,\gamma}\setminus S_{j})^- } p\left(
\dot{z}_\gamma(t) = - \dot{\mu}_\gamma(t) \right)\Biggl[
\sqrt{\lambda_{4, \gamma}}\phi\left(\frac{\ddot{\mu}_\gamma(t)}{\sqrt{\lambda_{4, \gamma}}} 
\right) + \ddot{\mu}_\gamma(t)\Phi\left(\frac{\ddot{\mu}_\gamma(t)}{\sqrt{\lambda_{4, \gamma}}}
\right)\Biggr]dt  \nonumber \\
& + &  \int_{(S_{j,\gamma}\setminus S_{j})^- } p\left(
\dot{z}_\gamma(t) = - \dot{\mu}_\gamma(t) \right) \left[-
\ddot{\mu}_\gamma(t)\right]dt \Biggr\}.
\end{eqnarray}
The last integral in \eqref{eq:int1} equals 
\begin{eqnarray}
\label{eq:int4}
0 & \le &   \int_{(S_{j,\gamma}\setminus S_{j})^- }
p\left(\dot{z}_\gamma(t) = - \dot{\mu}_\gamma(t) \right) 
\left[-\ddot{\mu}_\gamma(t)\right]dt \nonumber \\
& = & \int_{(S_{j,\gamma}\setminus S_{j})^- } \frac{1}{\sqrt{2\pi\lambda_{2, \gamma}}}
\exp{\left( -\frac{\dot{\mu}_\gamma(t)^2}{2\lambda_{2, \gamma}}\right)}
\left(-\ddot{\mu}_\gamma(t)\right)dt. 
\end{eqnarray}

Since $\ddot{\mu}_\gamma(t)$ is piecewise continuous, we can express the set $(S_{j,\gamma}\setminus S_{j})^- $ as the finite union of closed intervals $[c_i, d_i]$, some to the left of $\tau_{j, \gamma}$ and some to the right. Note that $ \dot{\mu}_\gamma(t) $ does not change  sign for all $t \in [c_i, d_i]$ and it is a decreasing function, since $\ddot{\mu}_\gamma(t)<0$. Therefore, $\dot{\mu}_\gamma(c_i)> \dot{\mu}_\gamma(d_i)$. The integral  in \eqref{eq:int4} can be expressed as the finite sum of integrals of the form
\begin{eqnarray}
\label{eq:int5}
& & \int_{c_i }^{d_i} \frac{1}{\sqrt{2\pi\lambda_{2, \gamma}}}
\exp{\left( -\frac{\dot{\mu}_\gamma(t)^2}{2\lambda_{2, \gamma}}\right)}
\left[-\ddot{\mu}_\gamma(t)\right]dt  = \int_{\dot{\mu}_\gamma(d_i)/\sqrt{\lambda_{2, \gamma}} }^{\dot{\mu}_\gamma(c_i)/\sqrt{\lambda_{2, \gamma}}}\phi(z)\,dz \nonumber \\
& = & \Phi\left(\frac{\dot{\mu}_\gamma(d_i)}{\sqrt{\lambda_{2, \gamma}}} \right) - \Phi\left(\frac{\dot{\mu}_\gamma(c_i)}{\sqrt{\lambda_{2, \gamma}}} \right) \le \phi\left(\frac{\dot{\mu}_\gamma(g_i)}{\sqrt{\lambda_{2, \gamma}}} \right)
\frac{a_j[\dot{h}_\gamma(c_i)-\dot{h}_\gamma(d_i)]}{\sqrt{\lambda_{2, \gamma}} }
\end{eqnarray}
where $g_i = d_i$ if the interval is to the left of $\tau_{j, \gamma}$ and $g_i = c_i$ if the interval is to the right of $\tau_{j, \gamma}$.
Using the above result, the integral in \eqref{eq:int4} is bounded by:
\begin{eqnarray}
\label{eq:int4bound}
& & \int_{(S_{j,\gamma}\setminus S_{j})^- } p\left(
\dot{z}_\gamma(t) = - \dot{\mu}_\gamma(t) \right) \left[-
\ddot{\mu}_\gamma(t)\right]dt \le \phi\left(\frac{\dot{\mu}_\gamma(g^*)}{\sqrt{\lambda_{2, \gamma}}} \right)\frac{\Delta \dot{h}_\gamma}{\sqrt{\lambda_{2, \gamma}}}a_j
\end{eqnarray}
where $g^* = \arg\max\phi\left(\dot{\mu}_\gamma(g_i)/\sqrt{\lambda_{2, \gamma}} \right)$ and $\Delta \dot{h}_\gamma = \dot{h}_\gamma(c_i)-\dot{h}_\gamma(d_i)$ is the maximal difference.
Note that the bounds in \eqref{eq:T+} and \eqref{eq:T-} are the same. Plugging the bounds \eqref{eq:T+}, \eqref{eq:T-} and \eqref{eq:int4bound} in equation \eqref{eq:int1} we get
\begin{eqnarray}
\label{eq:int6}
0 & \le & \E\left[\#\{t \in \mathbb{S}_{1,\gamma}\setminus
\mathbb{S}_{1}:  \dot{x}_\gamma(t) = 0 , \ddot{x}_\gamma(t) < 0  \}  \right]
\nonumber \\
& \le & J_L  \Biggl\{ \int_{(S_{j,\gamma}\setminus S_j)} p\left(
\dot{z}_\gamma(t) = - \dot{\mu}_\gamma(t) \right)\left[ \phi\left(\frac{\ddot{\mu}_\gamma(t)}{\sqrt{\lambda_{4, \gamma}}}\right)\frac{\sqrt{\lambda_{4, \gamma}}}{1+\ddot{\mu}_\gamma^2(t)/\lambda_{4, \gamma}} \right]dt   \nonumber \\
& + &  \phi\left(\dot{\mu}_\gamma(g^*)/\sqrt{\lambda_{2, \gamma}} \right)\frac{\Delta \dot{h}_\gamma}{\sqrt{\lambda_{2, \gamma}}}a_j \Biggr\}.
\end{eqnarray}

For $t \in S_{j, \gamma} \setminus S_j, \forall j$, $\mu_\gamma(t) =
a_jh_\gamma(t-\tau_{j, \gamma})$. $\dot{h}_\gamma(t)$ is bounded away from 0, in particular $|\dot{h}_\gamma(t)|$ is bounded in $ S_{j, \gamma}
\setminus S_j$. Let $t^*$ be the point in  $S_{j, \gamma} \setminus S_j$ 
where $|\dot{h}_\gamma(t)|$ is minimal. Continuing \eqref{eq:int5}, the expected number of local maxima in $\mathbb{S}_{1, \gamma} \setminus \mathbb{S}_1$ is bounded above by
\begin{eqnarray}
&  & J_L \Bigg[\frac{1}{\sqrt{\lambda_{2,\gamma}}} \phi\left( \frac{a_j \dot{h}_\gamma(t^*)}{\sqrt{\lambda_{2,
\gamma}}}\right)\int_{S_{j,\gamma}\setminus S_j
}\phi\left(\frac{\ddot{\mu}_\gamma(t)}{\sqrt{\lambda_4}}\right)\frac{\sqrt{\lambda_{4,\gamma}}}{1+\ddot{\mu}_\gamma^2(t)/\lambda_4}dt 
+ \phi\left( \frac{a_j \dot{h}_\gamma(g^*)}{\sqrt{\lambda_{2,
\gamma}}}\right)\frac{\Delta \dot{h}_\gamma}{\sqrt{\lambda_{2, \gamma}}}a_j 
\Bigg] \nonumber \\
& \le & J_L \Big[  \sqrt{\frac{\lambda_{2,\gamma}}{\lambda_{2,\gamma}} } \phi\left( \frac{a_j \dot{h}_\gamma(t^*)}{\sqrt{\lambda_{2,
\gamma}}}\right)\vert S_{j,\gamma}\setminus S_j \vert +  \phi\left( \frac{a_j \dot{h}_\gamma(g^*)}{\sqrt{\lambda_{2,
\gamma}}}\right)\frac{\Delta \dot{h}_\gamma}{\sqrt{\lambda_{2, \gamma}}}a_j 
\Big]\nonumber \\
& \le &   L \sqrt{\frac{\lambda_{2,\gamma}}{\lambda_{2,\gamma}} } \phi\left( \frac{a_j \dot{h}_\gamma(t^*)}{\sqrt{\lambda_{2,
\gamma}}}\right) +  \frac{J_L}{L}L a_j\phi\left( \frac{a_j \dot{h}_\gamma(g^*)}{\sqrt{\lambda_{2,
\gamma}}}\right)\frac{\Delta \dot{h}_\gamma}{\sqrt{\lambda_{2, \gamma}}}.  \nonumber
\end{eqnarray}
This bound goes to 0 by the lemma's conditions.

\item
Immediate  from the previous part.
\end{enumerate}
\end{proof}

\begin{proof}[{\bf Proof of Theorem \ref{thm:strongFWER}}]
\hfill\par\noindent
\begin{enumerate}
\item
By Proposition \ref{thm:p-values} and Lemma \ref{lemma:m_u},
\begin{eqnarray}
\label{eq:sFWER}
& & \FWER(u^*_{\Bon})  =  \P\left(V(u^*_{\Bon})\ge 1\right) = 1-
\P\left(V(u^*_{\Bon})= 0 \right) \nonumber \\ 
& = & 1 - \P \left( \Big[ \#\{t \in \tilde{T}\cap \mathbb{S}_{0, \gamma}:
\max x_\gamma(t)>u^*_{\Bon} \} + \#\{t \in \tilde{T}\cap(
\mathbb{S}_0\setminus  \mathbb{S}_{0, \gamma}):
\max x_\gamma(t)>u^*_{\Bon} \} \Big] =0 \right) \nonumber \\
& = & 1-\P\left(V_\gamma(u^*_{\Bon}) = 0 ~\text{and}~ \# \{t\in \tilde{T}\cap
( \mathbb{S}_0\setminus  \mathbb{S}_{0, \gamma}):\max x_\gamma(t)>u^*_{\Bon} \} = 0 \right)  \nonumber \\
& \le &  \P\left(V_\gamma(u^*_{\Bon})\ge 1\right) +  \P\left(\# \{t\in \tilde{T}\cap
( \mathbb{S}_0\setminus  \mathbb{S}_{0, \gamma})\}\ge 1 \right).  
\end{eqnarray}
The last inequality holds since $1-\P(A\cap B) \le \P(A^C)+ \P(B^C)$
and  
$$
 \P\left(\#\{t \in \tilde{T}\cap( \mathbb{S}_0\setminus  \mathbb{S}_{0\gamma}):
\max x_\gamma(t)>u^*_{\Bon} \}  \ge 1 \right) \le  \P\left(\# \{t\in \tilde{T}\cap
( \mathbb{S}_0\setminus  \mathbb{S}_{0, \gamma})\}\ge 1 \right).  
$$
The second probability in \eqref{eq:sFWER} goes to 0  by Lemma \ref{lemma:transitionE}, part (2).

Although the process $x_\gamma(t)$ is not stationary under the true
model, it has the same properties as  stationary process on the set
$\mathbb{S}_{0, \gamma}$, since $x_\gamma(t) = z_\gamma(t)$ for all $t \in \mathbb{S}_{0, \gamma}$.
Thus, following the same arguments as in the first part of Theorem
\ref{thm:p-values} we get that the first probability in
\eqref{eq:sFWER} is bounded by $ \alpha$. 
The second probability goes to 0 by Lemma \ref{lemma:transitionE}.

\item
Let  $\tilde{m}_{0, \gamma}$ be the number of local maxima belonging
to the set $ \mathbb{S}_{0, \gamma}$ and let $\tilde{v}_\gamma = F_\gamma^{-1}\left(\alpha/\tilde{m}_{0, \gamma}\right)$.
It is clear that $\tilde{v}_\gamma \le \tilde{u}_{\Bon}$ since
$\tilde{m}_{0, \gamma} \le \tilde{m}$, therefore 
\begin{eqnarray}
\label{eq:sFWER2}
\FWER(\tilde{u}_{\Bon}) & = & \P\left(V(\tilde{u}_{\Bon})\ge1 \right) \le
\P\left(V(\tilde{v}_\gamma)\ge1 \right)  \nonumber \\
& \le & \P\left(V_\gamma(\tilde{v}_\gamma)\ge 1\right) +
  \P\left(\# \{t\in \tilde{T}\cap
( \mathbb{S}_0\setminus  \mathbb{S}_{0, \gamma}) \}\ge 1 \right) 
\end{eqnarray}
by an argument similar to that in \eqref{eq:sFWER}.
Following the same arguments as in the proof of Theorem
\ref{thm:Bonferroni} part (2), the first probability in
\eqref{eq:sFWER2} is bounded by $\alpha$ as $L \to \infty$.
The second probability goes to 0 by Lemma \ref{lemma:transitionE}.
\end{enumerate}
\end{proof}

\subsection{Proof of Theorem \ref{thm:FDRcontrol} (Control of FDR)} 
\label{App:FDR}
\begin{lemma}
\label{lemma:unique-max}
Recall from Section \ref{sec:model} and \ref{sec:proc} that $\mu_\gamma(t) = a_jh_\gamma(t-\tau_{j, \gamma})$ with finite support
$S_{j, \gamma}$ and $h_\gamma(t)$ has a unique local maximum at $
t=\tau_{j, \gamma}$ that is an interior point of $S_j \subset S_{j,
\gamma}$ and has no other critical points.
Let $0 < \eps < 1$ and $I_{j, \eps} = [\tau_{j, \gamma} - \eps , \tau_{j, \gamma} + \eps
] \subset S_{j} \subset S_{j, \gamma}$.
Then, as $a_j \to \infty$ and $\eps \to 0$ such that $a_j \dot{h}_\gamma(-\eps) \to \infty$ and $a_j \dot{h}_\gamma(\eps) \to -\infty$,
\begin{enumerate}
\item $\P\left(\# \{t \in \tilde{T}\cap I_{j, \eps}\} \ge
1\right) \to 1$ 
\item $\P\left(\# \{t \in \tilde{T}\cap I_{j, \eps}:x_\gamma(t)>u \}\ge
1\right) \to 1$  for every fixed threshold $u$.
\end{enumerate}
\end{lemma}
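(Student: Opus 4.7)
\bigskip

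\noindent
\textbf{Proof plan.}
The plan is to argue that in the high-SNR regime the deterministic signal $\mu_\gamma$ dominates the Gaussian noise $z_\gamma$ on the small interval $I_{j,\eps}$, so that $x_\gamma=\mu_\gamma+z_\gamma$ inherits the unimodal shape of $\mu_\gamma$ there with probability tending to one. The two parts will follow from (i) controlling the sign of $\dot{x}_\gamma$ at the two endpoints of $I_{j,\eps}$, and (ii) bounding below the value of $x_\gamma$ at the interior point $\tau_{j,\gamma}$.

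For part (1), let $t_L=\tau_{j,\gamma}-\eps$ and $t_R=\tau_{j,\gamma}+\eps$. Since $\dot{x}_\gamma(t)=a_j\dot{h}_\gamma(t-\tau_{j,\gamma})+\dot{z}_\gamma(t)$ and $\dot{z}_\gamma(t)\sim N(0,\lambda_{2,\gamma})$ with variance independent of $a_j$ and $\eps$, the hypotheses $a_j\dot{h}_\gamma(-\eps)\to+\infty$ and $a_j\dot{h}_\gamma(\eps)\to-\infty$ imply
$$
\P\bigl(\dot{x}_\gamma(t_L)>0\bigr)\to 1 \qquad\text{and}\qquad \P\bigl(\dot{x}_\gamma(t_R)<0\bigr)\to 1.
$$
On the intersection of these two events (whose probability tends to one), continuity of $x_\gamma$ on the compact interval $I_{j,\eps}$ forces its maximum to be attained at some interior point $\tilde{t}\in(t_L,t_R)$, because the function is strictly increasing at $t_L$ and strictly decreasing at $t_R$. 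At such an interior maximizer we have $\dot{x}_\gamma(\tilde t)=0$ and $\ddot x_\gamma(\tilde t)\le 0$; the non-degeneracy of the Gaussian vector $(\dot x_\gamma(\tilde t),\ddot x_\gamma(\tilde t))$ at a critical point gives $\ddot x_\gamma(\tilde t)<0$ almost surely, so $\tilde t\in\tilde T\cap I_{j,\eps}$.

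For part (2), on the same event the local maximum $\tilde t\in I_{j,\eps}$ satisfies
$$
x_\gamma(\tilde t)\;=\;\max_{s\in I_{j,\eps}}x_\gamma(s)\;\ge\;x_\gamma(\tau_{j,\gamma})\;=\;a_j h_\gamma(0)+z_\gamma(\tau_{j,\gamma}),
$$
and since $z_\gamma(\tau_{j,\gamma})\sim N(0,\sigma_\gamma^2)$ while $a_j h_\gamma(0)\to\infty$,
$$
\P\bigl(x_\gamma(\tau_{j,\gamma})>u\bigr)\;=\;1-\Phi\!\left(\frac{u-a_j h_\gamma(0)}{\sigma_\gamma}\right)\;\to\;1
$$
for any fixed $u$. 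Intersecting with the event from part (1) gives the claim.

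The main technical points are two-fold: one must verify that the intermediate-value argument delivers a genuine local maximum (and not merely a critical point), which is handled by the argmax-on-a-compact-interval formulation above, and one must argue that the strict inequality $\ddot x_\gamma(\tilde t)<0$ holds so that $\tilde t$ belongs to the candidate set $\tilde T$ defined in \eqref{eq:T}; both are routine in the Gaussian setting. Everything else reduces to the standard Gaussian tail bound applied to univariate marginals of $\dot z_\gamma$ and $z_\gamma$.
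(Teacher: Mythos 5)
Your proposal is correct, and part (1) rests on the same core event as the paper's proof: $\dot{x}_\gamma(\tau_{j,\gamma}-\eps)>0$ and $\dot{x}_\gamma(\tau_{j,\gamma}+\eps)<0$, which forces the maximum of $x_\gamma$ over $I_{j,\eps}$ into the interior and hence produces a point of $\tilde{T}$ (and you are in fact more explicit than the paper about why this interior maximizer has $\ddot{x}_\gamma<0$ almost surely). The differences are in the estimates. The paper does not stop at pointwise Gaussian tails: it upper-bounds the failure probabilities by supremum-exceedance probabilities of $\dot z_\gamma$ over the subintervals and controls those via the Kac--Rice upcrossing count, yielding the explicit bound $G(u,\eps,\lambda_{2,\gamma})$ in \eqref{eq:sup_bound}; similarly, for part (2) it lower-bounds the event by $\{\inf_{I_{j,\eps}}x_\gamma>u\}$ and again invokes the supremum bound, producing the explicit quantity $C_\eps(u)$ in \eqref{eq:exceedU}. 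Your part (2) is cleaner: on the part-(1) event the interval maximum is at least $x_\gamma(\tau_{j,\gamma})=a_jh_\gamma(0)+z_\gamma(\tau_{j,\gamma})$, so a single univariate Gaussian tail suffices, and you correctly intersect the two high-probability events (a step the paper glosses over, since $\inf_{I_{j,\eps}}x_\gamma>u$ alone does not guarantee a local maximum exists in $I_{j,\eps}$). What the paper's heavier route buys is the reusable, non-asymptotic rates: the bounds $G(\cdot,\eps,\lambda_{2,\gamma})$ and $C_\eps(u)$ are summed over the $J_L\sim A_1L$ peaks in Lemma \ref{lemma:results2} and reappear in the power bound $1-C_\eps(u)$ in Theorem \ref{thm:power}, where the explicit dependence on $a_j$, $\eps$ and $L$ is what makes the condition $L\phi(Ka_j^\delta)\to0$ sufficient. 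Your argument proves the lemma as stated but would need to be upgraded to those quantitative bounds before it could replace the paper's version in the downstream proofs.
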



\begin{proof}
\hfill\par\noindent
\begin{enumerate}
\item  The probability that $x_\gamma(t)$ has some local maxima in $I_\eps$ is
\begin{eqnarray}
\label{eq:I_bound}
1 & \ge &  \P\left(\# \{t \in \tilde{T}\cap I_{j, \eps}\} \ge
1\right) \ge 
\P\left\{\dot{x}_\gamma(\tau_{j, \gamma} - \eps) > 0 ~\text{and}~
\dot{x}_\gamma(\tau_{j, \gamma} + \eps) < 0 \right\} \nonumber \\
& = & \P\left\{\dot{z}_\gamma(\tau_{j, \gamma} - \eps) > -\dot{\mu}_\gamma(\tau_{j, \gamma} - \eps) ~\text{and}~
\dot{z}_\gamma(\tau_{j, \gamma} + \eps) < -\dot{\mu}_\gamma(\tau_{j,
\gamma} + \eps) \right\} \nonumber \\
& \ge & 1-\left[ \P\left(\dot{z}_\gamma(\tau_{j, \gamma} - \eps) \le
-\dot{\mu}_\gamma(\tau_{j, \gamma} - \eps)\right) +
\P\left(\dot{z}_\gamma(\tau_{j, \gamma} + \eps) \ge  -\dot{\mu}_\gamma(\tau_{j,
\gamma} + \eps)\right) \right] \nonumber \\
& \ge &  1-\left[ \P\left(\inf_{[\tau_{j, \gamma} - \eps,
\tau_{j,\gamma}]} \dot{z}_\gamma(t) \le -\dot{\mu}_\gamma(\tau_{j,
\gamma} - \eps)\right)  + \P\left(\sup_{[\tau_{j, \gamma}
,\tau_{j,\gamma}+ \eps]}\dot{z}_\gamma(t) \ge  -\dot{\mu}_\gamma(
\tau_{j, \gamma} + \eps )\right) \right] \nonumber \\
& = &  1-\left[ \P\left(\sup_{[\tau_{j, \gamma} - \eps,
\tau_{j,\gamma}]} -\dot{z}_\gamma(t) \ge a_j\dot{h}_\gamma(- \eps)\right)  + \P\left(\sup_{[\tau_{j, \gamma}
,\tau_{j,\gamma}+ \eps]}\dot{z}_\gamma(t) \ge  -a_j\dot{h}_\gamma(\eps )\right) \right].
\end{eqnarray}

The probability that the supremum of any differentiable random process, $f(t)$, is above $u$ is bounded  by \citep{Adler:2007}.
$$
\P\left(\sup_{t\in [0,T]}f(t) \ge u\right) \le   \P\left(f(0) \ge u\right) + \E[N_u] \triangleq  G(u,|T|,\sigma^2),
$$
where $N_u$ is the number of up-crossings by $f$ of the level $u$ in the interval $[0, T]$ and $\sigma^2$ is the variance of the process. For the stationary Gaussian process $\dot{z}_\gamma(t)$ with mean 0 and variance $\lambda_{2, \gamma}$, applying Kac-Rice formula gives:
$$
\E[N_u] = |T|\,p(\dot{z}_\gamma = u)\int_0^\infty xp(\ddot{z}_\gamma = x)\,dx
= \frac{|T|}{2\pi} \exp{\left(-\frac{u^2}{2\lambda_{2, \gamma}}\right)} \sqrt{
\frac{\lambda_{4, \gamma} }{ \lambda_{2, \gamma}} }.
$$
In particular,
\begin{eqnarray}
\label{eq:sup_bound}
 \P\left(\sup_{[\tau_{j, \gamma} - \eps, \tau_{j, \gamma}] } -\dot{z}_\gamma(t) \ge u \right) &
\le & 
1-\Phi\left(\frac{u}{\sqrt{\lambda_{2, \gamma}}}\right) + \frac{\eps}{2\pi}
\exp{\left(-\frac{u^2}{2\lambda_{2, \gamma}}\right)} \sqrt{
\frac{\lambda_{4, \gamma} }{ \lambda_{2, \gamma}} } \nonumber \\ 
& \le & \frac{\sqrt{\lambda_{2, \gamma}}}{u}
\phi\left(\frac{u}{\sqrt{\lambda_{2, \gamma}}}  \right) +  \eps\sqrt{
\frac{\lambda_{4, \gamma} }{2\pi \lambda_{2,
\gamma}}}\phi\left(\frac{u}{\sqrt{\lambda_{2, \gamma}}} \right)
\nonumber \\
& = & G(u,\eps, \lambda_{2,\gamma} ).
\end{eqnarray}

The bracketed expression in \eqref{eq:I_bound} can be bounded by
\begin{eqnarray}
\label{eq:upper-bound}
0 & \le &  \P\left(\sup_{[\tau_{j, \gamma} - \eps,
\tau_{j,\gamma}]} -\dot{z}_\gamma(t) \ge a_j\dot{h}_\gamma(- \eps)\right)  + \P\left(\sup_{[\tau_{j, \gamma}
,\tau_{j,\gamma}+ \eps]}\dot{z}_\gamma(t) \ge  -a_j\dot{h}_\gamma(
\eps )\right)  \nonumber \\
& \le & G\left(a_j\dot{h}_\gamma(- \eps), \eps, \lambda_{2, \gamma} \right) + G\left(-a_j\dot{h}_\gamma(\eps), \eps, \lambda_{2, \gamma} \right).
\end{eqnarray}
Both terms in \eqref{eq:upper-bound} go to 0. Thus back to \eqref{eq:I_bound}, $\P(\# \{t \in \tilde{T}\cap I_{j, \eps}\} \ge 1) \to 1$.

\item
The probability that at least one local maxima in $I_{j, \eps}$
exceeds the fixed threshold $u$ satisfies
$$
\begin{aligned}
& \P\left(\# \{t \in \tilde{T}\cap I_{j,
\eps}:x_\gamma(t)>u \}\ge 
1\right) \ge \P\left( \inf_{I_{j, \eps}}
x_\gamma(t) >u \right) =  \P\left( \inf_{I_{j, \eps}}
\left[\mu_\gamma(t) + z_\gamma(t)\right] >u \right)  \\ &\ge  \P\left(
\inf_{I_{j, \eps}}\mu_\gamma(t) + \inf_{I_{j, \eps}}z_\gamma(t) > u
\right) = \P\left( \inf_{I_{j, \eps}}z_\gamma(t) > u -
a_j\inf_{I_{j, \eps}}h_\gamma(t) )\right) \\ &=  \P\left( \sup_{I_{j,
\eps}}-z_\gamma(t) < a_j\inf_{I_{j, \eps}}h_\gamma(t) - u \right) = 
1- \P\left( \sup_{I_{j,
\eps}}-z_\gamma(t) \ge  a_j\inf_{I_{j, \eps}}h_\gamma(t) - u \right). 
\end{aligned}
$$
The infimum of $h_\gamma(t)$ in the range $I_{j, \eps}$ occurs at 
one of the edges. Without loss of generality we assume that
$\inf_{I_{j, \eps}}h_\gamma(t) = h_\gamma(\eps)$. Using the upper
bound given in equation \eqref{eq:sup_bound}, applied to the interval $I_{j, \eps}$ and $z_\gamma(t)$ rather than $\dot{z}_\gamma(t)$, yields
\begin{equation}
\label{eq:exceedU}
\P\left(\# \{t \in \tilde{T}\cap I_{j,\eps}:x_\gamma(t)>u \}\ge 
1\right) \ge 1-C_{\eps}(u),
\end{equation}
where
$$
C_\eps(u) = \frac{\sigma_\gamma}{a_j h_\gamma(\eps) - u }\phi\left(\frac{a_j h_\gamma(\eps) - u}{\sigma_\gamma}\right) + 2\eps\sqrt{\frac{\lambda_{2,\gamma}}{2\pi\sigma^2_{\gamma}}}\phi\left(\frac{a_j h_\gamma(\eps) - u}{\sigma_\gamma}\right).
$$
For any constant $u$ the convergence rate of $a_jh_\gamma(\eps)-u$ is the same as that of $a_jh_\gamma(\eps)$. Therefore, $C_\eps(u) \to 0$ as $a_j \to \infty$ for any fixed threshold $u$.
\end{enumerate}
\end{proof}

\begin{lemma}
\label{lemma:results2}
Assume the model of Section \ref{sec:model} and the procedure of Section \ref{sec:proc}. Let $\tilde{m}_{1, \gamma} = \#\{ \tilde{T}\cap \mathbb{S}_{1, \gamma} \}$ be the number of local maxima in the set $\mathbb{S}_{1, \gamma}$ and recall that $W_\gamma(u) = \#\{t\in \tilde{T}\cap \mathbb{S}_{1, \gamma}:x_\gamma( t)> u \}$ is the number of local maxima in $\mathbb{S}_{1, \gamma}$ above threshold $u$. Under the assumptions of Theorem \ref{thm:FDRcontrol},
\begin{enumerate}
\item
The probability to get exactly $J_L$ local maxima in the 
set $\mathbb{S}_{1, \gamma}$, $\P\left( \tilde{m}_{1, \gamma} = J_L\right)  = \P\left(\#\{ \tilde{T}\cap \mathbb{S}_{1, \gamma} \} = J_L\right)$ tends to 1.

\item The probability to get exactly $J_L$ local maxima in the 
set $\mathbb{S}_{1, \gamma}$ that exceed the threshold $u$,
$$
\P\left( W_\gamma(u) = J_L\right) = \P\left(\# \{t\in \tilde{T}\cap \mathbb{S}_{1, \gamma}:x_\gamma(t)> u \}=J_L \right) \to 1
$$
tends to 1, for any fixed threshold $u$.
\item $\tilde{m}_{1, \gamma}/L \to A_1$ in probability.

\item $W_\gamma(u)/\tilde{m}_{1, \gamma} \to 1$ in probability.

\end{enumerate}
\end{lemma}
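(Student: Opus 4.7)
The plan is to reduce parts (3) and (4) to parts (1) and (2), and to prove (1) and (2) by a local analysis around each peak mode $\tau_{j,\gamma}$, followed by a union bound over the $J_L$ peaks.

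For part (1), fix $j$ and decompose the expanded signal region as
\[
S_{j,\gamma} \;=\; I_{j,\eps} \;\cup\; (S_j \setminus I_{j,\eps}) \;\cup\; (S_{j,\gamma} \setminus S_j),
\]
where $\eps = \eps(a_j) \to 0$ is chosen so that Lemma \ref{lemma:unique-max} applies (for example $\eps = a_j^{-1/2}$, which makes $a_j|\dot h_\gamma(\pm\eps)|$ diverge like $a_j^{1/2}$ by a Taylor expansion of $\dot h_\gamma$ around its zero at $\tau_{j,\gamma}$). Lemma \ref{lemma:unique-max}(1) delivers at least one local maximum in $I_{j,\eps}$; Lemma \ref{lemma:transitionE}(1) gives that the expected number of local maxima in the transition region $S_{j,\gamma} \setminus S_j$ vanishes in the relevant limit; and on the annular piece $S_j \setminus I_{j,\eps}$ the derivative $|\dot h_\gamma|$ is bounded below by a quantity comparable to $|\dot h_\gamma(\pm\eps)|$, so a Kac--Rice computation mirroring the one in the proof of Lemma \ref{lemma:transitionE} bounds the expected number of extra local maxima by a Gaussian tail of the form $|S_j|\,\phi\!\bigl(a_j \dot h_\gamma(\pm\eps)/\sqrt{\lambda_{2,\gamma}}\bigr)$. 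Combining the three pieces, the failure probability of $\{\#\{\tilde T \cap S_{j,\gamma}\} = 1\}$ is at most $\phi(K a_j^{1/2})$ for some constant $K > 0$.

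For part (2), I would combine Lemma \ref{lemma:unique-max}(2), which provides at least one local maximum in $I_{j,\eps}$ with height exceeding the fixed threshold $u$ (with per-peak failure probability $\phi(K a_j)$), with part (1), which forbids further local maxima in $S_{j,\gamma}$. Their intersection gives exactly one local maximum per peak in $S_{j,\gamma}$, and its height exceeds $u$, so $W_\gamma(u)$ picks up contribution $1$ from each peak. Union-bounding the estimates for (1) and (2) over the $J_L$ peaks turns per-peak failure probabilities of order $\phi(Ka_j^\delta)$ into aggregate failure probabilities of order $J_L\,\phi(Ka_j^\delta) \le L\,\phi(Ka_j^\delta)$, which vanishes by the second hypothesis of Theorem \ref{thm:FDRcontrol}.

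Parts (3) and (4) are then immediate corollaries. On the event $\{\tilde m_{1,\gamma} = J_L\}$ from (1), which has probability tending to $1$, we have $\tilde m_{1,\gamma}/L = J_L/L \to A_1$ by the first hypothesis of Theorem \ref{thm:FDRcontrol}, giving (3). On the intersection of the events from (1) and (2), both $W_\gamma(u)$ and $\tilde m_{1,\gamma}$ are equal to $J_L$, so their ratio is identically $1$, giving (4).

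The main obstacle is the union bound. Both single-peak estimates (for the location and for the height of the local maximum) are of the form $\phi(Ka_j^\delta)$, and one needs them to survive aggregation across $J_L \asymp L$ peaks. The hypothesis $L\,\phi(Ka_j^\delta) \to 0$ for every $\delta > 0$ in Theorem \ref{thm:FDRcontrol} is calibrated precisely for this purpose, so the exponential-rate assumption linking the search space $L$ to the SNR $a_j$ is exactly what powers the aggregation step and, therefore, the proof.
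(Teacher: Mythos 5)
Your proposal is correct and follows essentially the same route as the paper: parts (1) and (2) are proved by a per-peak application of Lemma \ref{lemma:unique-max} followed by a union bound over the $J_L\asymp L$ peaks, powered exactly by the hypothesis $L\phi(Ka_j^{\delta})\to 0$, and parts (3) and (4) are read off on the high-probability event $\{\tilde{m}_{1,\gamma}=W_\gamma(u)=J_L\}$ together with $J_L/L\to A_1$. The one point where you go beyond the paper is in explicitly ruling out extra local maxima in $S_{j,\gamma}\setminus I_{j,\eps}$ (via Lemma \ref{lemma:transitionE} on the transition region and a Kac--Rice bound on the annulus $S_j\setminus I_{j,\eps}$), a step the paper's displayed chain of inequalities passes over silently when it moves from ``exactly one local maximum in $I_{j,\eps}$'' to ``exactly one in $S_{j,\gamma}$''; your version is the more complete one.
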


\begin{proof}
\hfill\par\noindent
\begin{enumerate}
\item
The probability to get exactly $J_L$ local maxima in the set
$\mathbb{S}_{1, \gamma}$ is
\begin{equation}
\begin{aligned}
\label{eq:Jmaxima}
1 &\ge \P \left( \# \{t \in \tilde{T}\cap \mathbb{S}_{1, \gamma}   \}=J_L
\right)
= \P \left( \# \{t \in \tilde{T}\cap (\cup_{j=1}^{J_L} S_{j, \gamma} )  \}=J_L
\right) \\
&\ge  \P \left(\cap_{j=1}^{J_L} \{ \# \{t \in \tilde{T}\cap S_{j, \gamma}   \}=1\}
\right) \\
&\ge \P \left(\cap_{j=1}^{J_L}  \# \{t \in \tilde{T}\cap I_{j, \eps}   \}=1
\right) , \qquad I_{j, \eps} \subset S_{j, \gamma} \\
&\ge 1- \sum_{j=1}^{J_L} \left[ 1- \P \left(\# \{t \in
\tilde{T}\cap I_{j, \eps}   \}=1
\right) \right].\\
\end{aligned}
\end{equation}
Combining  equations \eqref{eq:I_bound} and \eqref{eq:upper-bound},  the above expression is bounded below by
\begin{eqnarray}
& & 1-\sum_{j=1}^{J_L} \left(G\left(a_j\dot{h}_\gamma(-\eps), \eps,\lambda_{2, \gamma} \right) + G\left(-a_j\dot{h}_\gamma(\eps), \eps,\lambda_{2, \gamma}  \right)\right) \nonumber \\
& \ge & 1- J_L \left(G\left(a^*\dot{h}_\gamma(-\eps) , \eps,\lambda_{2, \gamma} \right) + G\left(-a^*\dot{h}_\gamma(\eps), \eps,\lambda_{2, \gamma}  \right)\right) \nonumber \\
& \ge & 1 -
\frac{J_L}{L} L\left[ \frac{\sqrt{\lambda_{2, \gamma}}}{a^*\dot{h}_\gamma(-\eps)}\phi\left( \frac{a^*\dot{h}_\gamma(-\eps)}{\sqrt{\lambda_{2, \gamma}}}\right) +  \frac{\sqrt{\lambda_{2, \gamma}}}{-a^*\dot{h}_\gamma(\eps)}\phi\left( \frac{a^*\dot{h}_\gamma(\eps)}{\sqrt{\lambda_{2, \gamma}}}\right) \right] \nonumber \\
& &  - ~L\sqrt{\frac{\lambda_{4, \gamma}}{2\pi\lambda_{2, \gamma}}}\left[\phi\left( \frac{a^*\dot{h}_\gamma(\eps)}{\sqrt{\lambda_{2, \gamma}}}\right)+ \phi\left( \frac{a^*\dot{h}_\gamma(-\eps)}{\sqrt{\lambda_{2, \gamma}}}\right) \right], \label{eq:Jmax-bound}
\end{eqnarray}
where $a^* = \arg\max \{G(a_j\dot{h}_\gamma(-\eps), \eps,\lambda_{2, \gamma} ) + G(-a_j\dot{h}_\gamma(\eps), \eps,\lambda_{2, \gamma} )\}$.
 
The fastest rate of convergence toward zero, for which the requirement is still valid, is achieved when $\dot{h}_\gamma(\pm\eps)$ converges to zero as $a^{-(1-\delta)}, ~ 0 <\delta <1$. The lemma's conditions guarantee that expressions in \eqref{eq:Jmax-bound} go to 0 in the least favourable case.
\item
Following the same computations as in equation \eqref{eq:Jmaxima} and using the bound \eqref{eq:exceedU}, the probability to have $J_L$ truly  rejected hypotheses is bounded by
\begin{eqnarray}
1 & \ge &  \P\left(\# \{t\in \tilde{T}\cap
\mathbb{S}_{1, \gamma}:x_\gamma(t)> u \}=J_L \right) \nonumber  \\
& \ge & 1- \left(\sqrt{\sigma_\gamma}\frac{J_L}{a_jh_\gamma(\eps)
- u}\phi\left(\frac{a_j h_\gamma(\eps) - u}{\sigma_\gamma}\right) +  L\sqrt{\frac{\lambda_{2,
\gamma}}{2\pi\sigma^2_\gamma}}\phi\left(\frac{a_j h_\gamma(\eps) - u}{\sigma_\gamma}\right)\right). \nonumber
\end{eqnarray}
Following the same arguments as those in the last paragraph in the first part of this lemma, the expression in parentheses goes to 0 for any fixed threshold $u$.

\item
Since
$$
\begin{aligned}
\frac{\tilde{m}_{1, \gamma}}{L} =\frac{\tilde{m}_{1, \gamma}}{J_L}
\frac{J_L}{L},
\end{aligned}
$$
we need to show that $\tilde{m}_1/J_L \to 1$ in probability. For any fixed $\eps>0$
$$
\begin{aligned}
0 \le \P\left(\left|\frac{\tilde{m}_{1, \gamma}}{J_L}-1 \right| \ge \eps \right) 
= \P\left( \left|\tilde{m}_{1, \gamma} - J_L\right| \ge J_L\eps\right) \le
\P\left( \tilde{m}_{1, \gamma} \neq J_L\right) = 1-\P\left(\tilde{m}_{1, \gamma} = J_L \right)
\end{aligned}
$$
since $\tilde{m}_{1, \gamma}$ and $J_L$ are integers. The probability to get exactly $J_L$ local maxima goes to 1 by the part (1) of this lemma.

\item
By part (2) of this lemma $\P(W_\gamma(u) = J_L)\to 1$ in probability, therefore, using the same arguments as in part (3) of this lemma, we get $W_\gamma(u)/J_L \to 1$. Now,
$$
\frac{W_\gamma(u)}{\tilde{m}_{1, \gamma}} =
\frac{W_\gamma(u)}{J_L}\frac{J_L}{\tilde{m}_{1, \gamma}}.
$$
Both fractions go to 1 by the previous parts of this lemma. 
\end{enumerate}
\end{proof}

\begin{lemma}
\label{lemma:mFDR}
For any fixed threshold $u$, and any positive integer $J$
$$
\FDR(u) \le \P(W(u) \le J-1) + \frac{\E[V(u)]}{\E[V(u)]+J },
$$
where  $W(u) = \# \{t\in \tilde{T} \cap \mathbb{S}_1: x_\gamma(t)> u
\}$. 
\end{lemma}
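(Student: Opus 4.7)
The plan is to decompose $\FDR(u) = \E[V(u)/(R(u)\vee 1)]$ according to whether $W(u)$ is below or at least $J$, and handle each piece by elementary means. Writing
\begin{equation*}
\FDR(u) = \E\left[\frac{V(u)}{R(u)\vee 1}\,\mathbf{1}\{W(u) \le J-1\}\right] + \E\left[\frac{V(u)}{R(u)\vee 1}\,\mathbf{1}\{W(u) \ge J\}\right],
\end{equation*}
the first summand is trivially bounded by $\P(W(u) \le J-1)$ because the ratio $V(u)/(R(u)\vee 1)$ never exceeds $1$.

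For the second summand, on the event $\{W(u) \ge J\}$ we have $R(u) \ge W(u) \ge J \ge 1$, so $R(u)\vee 1 = R(u) = V(u) + W(u) \ge V(u) + J$. Therefore
\begin{equation*}
\frac{V(u)}{R(u)\vee 1}\,\mathbf{1}\{W(u) \ge J\} \le \frac{V(u)}{V(u)+J}\,\mathbf{1}\{W(u) \ge J\} \le \frac{V(u)}{V(u)+J},
\end{equation*}
where the last inequality uses non-negativity of the integrand to drop the indicator.

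Finally, I would invoke Jensen's inequality applied to the function $f(x) = x/(x+J) = 1 - J/(x+J)$, which is concave and non-decreasing on $[0,\infty)$. Since $V(u)$ has finite mean (indeed $V(u) \le \tilde{m}$, whose expectation is finite by Lemma \ref{lemma:m_u}), Jensen yields
\begin{equation*}
\E\left[\frac{V(u)}{V(u)+J}\right] \le \frac{\E[V(u)]}{\E[V(u)]+J}.
\end{equation*}
Combining the two bounds delivers the stated inequality. The argument involves no real obstacle; the only minor point to verify is concavity of $f$ (straightforward since $f''(x) = -2J/(x+J)^3 < 0$) and finiteness of $\E[V(u)]$, which is immediate from the Kac--Rice formula invoked earlier.
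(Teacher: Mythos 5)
Your proof is correct and follows essentially the same route as the paper's: split on the event $\{W(u)\le J-1\}$ versus $\{W(u)\ge J\}$, bound the ratio by $1$ on the first event and by $V(u)/(V(u)+J)$ on the second, then apply Jensen's inequality to the concave function $x\mapsto x/(x+J)$. The only cosmetic difference is that you work with indicator functions while the paper writes the expectation as a double sum over the values of $V$ and $W$; the substance is identical.
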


\begin{proof}
For simplicity, in this proof we omit the argument $u$, since it is fixed.
$$
\begin{aligned}
\FDR &= \E \left(\frac{V} {V+W} \right)  = \sum_{v=0}^\infty 
\sum_{w=0}^\infty \left(\frac{v}{v+w} \right)\P\left(V=v, W=w \right)
\\
&=  \sum_{v=0}^\infty 
\sum_{w=0}^{J-1} \left(\frac{v}{v+w} \right)\P\left(V=v, W=w \right) +  \sum_{v=0}^\infty 
\sum_{w=J}^\infty \left(\frac{v}{v+w} \right)\P\left(V=v, W=w \right)
\\
&\le  \sum_{w=0}^{J-1}\sum_{v=0}^\infty \P\left(V=v, W=w \right) +  \sum_{v=0}^\infty 
\sum_{w=J}^\infty \left(\frac{v}{v+J} \right)\P\left(V=v, W=w \right)
\\
&=  \sum_{w=0}^{J-1}\P\left( W=w \right) +  \sum_{v=0}^\infty \left(\frac{v}{v+J} \right) 
\P\left(V=v, W\ge J \right)\\
&\le  \P\left( W \le J-1 \right) +  \sum_{v=0}^\infty \left(\frac{v}{v+J} \right) \P\left(V=v \right)\\
&=  \P\left( W \le J-1 \right) +  \E\left(\frac{V}{V+J} \right)
\le\P\left( W \le J-1 \right) +  \frac {\E(V)}{\E(V)+J}.
\end{aligned}
$$
The last inequality holds by Jensen's inequality, since $V/(V+J)$ is a concave
function of $V$ for $V \ge 0$ and $J \ge 1$.
\end{proof}

\begin{proof}[{\bf Proof of Theorem \ref{thm:FDRcontrol}}]
Let $R(\tilde{u}_{\BH}) = \tilde{m}-k+1$ be the number of rejected
hypotheses using the threshold $\tilde{u}_{\BH}$. 
Let 
$$
\tilde{G}(u) = \frac{\# \{x_\gamma(t)>u, \dot{x}_\gamma(t)=0,
\ddot{x_\gamma}(t)<0 \}} {\#\{ \dot{x}_\gamma(t)=0,
\ddot{x}_\gamma(t)<0 \}}
$$
be the empirical marginal right cumulative distribution function of
$x_\gamma(t)|t \in \tilde{T}$.
Then
$$
\tilde{G}(\tilde{u}_{\BH}) = \frac{\#
\{x_\gamma(t)>\tilde{u}_{\BH}, \dot{x}_\gamma(t)=0, 
\ddot{x_\gamma}(t)<0 \}} {\#\{ \dot{x}_\gamma(t)=0,
\ddot{x}_\gamma(t)<0 \}} = \frac {R(\tilde{u}_{\BH})}
{\tilde{m}}.
$$
Multiplying both sides of this equation by $\alpha$ we get
$$
\alpha \tilde{G}(\tilde{u}_{\BH})  = \alpha  \frac {R(\tilde{u}_{\BH})}
{\tilde{m}} =  F_\gamma(\tilde{u}_{\BH}).
$$
This result implies that the BH threshold $\tilde{u}_{\BH}$ is the largest $u$ that solves the equation
\begin{equation}
\label{eq:FDRthreshold}
\alpha \tilde{G}(u)  = F_\gamma(u).
\end{equation}

The strategy is to solve equation \eqref{eq:FDRthreshold} in the limit when $L, a_j \to \infty$. We first find the limit of $\tilde{G}(u)$. Letting  $W_\gamma(u) = \#\{t \in \tilde{T} \cap\mathbb{S}_{1,
\gamma}: x_\gamma(t)>u \}$ and $V_\gamma(u) = \#\{t \in \tilde{T} \cap\mathbb{S}_{0,
\gamma}: x_\gamma(t)>u \}$ ,
\begin{equation}
\label{eq:ecdf}
\begin{aligned}
\tilde{G}(u) &= \frac{\# \{x_\gamma(t)>u, \dot{x}_\gamma(t)=0,
\ddot{x}_\gamma(t)<0 \}} {\#\{ \dot{x}_\gamma(t)=0,
\ddot{x}_\gamma(t)<0 \}} = \frac {V_\gamma(u)+ W_\gamma(u)}
{\tilde{m}_0+\tilde{m}_1} \\
&=\frac{V_\gamma(u)} {\tilde{m}_{0, \gamma}}
\frac{\tilde{m}_{0, \gamma}}{\tilde{m}_{0, \gamma} +\tilde{m}_{1,
\gamma}} +\frac{W_\gamma(u)} {\tilde{m}_{1, \gamma}}
\frac{\tilde{m}_{1, \gamma}}{\tilde{m}_{0, \gamma} +\tilde{m}_{1, \gamma}} \\
\end{aligned}
\end{equation}
Recall that $x_\gamma(t)$ is ergodic, thus by the weak law of large
numbers and Lemma \ref{lemma:results2} part (3), 
\begin{equation}
\label{eq:G1}
\frac{\tilde{m}_{0, \gamma}}{\tilde{m}_{0, \gamma} +\tilde{m}_{1,
\gamma}}  = \frac{\tilde{m}_{0, \gamma}/L}{\tilde{m}_{0, \gamma}/L
+\tilde{m}_{1, \gamma}/L} \to
\frac{\E[\tilde{m}_{0, \gamma}; [0, 1]]}{\E[\tilde{m}_{0,\gamma}; [0, 1]] + A_1}, ~\text{as}~ L\to \infty.
\end{equation}
Following the  arguments  in Lemma \ref{lemma:m_u}, for $L\to \infty$,
\begin{equation}
\label{eq:G2}
\frac{V_\gamma(u)} {\tilde{m}_{0, \gamma}} \to \frac{\E[V_\gamma(u)]}
{\E[\tilde{m}_{0, \gamma}]} = F_\gamma(u).
\end{equation}
Finally, by Lemma \ref{lemma:results2}, part (4)
\begin{equation}
\label{eq:G3}
\frac{W_\gamma(u)} {\tilde{m}_{1, \gamma}} \to 1
\end{equation}
as $L \to \infty$ and $a_j \to \infty$ such that $L\phi(Ka_j^\delta)
\to 0$.

Combining equations \eqref{eq:G1}, \eqref{eq:G2} and \eqref{eq:G3}
with Lemma \ref{lemma:results2} part (4) in \eqref{eq:ecdf}, we obtain
$$
 \tilde{G}(u) \to  F_\gamma(u)\frac{\E[\tilde{m}_{0, \gamma}; [0,
1]]}{\E[\tilde{m}_{0, \gamma}; [0, 1]] +
A_1} + \frac{A_1}{\E[\tilde{m}_{0, \gamma}; [0, 1]] + A_1}.
$$
Now replacing $\tilde{G}(u)$ by its limit in \eqref{eq:FDRthreshold}, we obtain
$$
\alpha \left( F_\gamma(u)\frac{\E[\tilde{m}_{0, \gamma}; [0,
1]]}{\E[\tilde{m}_{0, \gamma}; [0, 1]] +
A_1} + \frac{A_1}{\E[\tilde{m}_{0, \gamma}; [0, 1]] + A_1}\right) = F_\gamma(u),
$$
leading to the solution
\begin{equation}
\label{eq:eq}
F_\gamma(u) = \frac {\alpha A_1}{A_1 + \E[\tilde{m}_{0, \gamma}; [0, 1]](1-\alpha)}.
\end{equation}
Note that $F_\gamma(u)$ is convex for $u>0$, therefore exists a unique solution to  equation \eqref{eq:eq}. Let $u^*_{\BH}$ be the solution of equation \eqref{eq:eq}. 

The FDR at the threshold $u^*_{\BH}$ is bounded by Lemma \ref{lemma:mFDR} by
$$
\begin{aligned}
\FDR(u^*_{\BH}) &\le \P(W(u^*_{\BH})\le J_L-1) +  \frac{ \E\left[ V(u^*_{\BH})\right]}{ \E\left[
V(u^*_{\BH})\right] + J_L } \\
 &= \P(W(u^*_{\BH})\le J_L-1) +  \frac{
\E\left[ V_\gamma(u^*_{\BH})\right] +  \E\left[ \#\{t \in
\tilde{T}\cap\left(\mathbb{S}_0\setminus \mathbb{S}_{0, \gamma} 
\right): x_\gamma(t)>u^*_{\BH} \} \right]  }{ \E\left[
V_\gamma(u^*_{\BH})\right] +  \E\left[ \#\{t \in
\tilde{T}\cap\left(\mathbb{S}_0\setminus \mathbb{S}_{0, \gamma}
\right): x_\gamma(t)>u^*_{\BH} \} \right] + J_L }, 
\end{aligned}
$$
where we have split $V_\gamma(u^*_{\BH})$ into the region $\mathbb{S}_{0, \gamma}$ and the transition region $\mathbb{S}_0 \setminus \mathbb{S}_{0, \gamma}$.

When both $L$ and $a_j$ go to infinity such that $L\phi(Ka_j) \to 0$,
Lemma \ref{lemma:results2} gives
\begin{equation}
\label{eq:lim2}
0 \le \E\left[ \#\{t \in
\tilde{T}\cap\left(\mathbb{S}_0\setminus \mathbb{S}_{0, \gamma} 
\right): x_\gamma(t)>u^*_{\BH} \} \right] \le \E\left[ \#\{t \in
\tilde{T}\cap\left(\mathbb{S}_0\setminus \mathbb{S}_{0, \gamma} 
\right)\} \right] \to 0.
\end{equation}
By \eqref{eq:G2}, the remaining terms of the fraction can be written as
$$
\begin{aligned}
\frac{ \E\left[ V_\gamma(u^*_{\BH})\right]}{ \E\left[     
V_\gamma(u^*_{\BH})\right]+J_L} = 
\frac{F_\gamma(u^*_{\BH})\E[\tilde{m}_{0, \gamma}; [0, 1]]L}
{F_\gamma(u^*_{\BH})\E[\tilde{m}_{0, \gamma}; [0, 1]]L + J_L}   =
\frac{F_\gamma(u^*_{\BH})\E[\tilde{m}_{0, \gamma}; [0, 1]]}
{F_\gamma(u^*_{\BH})\E[\tilde{m}_{0, \gamma}; [0, 1]] + J_L/L}. 
\end{aligned}
$$
Since $u^*_{\BH}$ solves \eqref{eq:eq}, for $L \to \infty$ such that $ J_L/L \to A_1$ the above expression tends to
\begin{eqnarray}
\label{eq:lim1}
\frac{\alpha\E[\tilde{m}_{0, \gamma}; [0, 1]]}{\alpha\E[\tilde{m}_{0, \gamma}; [0, 1]] +
A_1 + (1-\alpha)\E[\tilde{m}_{0, \gamma}]; [0, 1]}  =
\alpha\frac{\E[\tilde{m}_{0, \gamma}; [0, 1]]}{\E[\tilde{m}_{0, \gamma}; [0, 1]] + A_1} \le \alpha.
\end{eqnarray}
Combining equations \eqref{eq:lim2}, \eqref{eq:lim1} and Lemma
\ref{lemma:results2} part (3), we obtain
$$
\lim \sup \FDR(u^*_{\BH}) \le \alpha
$$
as $L$ and $a_j$ go to infinity such that $L\phi(Ka_j^\delta) \to 0$.

Recall that the BH threshold $\tilde{u}_{\BH}$ solves the equation
\eqref{eq:FDRthreshold}, and $u^*_{\BH}$ solves the equation
\eqref{eq:eq}, where the empirical marginal distribution, $\tilde{G}(u)$, is replaced by its limit. Since $F_\gamma(t)$ is continuous 
\begin{equation}
\label{eq:FDRthreshold-limit}
F_\gamma(\tilde{u}_{\BH})\to F_\gamma(u^*_{\BH}),
\end{equation}
leading to
$$
\lim \sup \FDR(\tilde{u}_{\BH}) \le \alpha.
$$

\end{proof}

\subsection{Power}
\begin{proof}[{\bf Proof of Lemma \ref{lemma:fwer-fdr-threshold}}]
\hfill\par\noindent
\begin{enumerate}
\item Let us show first that $F_\gamma(a_jh_\gamma(0))/F_\gamma(u^*_{\Bon}) \to 0$. To show this, we bound $F_\gamma(x)$ below by
\begin{equation}
\label{eq:F-lbound}
F_\gamma(x) \ge \sqrt{\frac{2\pi\lambda_{2,\gamma}^2}{\lambda_{4,\gamma} \sigma^2_\gamma}}\phi\left(\frac{x}{\sigma_\gamma}\right)\frac{1}{2} = C_1 \phi\left(\frac{x}{\sigma_\gamma}\right), \qquad
C_1 = \sqrt{\frac{\pi\lambda_{2,\gamma}^2}{2\lambda_{4,\gamma} \sigma^2_\gamma}}.
\end{equation}
An upper bound is given by
$$
F_\gamma(x) \le 1-\Phi\left(\frac{x}{\sigma_\gamma}\right) + \sqrt{\frac{2\pi\lambda_{2,\gamma}^2}{\lambda_{4,\gamma} \sigma^2_\gamma}}\phi\left(\frac{x}{\sigma_\gamma}\right)
\le \phi\left(\frac{x}{\sigma_\gamma}\right)\left[ \sqrt{\frac{2\pi\lambda_{2,\gamma}^2}{\lambda_{4,\gamma} \sigma^2_\gamma}} + \frac{\sigma_\gamma}{x}\right],
$$
where the first inequality uses the fact that $\lambda_{4,\gamma}/\Delta \ge 1/\sigma^2_\gamma$, and the second follows from \eqref{eq:Gaus-ineq}. Assuming that $x$ is large enough, say $x>\sigma_\gamma$, the upper bound takes the form
\begin{equation}
\label{eq:F-ubound}
F_\gamma(x) \le C_2\phi\left(\frac{x}{\sigma_\gamma}\right), \qquad
C_2 = \sqrt{\frac{2\pi\lambda_{2,\gamma}^2}{\lambda_{4,\gamma} \sigma^2_\gamma}} + 1.
\end{equation}
Thus
\begin{equation}
\label{eq:Ffrac}
\frac{F_\gamma(a_jh_\gamma(0))}{F_\gamma(u_{\Bon})} \le 
 C_2\frac{1}{\alpha}\frac{\E[\tilde{m}]}{L} L\phi\left(\frac{a_jh_\gamma(0)}{\sigma_\gamma}\right),
\end{equation}
which goes to 0 by the conditions of the lemma and by the fact that $\E[\tilde{m}]/L = \E[\tilde{m}; [0, 1]]$ does not depend on $L$.

Define now $v = F_\gamma(u^*_{\Bon})$ and $ w=F_\gamma(a_jh_\gamma(0))$.
Inverting the bounds \eqref{eq:F-lbound} and \eqref{eq:F-ubound}, we get bounds
\begin{eqnarray*}
\label{eq:u-bon}
C_1 \phi\left(\frac{u^*_{\Bon}}{\sigma_\gamma}\right) \le &  v  & \le C_2 \phi\left(\frac{u^*_{\Bon}}{\sigma_\gamma}\right) \nonumber \\
\ln{\frac{C_1}{\sqrt{2\pi}}} - \frac{(u^*_{\Bon})^2}{2\sigma^2_\gamma} \le &  \ln{v}  & \le \ln{\frac{C_2}{\sqrt{2\pi}}} - \frac{(u^*_{\Bon})^2}{2\sigma^2_\gamma} \nonumber \\
2\sigma^2_\gamma \left(\ln{\frac{C_1}{\sqrt{2\pi}}} - \ln v\right) \le & (u^*_{\Bon})^2 & \le 2\sigma^2_\gamma \left(\ln{\frac{C_2}{\sqrt{2\pi}}} - \ln v\right) \nonumber
\end{eqnarray*}
for $u^*_{\Bon}$ and similarly for $a_jh_\gamma(0)$. Thus,
\begin{eqnarray}
\label{eq:u-over-a}
0 \le \frac{(u^*_{\Bon})^2}{(a_jh_\gamma(0))^2}  \le  \frac{\ln(C_2/\sqrt{2\pi}) - \ln(v)}{\ln(C_1/\sqrt{2\pi}) - \ln(w)}. \nonumber
\end{eqnarray}
Applying L'H\^{o}pital rule, the limit of the above fration when $v$ and $w$ go to zero is the same as the limit of $w/v$, which is 0 by \eqref{eq:Ffrac}.

\item In contrast to the Bonferroni threshold, the asymptotic FDR threshold depends only on the proportion of false null hypotheses. Because of the assumption
of asymptotically fixed proportion of true peaks, $J_L/L \to A_1$ with $0<A_1<1$, the limit of the FDR threshold \eqref{eq:eq} is fixed and finite. This leads to the result.
\end{enumerate}
\end{proof}

\begin{proof}[{\bf Proof of Theorem \ref{thm:power}}]
\hfill\par\noindent
Let $u$ be any fixed threshold and let $\eps$ and $I_{j, \eps}$ be defined as in Lemma \ref{lemma:unique-max}. Then, by the definition \eqref{eq:power} and the bound \eqref{eq:exceedU} in Lemma \ref{lemma:unique-max} ,
\begin{equation*}
\label{power-bound}
1 \ge \Power(u) \ge \P\left[ \#\{t \in I_{j, \eps} : x_\gamma(t)>u \} \ge 1\right] \ge 1-C_\eps(u)
\end{equation*}
Combining  the results of 
Lemmas \ref{lemma:fwer-fdr-threshold} and \ref{lemma:unique-max}, $C_\eps(u) \to 0$ and the power of Bonferroni and BH procedures when using the deterministic thresholds $u^*_{\Bon}$ and $u^*_{\BH}$ converges to 1 in probability.

It was proven earlier that the gap between the random and deterministic thresholds of the Bonferroni and BH procedures goes to 0 (see part (2) of Theorems \ref{thm:Bonferroni}, \ref{thm:strongFWER} and \eqref{eq:FDRthreshold-limit}). Therefore, the power when using random thresholds $\tilde{u}_{\Bon}$ and $\tilde{u}_{\BH}$ converges to 1 as well.
\end{proof}


\section*{Acknowledgments}

The authors thank Pablo Jadzinsky for providing the neural recordings data, and Igor Wigman, Felix Abramovich, and Yoav Benjamini for helpful discussions. This work was partially supported by NIH grant P01 CA134294-01, the Claudia Adams Barr Program in Cancer Research, and the William F. Milton Fund.

\bibliographystyle{plainnat}

\end{document}